\newcommand{\rr}{\mathbb{R}}
\newcommand{\rz}{\mathbb{R} \backslash \{0\}}
\newcommand{\rpm}{\mathbb{R}_{\pm}}
\newtheorem{theorem}{Theorem}[section]
\newtheorem{proposition}{Proposition}
\theoremstyle{definition} 
\newtheorem{definition}[theorem]{Definition}
\newtheorem{remark}{Remark}
\numberwithin{equation}{section}
\numberwithin{proposition}{section}
\numberwithin{example}{section}
\numberwithin{corollary}{section}
\numberwithin{remark}{section}
\title{Darboux families and the classification of real four-dimensional indecomposable coboundary Lie bialgebras}
\author{J. de Lucas and D. Wysocki}
\date{}
\begin{document}

\maketitle
 \begin{abstract} 
 This work introduces a new concept, the so-called {\it Darboux family}, which is employed to determine, to analyse geometrically, and to classify up to Lie algebra automorphisms, in a relatively easy manner, coboundary Lie bialgebras on real four-dimensional indecomposable Lie algebras.  The Darboux family notion can be considered as a generalisation of the Darboux polynomial for a vector field. The classification of $r$-matrices and solutions to classical Yang-Baxter equations for real four-dimensional indecomposable Lie algebras is also given in detail. Our methods can further be applied to general, even higher-dimensional, Lie algebras. As a byproduct, a method to obtain matrix representations of certain Lie algebras with a non-trivial center is developed.  
 \end{abstract}
\section{Introduction}

	Lie bialgebras \cite{CP94,Dr83, Dr87,KS04}  
	appeared as a tool to study integrable systems \cite{Fa84, Fa87}. A {\it Lie bialgebra} is a Lie algebra $\mathfrak{g}$ along with a Lie bracket on its dual space $\mathfrak{g}^*$ that amounts to a cocycle in a Chevalley-Eilenberg cohomology of $\mathfrak{g}$. Lie bialgebras have also applications to quantum gravity  \cite{BCH00, BHM13,MS11, MS03} and other research fields \cite{CP94,KS04,Op98}. In differential geometry, they occur in the problem of classifying Poisson Lie groups \cite{BBM12,CP94}.

Although much research has been devoted to the classification  of Lie bialgebras up to Lie algebra automorphisms, there are still many open problems. More specifically, Lie bialgebras on two- and three-dimensional Lie algebras $\mathfrak{g}$ have been completely classified \cite{FJ15, Go00}. Particular instances of Lie bialgebras with $\dim \mathfrak{g}>3$, e.g. for a semi-simple $\mathfrak{g}$, have also been studied \cite{ARH17,BHP99,BLT17,Ku94,LT17,Op98,Op00}. Employed techniques are rarely non-algebraic (cf. \cite{LW20}) and they are not very effective to analyse  Lie bialgebras when $\dim \mathfrak{g}>3$ \cite{ARH17, CP94,FJ15,Go00,LW20}. This motivates the search for new approaches to the study and classification of Lie bialgebras \cite{LW20}. In particular, we are here interested in geometric approaches.

A {\it coboundary Lie bialgebra} is a particular type of Lie bialgebra that is characterised by means of an $r$-{\it matrix}, namely a bivector on a Lie algebra $\mathfrak{g}$ that is a solution to its {\it modified classical Yang-Baxter equation} (mCYBE) \cite{CP94,Go00}. This work introduces novel geometric techniques to classify, up to Lie algebra automorphisms, coboundary Lie bialgebras and their $r$-matrices on a fixed Lie algebra $\mathfrak{g}$. 

In particular, we introduce a generalisation, the so-called {\it Darboux family}, of the notion of Darboux polynomial for a polynomial vector field. On a finite-dimensional vector space $E$, a polynomial function is a polynomial expression on a set of linear coordinates on $E$. A {\it Darboux polynomial}, $P$, for a polynomial vector field $X$ on $E$ is a polynomial function on $E$ so that $XP=fP$ for a certain polynomial $f$ on $E$, the so-called {\it cofactor} of $P$ relative to $X$ \cite{Gu11,LV13}. It is worth noting that, geometrically, one cannot define intrinsically what a polynomial on a general manifold is, as the explicit form of a function on a manifold  depends on the chosen coordinate system. 

More generally, let us consider a $q$-dimensional Lie algebra of vector fields, let us say $V:=\langle X_1,\ldots,X_q\rangle$, on a  manifold $M$. A {\it Darboux family} for $V$ is an $s$-dimensional vector space $\mathcal{A}:=\langle f_1,\ldots,f_s\rangle$ of smooth functions on $M$ such that $X_\alpha f_\beta=\sum_{\gamma=1}^sg_{\alpha\beta\gamma}f_\gamma$ for certain smooth functions $g_{\alpha\beta\gamma}\in C^\infty(M)$ with $\alpha=1,\ldots,q$ and $\beta,\gamma=1,\ldots,s$. In this work, 
{\it Darboux families} are studied and employed to classify up to Lie algebra automorphisms the $r$-matrices and coboundary Lie bialgebras on real four-dimensional Lie algebras that are {\it indecomposable} \cite{SW14}, i.e. they cannot be written as a direct sum of two {\it proper} ideals, namely ideals that are different of the zero and the total Lie algebra. Nevertheless, our methods can also be applied to any other Lie algebra. 

Let ${\rm Aut}(\mathfrak{g})$ stand for the Lie group of Lie algebra automorphisms of a  Lie algebra $\mathfrak{g}$. Then, ${\rm Aut}(\mathfrak{g})$ naturally acts on the space $\Lambda^2\mathfrak{g}$ of {\it bivectors} \cite{AM78} of $\mathfrak{g}$ and, more specifically, on the space $\mathcal{Y}_{\mathfrak{g}}\subset\Lambda^2\mathfrak{g}$ of solutions to  the mCYBE on $\mathfrak{g}$ (see \cite{CP94,LW20}). The classes of equivalent $r$-matrices (up to Lie algebra automorphisms of $\mathfrak{g}$) are given by the orbits of the action of ${\rm Aut}(\mathfrak{g})$ on $\mathcal{Y}_{\mathfrak{g}}$. Characterising and analysing such orbits is in general complicated even when the form of ${\rm Aut}(\mathfrak{g})$ is explicitly known (see \cite{FJ15} for a typical algebraic approach to Lie bialgebras on three-dimensional Lie algebras). Most techniques in the literature are algebraic \cite{FJ15,Go00}. Instead, we use here a more geometrical approach. Let us sketch our main ideas. We use  a simple method  (see Proposition \ref{prop:derbiv} and Remark \ref{Re:DerAlg}) to determine the Lie algebra of fundamental vector fields, $V_\mathfrak{g}$, of the action of ${\rm Aut}(\mathfrak{g})$ on $\Lambda^2\mathfrak{g}$. Our technique does not require to know the explicit form of ${\rm Aut}(\mathfrak{g})$. Instead, it is enough to know the space of derivations on $\mathfrak{g}$, which can be derived by solving a linear system of algebraic equations. We show that the orbits of the connected part of ${\rm Aut}(\mathfrak{g})$ containing its neutral element, ${\rm Aut}_c(\mathfrak{g})$, are the integral connected submanifolds of the {\it generalised distribution}, $\mathscr{E}_\mathfrak{g}$, spanned by the vector fields of  $V_{\mathfrak{g}}$ (see \cite{He62,La18,Na66,St80,Su73,Va94} for general results on generalised distributions). In fact, as $V_{\mathfrak{g}}$ is a finite-dimensional Lie algebra, $\mathscr{E}_\mathfrak{g}$ is integrable \cite{La18}. Finding the integral connected submanifolds of a generalised distribution, its so-called {\it strata}, is more complicated than finding the leaves of standard distributions because strata cannot always be determined by a family of common first-integrals for the elements of $V_{\mathfrak{g}}$, as it happens in the case of distributions (cf. \cite{AM78,Ki04}). We determine them here via our Darboux families, which is  relatively easy as illustrated by our examples. Once the strata  of $\mathscr{E}_\mathfrak{g}$ in $\mathcal{Y}_{\mathfrak{g}}$ have been obtained, the use of the action on such strata of an element of each connected part of ${\rm Aut}(\mathfrak{g})$ allows us to determine the orbits of ${\rm Aut}(\mathfrak{g})$ on $\mathcal{Y}_\mathfrak{g}$, which gives us the desired classification of $r$-matrices. Note that our method do rely on the determination of a single element of each connected component of ${\rm Aut}(\mathfrak{g})$ and, more relevantly, it does not need the explicit action of the whole ${\rm Aut}(\mathfrak{g})$ on $\Lambda^2\mathfrak{g}$. 

It is known that two $r$-matrices for $\mathfrak{g}$ that are not equivalent up to a Lie algebra automorphism may give rise to coboundary Lie bialgebras that are equivalent up to a Lie algebra automorphism (cf. \cite{CP94,LW20}). Let $(\Lambda^2\mathfrak{g})^{\mathfrak{g}}$ be the space of bivectors of $\mathfrak{g}$ that are invariant relative to the action of elements of $\mathfrak{g}$ via the algebraic Schouten bracket \cite{CP94,LW20}. In our work, we prove that the orbits of ${\rm Aut}(\mathfrak{g})$ on $\mathcal{Y}_{\mathfrak{g}}$ that project onto the same orbit of the natural action of ${\rm Aut}(\mathfrak{g})$ on  $\Lambda^2\mathfrak{g}/(\Lambda^2\mathfrak{g})^{\mathfrak{g}}$ (see \cite{LW20}) are exactly the $r$-matrices that lead to equivalent (up to Lie algebra automorphisms) coboundary Lie bialgebras on $\mathfrak{g}$. 

Our methods are computationally affordable for the study and classification of general coboundary Lie bialgebras with three- and four-dimensional $\mathfrak{g}$. Our techniques are quite probably appropriate for looking into and classifying Lie bialgebras on any five-dimensional $\mathfrak{g}$ and, possibly, for other particular instances of higher-dimensional Lie algebras $\mathfrak{g}$. Indeed, our procedures lead to the classification of real coboundary Lie bialgebras, up to Lie algebra automorphisms, on any four-dimensional indecomposable Lie algebra $\mathfrak{g}$ (see \cite{SW14} for a classification of indecomposable Lie algebras up to dimension six). Our results are summarised in Table
\ref{Tab:g_orb_1}. This is a remarkable advance relative to other classification works in the literature \cite{ARH17,FJ15,Go00}. Indeed, the most complicated classification in the literature so far is probably the classification of Lie bialgebras on symplectic four-dimensional Lie algebras \cite{ARH17}. Additionally, other four-dimensional Lie bialgebras have been partially studied in the mathematics and physics literature (see \cite{BCH00,BH96,BH97,BHOS93,BCGST92,BHP99,Ku94,Op98,Op00} and references therein). As a byproduct of our research, a method for the matrix representation of a class of finite-dimensional Lie algebras with a non-trivial center, which cannot be represented through the matrices of the adjoint representation, is given. This is interesting not only for our purposes, but also in many other works where such a representation is employed in practical calculations, e.g. \cite{GBGH19}.

The structure of the paper goes as follows. Section \ref{Se:CLA} surveys the main notions on Lie bialgebras and their derivations, $\mathfrak{g}$-modules, and the notation to be used.  In Section \ref{Sec:StSus}, the theory of generalised distributions is discussed. A method to obtain a matrix representation of a class of Lie algebras with non-trivial center is given in Section \ref{Se:MatrRep}. Section \ref{Se:DarFam} introduces Darboux families and shows how they can be used to classify $r$-matrices on the Lie algebra $\mathfrak{s}_{4,1}$ (according to \v{S}nobl and Winternitz's notation in \cite{SW14}), up to Lie algebra automorphisms thereof. Section \ref{Se:GSrMatDarFam} analyses several geometric properties of solutions to mCYBEs and it also provides some hints on the use Darboux families to study the equivalence up to Lie algebra automorphisms of $r$-matrices and coboundary Lie bialgebras. In Section \ref{Sec:Cla}, Darboux families are applied to studying and classifying coboundary Lie bialgebras on real four-dimensional indecomposable Lie algebras. Section 8 summarises our results and presents some further work in progress.

\section{Fundamentals on Lie bialgebras and their derivations}\label{Se:CLA}
 Let us provide a brief account on the notions of Lie bialgebras, Schouten brackets, and $\mathfrak{g}$-modules to be used hereafter (see \cite{CP94,KS04,Va94} for further details). Some simple results on the geometric properties of the action of ${\rm Aut}(\mathfrak{g})$ on spaces of $k$-vectors are provided. Our approach is more geometric than in standard works on Lie bialgebras. We hereafter assume that $\mathfrak{g}$ and $E$ are a finite-dimensional real Lie algebra and a finite-dimensional real vector space, respectively. Meanwhile, $GL(E)$ and $\mathfrak{gl}(E)$ stand for the Lie group of automorphisms and the Lie algebra of endomorphisms on $E$, respectively. 

Let $\mathcal{V}^m M$ be the vector space of $m$-vector fields on a manifold $M$. The {\it Schouten-Nijenhuis bracket} \cite{Ma97,Va94} on $\mathcal{V}M:=\oplus_{m\in \mathbb{Z}}\mathcal{V}^m M$ is the unique bilinear map $[\cdot, \cdot]: \mathcal{V}M \times \mathcal{V}M \to \mathcal{V}M$ satisfying that 
$$
[f,g]=0,
$$
\vskip -0.6cm
$$
[X_1\wedge\ldots\wedge X_s,f]:=\iota_{df}X_1\wedge\ldots\wedge X_s:=\sum_{i=1}^s(-1)^{i+1}  (X_if)X_1\wedge\ldots\wedge\widehat{X_i}\wedge\ldots\wedge X_s,
$$ 
\vskip -0.7cm
 \begin{equation}\label{multi_sn}
	[X_1 \wedge \ldots \wedge X_s, Y_1 \wedge \ldots \wedge Y_l]\!\! := \!\!\!\!\sum_{\substack{i=1,\ldots,s\\ j=1,\ldots,l}}\!\!\!(-1)^{i+j} [X_i, Y_j] \wedge X_1 \wedge \ldots  \wedge\widehat{X}_i \wedge \ldots\wedge  X_s \wedge Y_1\wedge \ldots \wedge\widehat{Y}_j \wedge \ldots \wedge Y_l,
	\end{equation}
\vskip -0.2cm
\noindent where $f,g\in C^\infty(M)$,  $s,l\in \mathbb{N}$, the $X_1,\ldots,X_s,Y_1,\ldots,Y_l$ are arbitrary  vector fields on $M$,     an omitted vector field $X$ is denoted by the hat symbol $\widehat{X}$, and $[X_i,Y_j]$ is the Lie bracket\footnote{We denote similar structures with the same symbol, but their meaning is clear from the context.} of $X_i$ and $Y_j$ (see \cite{Ma97} for more details). 
Remarkably, $[\mathcal{X},\mathcal{Y}]\in \mathcal{V}^{s+l-1}$ for $\mathcal{X} \in \mathcal{V}^s M$ and $\mathcal{Y} \in \mathcal{V}^l M$. 
The space, $\mathcal{V}^LG$, of left-invariant elements of $\mathcal{V}G$ for a Lie group $G$ is closed relative to $[\cdot,\cdot]$. In particular, left-invariant vector fields on $G$, i.e. ${\mathcal{V}^L}^1G$, span a finite-dimensional Lie algebra called the Lie algebra of $G$, which can be identified with $T_eG$ \cite{AM78}. Vice versa, every abstract finite-dimensional Lie algebra $\mathfrak{g}$ can be thought of as the Lie algebra of left-invariant vector fields of a Lie group \cite{DK00}. Meanwhile, $\mathcal{V}^LG$ can be identified with the Grassmann algebra $\Lambda\mathfrak{g}$, namely the algebra relative to the exterior product spanned by all the multivectors of the Lie algebra, $\mathfrak{g}$, of $G$ \cite{LGLV19}. Moreover, $[\cdot,\cdot]$ can be restricted to $\mathcal{V}^LG$ leading to the \textit{algebraic Schouten bracket} on $\Lambda\mathfrak{g}$ \cite{LW20,Va94}. For simplicity, we will call it the {\it Schouten bracket} on $\Lambda \mathfrak{g}$. Recall that the Grassmann algebra $\Lambda E$ of a vector space $E$ satisfies that $\Lambda E=\bigoplus_{m\in \mathbb{Z}}\Lambda^m E$, where $\Lambda^m E$ is the vector space of $m$-vectors of $E$.

A \textit{Lie bialgebra} is a pair $(\mathfrak{g}, \delta)$, where $\mathfrak{g}$ admits a Lie bracket $[\cdot, \cdot]_\mathfrak{g}$, whilst $\delta: \mathfrak{g} \to \Lambda^2 \mathfrak{g}$, the \textit{cocommutator}, is a linear map, its transpose 
$\delta^*: \Lambda^2\mathfrak{g}^* \to \mathfrak{g}^*$ is a Lie bracket on $\mathfrak{g}^*$, and 
\begin{equation}\label{cocycle_cond}
\delta([v_1,v_2]_\mathfrak{g}) = [v_1, \delta(v_2)] +[\delta(v_1),v_2], \qquad \forall v_1,v_2 \in \mathfrak{g}.
\end{equation}

A \textit{Lie bialgebra homomorphism} is a Lie algebra homomorphism $\phi:\mathfrak{g}\rightarrow \mathfrak{h}$ between Lie bialgebras $(\mathfrak{g},\delta_{\mathfrak{g}})$ and $(\mathfrak{h},\delta_{\mathfrak{h}})$ such that $(\phi\otimes \phi)\circ\delta_\mathfrak{g}=\delta_\mathfrak{h}\circ \phi$. A \textit{coboundary Lie bialgebra} is a Lie bialgebra $(\mathfrak{g}, \delta_r)$ such that $\delta_r(v) := [v, r] $ for every $v \in \mathfrak{g}$ and  some $r \in \Lambda^2 \mathfrak{g}$, a so-called {\it $r$-matrix}. To characterise $r$-matrices, we use the following notions and Theorem \ref{thm1}. The standard identification of an abstract Lie algebra $\mathfrak{g}$ with the Lie algebra of left-invariant vector fields on a Lie group $G$   allows us to understand the tensor algebra $\mathfrak{T}(\mathfrak{g})$ of $\mathfrak{g}$ as the tensor algebra, $\mathfrak{T}^L(G)$, of left-invariant tensor fields on $G$. This gives rise to a Lie algebra representation  $\textrm{ad}: v\in \mathfrak{g} \mapsto {\rm ad}_v\in \mathfrak{gl}\left(\mathfrak{T}(\mathfrak{g})\right)$, where ${\rm ad}_v(w):=\mathcal{L}_vw$ for every $w\in \mathfrak{T}(\mathfrak{g})$ and $\mathcal{L}_vw$ is the Lie derivative of $w$ relative to  $v$, which are understood as elements in $\mathfrak{T}^L(G)$ in the natural way. Note that the geometric notation $\mathcal{L}_vw$ is conciser than algebraic ones (cf. \cite{CP94}). An element $q \in \mathfrak{T}(\mathfrak{g})$ is called \textit{$\mathfrak{g}$-invariant} if $\mathcal{L}_vq=0$ for all $v \in \mathfrak{g}$. We write $(\mathfrak{T}(\mathfrak{g}))^{\mathfrak{g}}$ for the set of $\mathfrak{g}$-invariant elements of $\mathfrak{T}(\mathfrak{g})$. The map ${\rm ad}:\mathfrak{g}\rightarrow \mathfrak{gl}(\mathfrak{T}(\mathfrak{g}))$ admits a  restriction ${\rm ad}:\mathfrak{g}\rightarrow \mathfrak{gl}(\Lambda^m\mathfrak{g})$. We recall that  $(\Lambda^m\mathfrak{g})^{\mathfrak{g}}$ stands for the space of $\mathfrak{g}$-invariant $m$-vectors. Let us recall the following well-known result.

\begin{theorem}\label{thm1} 
	The map $\delta_r: v \in \mathfrak{g} \mapsto [v, r]  \in \Lambda^2\mathfrak{g}$, for $r\in \Lambda^2\mathfrak{g}$, is a cocommutator if and only if 
	$[r, r] \in (\Lambda^3 \mathfrak{g})^{\mathfrak{g}}$.
\end{theorem}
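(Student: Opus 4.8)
The plan is to isolate which of the three defining conditions of a cocommutator actually constrains $r$, and then to reduce that surviving condition to a single Schouten-bracket identity. Recall that $\delta_r$ being a cocommutator means three things: that $\delta_r$ is linear, that the dual map $\delta_r^*:\Lambda^2\mathfrak{g}^*\to\mathfrak{g}^*$ is a Lie bracket on $\mathfrak{g}^*$, and that $\delta_r$ obeys the cocycle condition \eqref{cocycle_cond}. Linearity is immediate from the bilinearity of the Schouten bracket, and the antisymmetry of the induced bracket $[\xi,\eta]_{\mathfrak{g}^*}:=\delta_r^*(\xi\wedge\eta)$ is automatic because $\delta_r$ takes values in $\Lambda^2\mathfrak{g}$.

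First I would dispose of the cocycle condition by showing it holds for every $r\in\Lambda^2\mathfrak{g}$. Writing $\delta_r(v)=[v,r]$ and applying the graded Jacobi identity of the Schouten bracket to $v_1,v_2\in\mathfrak{g}$ (Schouten degree $0$) and $r\in\Lambda^2\mathfrak{g}$ (Schouten degree $1$) gives $[[v_1,v_2],r]=[v_1,[v_2,r]]-[v_2,[v_1,r]]$; combining this with the graded antisymmetry identity $[\delta_r(v_1),v_2]=-[v_2,\delta_r(v_1)]$ reproduces exactly \eqref{cocycle_cond}. Hence the cocycle condition imposes no restriction on $r$, and the whole content of the theorem is concentrated in the requirement that $\delta_r^*$ satisfy the Jacobi identity.

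Next I would recast the Jacobi identity for $[\cdot,\cdot]_{\mathfrak{g}^*}$ in the dual, exterior-algebra language. Under the standard correspondence between Lie brackets on $\mathfrak{g}^*$ and degree-$(+1)$ differentials on $\Lambda\mathfrak{g}$, the transpose of $[\cdot,\cdot]_{\mathfrak{g}^*}$ is, up to sign, the extension of $\delta_r$ to a derivation of $(\Lambda\mathfrak{g},\wedge)$, and the Jacobi identity is equivalent to this derivation squaring to zero; since the square of a degree-$(+1)$ derivation is again a derivation, it suffices to check this on the generators $\mathfrak{g}$. The key observation is that $[r,\cdot]$ is itself a degree-$(+1)$ derivation of the wedge product (this follows from the Leibniz rule for the Schouten bracket together with $r$ having Schouten degree $1$), and $\delta_r=-[r,\cdot]$ on $\mathfrak{g}$; so the derivation extension of $\delta_r$ is exactly $-[r,\cdot]$, and its square is $[r,[r,\cdot]]$.

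Finally I would compute this square. Applying the graded Jacobi identity to the two copies of $r$ yields $[r,[r,\mathcal{X}]]=\tfrac12[[r,r],\mathcal{X}]$ for every $\mathcal{X}\in\Lambda\mathfrak{g}$, so on a generator $v\in\mathfrak{g}$ one obtains $\delta_r^2(v)=\tfrac12[[r,r],v]=-\tfrac12[v,[r,r]]=-\tfrac12\mathcal{L}_v[r,r]$. Therefore $\delta_r^*$ is a Lie bracket if and only if $\mathcal{L}_v[r,r]=0$ for all $v\in\mathfrak{g}$, which, since $[r,r]\in\Lambda^3\mathfrak{g}$, is precisely the statement $[r,r]\in(\Lambda^3\mathfrak{g})^{\mathfrak{g}}$. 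Combined with the two automatic conditions established above, this proves the claimed equivalence. The main obstacle I anticipate is the bookkeeping in the third step: verifying that $[r,\cdot]$ is genuinely a wedge-derivation and that its derivation extension coincides with the transpose of the candidate dual bracket, since it is here that all the sign and degree conventions of the Schouten bracket must be pinned down consistently; the two graded-Jacobi computations themselves are short once those conventions are fixed.
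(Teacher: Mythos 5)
Your argument is correct. Note that the paper does not actually prove Theorem \ref{thm1}: it is introduced with ``Let us recall the following well-known result'' and left without proof, so there is no in-paper argument to compare against; what you have written is essentially the standard proof (as in Chari--Pressley or Drinfeld). Both halves check out: the cocycle condition \eqref{cocycle_cond} is indeed automatic from the graded Jacobi identity together with $[\delta_r(v_1),v_2]=-[v_2,\delta_r(v_1)]$ (the relevant sign $(-1)^{(a-1)(b-1)}$ vanishes here), and the reduction of the Jacobi identity for $\delta_r^*$ to $[r,[r,\cdot]]=\tfrac12[[r,r],\cdot]$ via the degree-$(+1)$ derivation $-[r,\cdot]$ extending $\delta_r$ is sound, since a Lie bracket on $\mathfrak{g}^*$ corresponds precisely to a square-zero degree-$(+1)$ derivation of $\Lambda\mathfrak{g}$ and the square of a degree-$(+1)$ derivation is determined by its values on generators. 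The one point you rightly flag as delicate --- that $[r,\cdot]$ is a wedge-derivation and that its restriction to $\mathfrak{g}$ is $-\delta_r$ --- follows directly from the graded Leibniz rule and graded antisymmetry of the Schouten bracket, so the proof is complete as outlined.
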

We call $[r, r] \in (\Lambda^3 \mathfrak{g})^{\mathfrak{g}}$ the \textit{modified classical Yang-Baxter equation} (mCYBE) of $\mathfrak{g}$, while $[r, r] = 0$ is referred to as the \textit{classical Yang-Baxter equation} (CYBE) of $\mathfrak{g}$ and its solutions amount to left-invariant Poisson bivectors on any Lie group $G$ with Lie algebra isomorphic to $\mathfrak{g}$ \cite{Va94}. 

Note that two $r$-matrices $r_1,r_2\in \Lambda^2\mathfrak{g}$ satisfy that $\delta_{r_1}=\delta_{r_2}$ if and only if  $r_1-r_2\in(\Lambda^2\mathfrak{g})^{\mathfrak{g}}$.
Then, what really matters to the determination of coboundary Lie bialgebras is not $r$-matrices, but their equivalence classes in the quotient space $\Lambda^2_R\mathfrak{g}:=\Lambda^2\mathfrak{g}/(\Lambda^2\mathfrak{g})^\mathfrak{g}$.

	A {\it $\mathfrak{g}$-module} is a pair $(V,\rho)$, where  $\rho:v\in \mathfrak{g}\mapsto \rho_v\in \mathfrak{gl}(V)$ is a Lie algebra morphism. A $\mathfrak{g}$-module $(V,\rho)$ will be represented just by $V$, while $\rho_v(x)$, for any $v\in \mathfrak{g}$ and $x\in V$, will be written simply as $vx$  if $\rho$ is understood from the context. If ${\rm ad}: v \in \mathfrak{g}\mapsto [v,\cdot]_\mathfrak{g} \in \mathfrak{gl}(\mathfrak{g})$ stands for the adjoint representation of $\mathfrak{g}$, then the fact that each $[v,\cdot]_\mathfrak{g}$, with $v\in\mathfrak{g}$, is a derivation of the Lie algebra $\mathfrak{g}$ (relative to its Lie bracket $[\cdot,\cdot]_\mathfrak{g}$) allows us to ensure that $(\mathfrak{g},{\rm ad})$ is a $\mathfrak{g}$-module  \cite{FH91}. The map ${\rm ad}$ can be considered as a mapping of the form ${\rm ad}:\mathfrak{g}\rightarrow \mathfrak{der}(\mathfrak{g})$, where $\mathfrak{der}(\mathfrak{g})$ is the Lie algebra of derivations on $\mathfrak{g}$. As a second relevant example of $\mathfrak{g}$-module, consider the Lie group ${\rm Aut}(\mathfrak{g})$ of the Lie algebra automorphisms of $\mathfrak{g}$ (see \cite{SW73} for details on its Lie group structure) and its Lie algebra, which is denoted by $\mathfrak{aut}(\mathfrak{g})$. The tangent map at the identity map on $\mathfrak{g}$, let us say ${\rm id}_\mathfrak{g}\in {\rm Aut}(\mathfrak{g})$, to the injection $\iota:{\rm Aut}(\mathfrak{g})\hookrightarrow GL(\mathfrak{g})$ induces a Lie algebra morphism $\widehat {\rm ad}:\mathfrak{aut}(\mathfrak{g})\simeq {\rm T}_{{\rm id}_\mathfrak{g}}{\rm Aut}(\mathfrak{g})\rightarrow \mathfrak{gl}(\mathfrak{g})\simeq {\rm T}_{{\rm id}_\mathfrak{g}}GL(\mathfrak{g})$ and $(\mathfrak{g},\widehat{\rm ad})$ becomes an $\mathfrak{aut}(\mathfrak{g}$)-module.

In view of the properties of the algebraic Schouten bracket, each $\mathfrak{g}$ gives rise to a $\mathfrak{g}$-module  $(\Lambda\mathfrak{g},{\rm ad})$, where ${\rm ad}:v\in \mathfrak{g}\mapsto [v,\cdot]\in \mathfrak{gl}(\Lambda \mathfrak{g})$ (cf. \cite{Va94}). This fact can be viewed as a consequence of  \cite[Proposition 2.1]{LW20}. To grasp this result and related ones,  recall that every $T\in \mathfrak{gl}(E)$ gives rise to the mappings  $\Lambda^mT:\lambda \in \Lambda^mE\mapsto 0\in\Lambda^mE$ for $m\leq 0$, and the maps $\Lambda^mT\in \mathfrak{gl}(\Lambda^mE)$, for $m>0$, given by the restriction to $\Lambda^m E$ of 
$
\Lambda^mT:={{T\otimes {\rm id}\otimes\ldots\otimes {\rm id}}}(m-{\rm operators})+\ldots+{{{\rm id}\otimes\ldots\otimes{\rm id}\otimes T}}(m-{\rm operators}),
$ where ${\rm id}$ is the identity on $E$. Moreover,  $\Lambda T:=\bigoplus_{m\in \mathbb{Z}}\Lambda^mT\in\mathfrak{gl}(\Lambda E)$. If $T$ is considered as an element of $GL(E)$, we define $\Lambda^m T:=T\otimes\ldots \otimes T(m-{\rm operators})$ for $m\geq 1$ and $\Lambda^m T$ is the identity on $\Lambda^mE$ for $m\leq 0$. Finally, $\Lambda T:=\bigoplus_{m\in \mathbb{Z}}\Lambda^mT$.

The Lie group ${\rm Aut}(\mathfrak{g})$ gives rise to a Lie group action $T\in {\rm Aut}(\mathfrak{g})\mapsto \Lambda^mT\in GL(\Lambda^m\mathfrak{g})$. Moreover, this gives rise to an infinitesimal Lie group action $\rho:\mathfrak{aut}(\mathfrak{g})\rightarrow \mathfrak{gl}(\Lambda^m\mathfrak{g})$ such that 
$$
[\rho(d)](w):=\frac{d}{dt}\bigg|_{t=0}[\Lambda^m\exp(td)](w)=(\Lambda^md)(w),\qquad \forall d\in \mathfrak{aut}(\mathfrak{g}),\qquad \forall w\in \Lambda^m\mathfrak{g}.
$$
We write $V_{\mathfrak{g}}$ for the Lie algebra of fundamental vector fields of this Lie group action for $m=2$. Recall that $\mathfrak{aut}(\mathfrak{g})$ is indeed the space of derivations on $\mathfrak{g}$ (cf. \cite{LW20}). The following result was proved in \cite{LW20}.
\begin{proposition}\label{proporb}
	The dimension of the orbit $\mathcal{O}_w$ of the action of ${\rm Inn}(\mathfrak{g})$ on $\Lambda^m\mathfrak{g}$ through $w\in \Lambda^m\mathfrak{g}$ is 
	$
	\dim  {\rm Im}\,\Theta^m_w,
	$
	where $\Theta^m_w:v\in \mathfrak{inn}(\mathfrak{g})\mapsto [v, w]  \in \Lambda^m\mathfrak{g}$.
\end{proposition}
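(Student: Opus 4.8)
The plan is to reduce the computation of $\dim\mathcal{O}_w$ to the infinitesimal action of $\mathfrak{inn}(\mathfrak{g})$ at the point $w$, via the standard correspondence between orbits of a smooth Lie group action and homogeneous spaces. The group ${\rm Inn}(\mathfrak{g})$ is the connected immersed Lie subgroup of ${\rm Aut}(\mathfrak{g})$ integrating the subalgebra $\mathfrak{inn}(\mathfrak{g})={\rm ad}(\mathfrak{g})\subset\mathfrak{aut}(\mathfrak{g})$, and it acts on $\Lambda^m\mathfrak{g}$ through the restriction of $T\mapsto\Lambda^m T$. Writing $G_w$ for the isotropy subgroup of $w$, the orbit map induces a diffeomorphism $\mathcal{O}_w\cong{\rm Inn}(\mathfrak{g})/G_w$, so that $\dim\mathcal{O}_w=\dim\mathfrak{inn}(\mathfrak{g})-\dim\mathfrak{g}_w$, where $\mathfrak{g}_w$ denotes the Lie algebra of $G_w$.

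The second step would be to identify $\mathfrak{g}_w$ with $\ker\Theta^m_w$. An element $d\in\mathfrak{inn}(\mathfrak{g})$ lies in $\mathfrak{g}_w$ precisely when $[\Lambda^m\exp(td)](w)=w$ for all $t\in\mathbb{R}$; differentiating at $t=0$ and using the already recalled identity $[\rho(d)](w)=(\Lambda^m d)(w)$, this is equivalent to $(\Lambda^m d)(w)=0$. Thus $\mathfrak{g}_w=\ker\rho_w$, where $\rho_w:d\mapsto(\Lambda^m d)(w)$ is the infinitesimal action evaluated at $w$. The crux is then to recognise $\rho_w$ as $\Theta^m_w$: for $d={\rm ad}_v$ with $v\in\mathfrak{g}$, an application of the defining formula \eqref{multi_sn} shows that $(\Lambda^m{\rm ad}_v)(w)=[v,w]$, since both sides are even derivations of degree $0$ of the algebra $(\Lambda\mathfrak{g},\wedge)$ that agree with ${\rm ad}_v$ on the generators $\Lambda^1\mathfrak{g}=\mathfrak{g}$, and hence coincide. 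In particular, central elements $v\in Z(\mathfrak{g})$ give $[v,w]=0$, so the assignment descends to a well-defined linear map $\Theta^m_w$ on $\mathfrak{inn}(\mathfrak{g})\cong\mathfrak{g}/Z(\mathfrak{g})$, consistent with the statement.

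With these identifications in place, the conclusion follows from the rank-nullity theorem:
$$
\dim\mathcal{O}_w=\dim\mathfrak{inn}(\mathfrak{g})-\dim\ker\Theta^m_w=\dim{\rm Im}\,\Theta^m_w.
$$
Equivalently, one may argue directly that $T_w\mathcal{O}_w={\rm Im}\,\rho_w={\rm Im}\,\Theta^m_w$ and read off the dimension from the tangent space.

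I expect the main obstacle to be the careful handling of the orbit theorem for the action of ${\rm Inn}(\mathfrak{g})$, which need not be embedded, but only immersed, in ${\rm Aut}(\mathfrak{g})$: one must ensure that the orbit is a genuine immersed submanifold with a well-defined dimension, that $T_w\mathcal{O}_w={\rm Im}\,\rho_w$, and that the orbit-stabilizer diffeomorphism remains valid in this immersed setting. The remaining ingredient, the identity $(\Lambda^m{\rm ad}_v)(w)=[v,w]$ tying the infinitesimal geometric action to the algebraic Schouten bracket, is a routine but essential verification; once it is secured, the dimension count is immediate.
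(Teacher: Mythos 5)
Your proposal is correct and follows essentially the same route the paper takes: the paper cites \cite{LW20} for Proposition \ref{proporb} itself, but its proof of the generalisation, Proposition \ref{prop:derbiv}, is exactly your argument (tangent space of the orbit via $\frac{d}{dt}\big|_{t=0}[\Lambda^m\exp(td)](w)$, identification of the isotropy algebra with the kernel of the infinitesimal action, and a dimension count). The only ingredients you add are the verification that $(\Lambda^m\,{\rm ad}_v)(w)=[v,w]$ and the well-definedness of $\Theta^m_w$ on $\mathfrak{inn}(\mathfrak{g})\cong\mathfrak{g}/\mathcal{Z}(\mathfrak{g})$, both of which are correct and indeed needed to pass from the $\Lambda^m d$ formulation to the Schouten-bracket formulation in the statement.
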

In this work, we will use the following rather straightforward generalisation of Proposition \ref{proporb}.
\begin{proposition}\label{prop:derbiv}
	The dimension of the orbit $\mathscr{O}_w$ of the action of ${\rm Aut}(\mathfrak{g})$ on $\Lambda^m\mathfrak{g}$ through $w\in \Lambda^m\mathfrak{g}$ is 
	$
	\dim  {\rm Im}\,\Upsilon^m_w,
	$
	where $\Upsilon^m_w:d\in \mathfrak{der}(\mathfrak{g})\mapsto (\Lambda^md)(w) \in \Lambda^m\mathfrak{g}$.
\end{proposition}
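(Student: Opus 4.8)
The plan is to deduce the statement from the standard theory of orbits of smooth Lie group actions, according to which the orbit of a point is an immersed submanifold whose tangent space at that point is spanned by the values of the fundamental vector fields there. Concretely, I would start from the Lie group action $T\in{\rm Aut}(\mathfrak{g})\mapsto \Lambda^mT\in GL(\Lambda^m\mathfrak{g})$ introduced above and its infinitesimal counterpart $\rho:\mathfrak{aut}(\mathfrak{g})\rightarrow \mathfrak{gl}(\Lambda^m\mathfrak{g})$, which was already computed to satisfy $[\rho(d)](w)=(\Lambda^m d)(w)$ for every $d\in\mathfrak{aut}(\mathfrak{g})$ and $w\in\Lambda^m\mathfrak{g}$. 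Since $\mathfrak{aut}(\mathfrak{g})$ is exactly $\mathfrak{der}(\mathfrak{g})$, and since $\Lambda^m\mathfrak{g}$ is a vector space so that $T_w(\Lambda^m\mathfrak{g})\simeq \Lambda^m\mathfrak{g}$ canonically, the value at $w$ of the fundamental vector field $X_d$ associated with $d\in\mathfrak{der}(\mathfrak{g})$ is precisely $X_d(w)=(\Lambda^md)(w)=\Upsilon^m_w(d)$.

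Next I would invoke the orbit theorem: the orbit $\mathscr{O}_w\simeq {\rm Aut}(\mathfrak{g})/{\rm Aut}(\mathfrak{g})_w$, with ${\rm Aut}(\mathfrak{g})_w$ the isotropy subgroup of $w$, is an immersed submanifold of $\Lambda^m\mathfrak{g}$ whose tangent space at $w$ is the linear span of the fundamental vector fields at $w$, namely $T_w\mathscr{O}_w=\{X_d(w):d\in\mathfrak{der}(\mathfrak{g})\}={\rm Im}\,\Upsilon^m_w$. Taking dimensions then yields $\dim\mathscr{O}_w=\dim T_w\mathscr{O}_w=\dim{\rm Im}\,\Upsilon^m_w$, which is the asserted formula. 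This is the same scheme as in Proposition \ref{proporb}: there ${\rm Inn}(\mathfrak{g})$ plays the role of the acting group, $\mathfrak{inn}(\mathfrak{g})={\rm ad}(\mathfrak{g})$ is its Lie algebra, and the fundamental vector field reduces to $\Theta^m_w(v)=(\Lambda^m{\rm ad}_v)(w)=[v,w]$, so Proposition \ref{prop:derbiv} is genuinely the extension obtained by replacing inner derivations by all derivations.

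The one point I expect to require care, and the only real difference with Proposition \ref{proporb}, is that ${\rm Aut}(\mathfrak{g})$ need not be connected, whereas ${\rm Inn}(\mathfrak{g})$ is. I would handle this by noting that the full orbit $\mathscr{O}_w$ is a disjoint union of orbits of the identity component ${\rm Aut}_c(\mathfrak{g})$, each of which is the image of the connected orbit ${\rm Aut}_c(\mathfrak{g})\cdot w$ under some diffeomorphism $\Lambda^mT$ of $\Lambda^m\mathfrak{g}$; hence all these pieces have equal dimension, and the tangent space at $w$ is already realised by the identity component, for which $\mathfrak{der}(\mathfrak{g})$ is the Lie algebra and $X_d(w)=\Upsilon^m_w(d)$. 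Thus the dimension count is unaffected, and $\dim\mathscr{O}_w=\dim{\rm Im}\,\Upsilon^m_w$ as claimed. Aside from this connectedness bookkeeping, the argument is a direct transcription of the orbit theorem once the fundamental vector fields have been identified with $\Upsilon^m_w$.
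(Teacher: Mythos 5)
Your proposal is correct and follows essentially the same route as the paper: both identify the fundamental vector field of $d\in\mathfrak{der}(\mathfrak{g})$ at $w$ with $(\Lambda^m d)(w)=\Upsilon^m_w(d)$ and then apply the standard orbit theorem, the only cosmetic difference being that the paper passes through $\dim\mathscr{O}_w=\dim\mathfrak{der}(\mathfrak{g})-\dim\mathfrak{g}_w$ with $\mathfrak{g}_w=\ker\Upsilon^m_w$ and rank--nullity, whereas you equate $T_w\mathscr{O}_w={\rm Im}\,\Upsilon^m_w$ directly. Your explicit remark on the non-connectedness of ${\rm Aut}(\mathfrak{g})$ is a small but welcome addition that the paper leaves implicit.
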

\begin{proof}
	The orbit of $w\in\Lambda^m \mathfrak{g}$ relative to ${\rm Aut}(\mathfrak{g})$ is given by the points $ (\Lambda^mT)( w)$ for every $T\in {\rm Aut}(\mathfrak{g})$. Define $\exp(td ) =:T_t$, with $t\in \mathbb{R}$, for $d\in\mathfrak{der}(\mathfrak{g})$. Then, the tangent space at $w$ of $\mathscr{O}_w$  is spanned by the tangent vectors
	\begin{equation}\label{Eq:Red}
	\frac{d}{dt}\bigg|_{t=0}[\Lambda^m\exp(td)](w)=(\Lambda^md)(w)\in T_w\Lambda^m\mathfrak{g}\simeq \Lambda^m\mathfrak{g}.
	\end{equation}
	Then, $\dim \mathscr{O}_w=\dim \mathfrak{der}(\mathfrak{g})-\dim \mathfrak{g}_w$, where $G_w$ is the isotropy group of $w\in \Lambda^m\mathfrak{g}$ relative to the action of ${\rm Aut}(\mathfrak{g})$ and $\mathfrak{g}_w$ is the Lie algebra of $G_w$. Moreover, $\mathfrak{g}_w$ is given by those $d\in\mathfrak{der}(\mathfrak{g})$ such that  $(\Lambda^md)(w)=0$.
	This amounts to $d\in \ker\,\Upsilon^{m}_w$.
	Hence, $\dim \mathscr{O}_w = \dim \mathfrak{der}(\mathfrak{g})-\dim \mathfrak{g}_w=\dim {\rm Im}\, \Upsilon_{w}^{m}$.
\end{proof}
\begin{remark}\label{Re:DerAlg}
As a byproduct, Proposition \ref{prop:derbiv} shows that the fundamental vector fields of the natural Lie group action of ${\rm Aut}(\mathfrak{g})$ on $\Lambda^2 \mathfrak{g}$ are spanned by $X^2_v(w) = \sum_{i=1}^s[\Upsilon^2_w]_i \partial_{x_i}$, where $x_1, \ldots, x_s$ is any linear coordinate system on $\Lambda^2 \mathfrak{g}$, the  $w$ is any point in $\Lambda^2\mathfrak{g}$,  and $v$ belongs to   $\mathfrak{der}(\mathfrak{g})$. Moreover, the coordinates of such vector fields are the coordinates of the extensions to $\Lambda^2\mathfrak{g}$ of the derivations of $\mathfrak{g}$, which can easily be obtained as the solutions, $d$, to the linear problem
$$
d([e_1,e_2])=[d(e_1),e_2]+[e_1,d(e_2)],\qquad \forall e_1,e_2\in \mathfrak{g}.
$$
\end{remark}

\section{Generalised distributions}\label{Sec:StSus}

In this work we want to show that the problem of determining equivalent $r$-matrices up to Lie algebra automorphisms can be significantly simplified by studying the strata of the so-called {\it generalised distributions} \cite{La18,St80,Su73,Va94}. Let us detail some useful facts on these geometric entities. Unless otherwise stated, we assume all objects to be smooth and globally defined. Hereafter, $\mathfrak{X}(M)$ stands for the Lie algebra of vector fields on $M$.

A {\it generalised distribution} (also called a {\it Stefan--Sussmann distribution}) on a manifold $M$ is a correspondence $\mathcal{D}$ attaching each $x\in M$ to a subspace $\mathcal{D}_x\subset T_xM$. We call {\it rank} of $\mathcal{D}$ at $x$ the dimension of $\mathcal{D}_x$. A generalised distribution need not have the same rank at every point of $M$. If $\mathcal{D}$ has the same rank at every point of $M$, then  $\mathcal{D}$ is said to be {\it regular} or $\mathcal{D}$ is simply  called a {\it distribution}. Otherwise, $\mathcal{D}$ is said to be {\it singular}. A generalised distribution $\mathcal{D}$ on $M$ is \emph{involutive} if every two vector fields taking values in $\mathcal{D}$ satisfy that their Lie bracket takes values in $\mathcal{D}$ as well. 

A \emph{stratification}, let us say $\mathcal{F}$, on a manifold $M$ is a partition of $M$ into connected disjoint immersed submanifolds $\{\mathcal{F}_k\}_{k\in I}$, where $I$ is a certain set of indices, i.e. $M=\bigcup_{k\in I}\mathcal{F}_k$ and   the submanifolds $\{\mathcal{F}_k\}_{k\in I}$ satisfy $\mathcal{F}_{k}\cap \mathcal{F}_{k'}=\emptyset$ for $k\neq k'$ and $k,k'\in I$. The connected immersed submanifolds $\mathcal{F}_k$, with $k\in I$, are called the {\it strata} of the stratification. A stratification is \emph{regular} if its strata are immersed submanifolds of the same dimension, whilst it is \emph{singular} otherwise. Regular stratifications are called {\it foliations} and their strata are called {\it leaves}.  The tangent space to a stratum, $\mathcal{F}_k$, of a stratification passing through a point $x\in M$ is a subspace $\mathcal{D}_x\subset T_xM$. All the subspaces $\mathcal{D}_x\subset T_xM$ for every point $x\in M$ give rise to a {generalised distribution}  $\mathcal{D}:=\bigcup_{x\in M}\mathcal{D}_x$ on $M$. All the leaves of a foliation have the same dimension and, therefore, the generalised distribution formed by the tangent spaces at every point to its leaves is regular. Meanwhile, a singular stratification gives rise to a singular generalised distribution. 

In this work, we are specially interested in generalised distributions generated by finite-dimensional Lie algebras of vector fields, the so-called {\it Vessiot--Guldberg Lie algebras} \cite{LS20}.  More specifically, let $V$ be a Vessiot--Guldberg Lie algebra, the vector fields of $V$ span a generalised distribution $\mathcal{D}^V$ given by
 $$
	\mathcal{D}_x^V:=\{X_x:X\in V\}\subset T_xM,\qquad \forall x\in M.
	$$

Since the space of vector fields tangent to the strata of a stratification are closed under Lie brackets, the Lie bracket of vector fields on $M$ taking values in a distribution $\mathcal{D}$ can be restricted to each one of its strata. A generalised distribution $\mathcal{D}$ on $M$ is \emph{integrable} if there exists a stratification $\mathcal{F}$ on $M$ such that each stratum $\mathcal{F}_k$ thereof satisfies $T_x\mathcal{F}_k=\mathcal{D}_x$ for every $x\in \mathcal{F}_k$. A relevant question is whether a generalised distribution on $M$ is integrable of not. For regular distributions, the Frobenius theorem holds \cite{Fr77,La18}.

\begin{theorem} 
	If $\mathcal{D}$ is a distribution on a manifold $M$, then $\mathcal{D}$ is integrable if and only if it is involutive.
\end{theorem}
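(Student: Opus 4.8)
The plan is to prove the two implications separately, handling the easy direction first. \emph{Integrability implies involutivity.} Suppose a stratification $\mathcal{F}$ exists with $T_x\mathcal{F}_k=\mathcal{D}_x$ for every $x\in\mathcal{F}_k$ and every $k$. If $X,Y\in\mathfrak{X}(M)$ take values in $\mathcal{D}$, then along each stratum they restrict to vector fields tangent to that immersed submanifold. Since the Lie bracket of two vector fields tangent to a submanifold is again tangent to it (a purely local fact, checked in coordinates adapted to the stratum), we get $[X,Y]_x\in T_x\mathcal{F}_k=\mathcal{D}_x$ at every $x$. Hence $\mathcal{D}$ is involutive.

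\emph{Involutivity implies integrability.} This is the substantive direction, and the strategy is to build, around an arbitrary $p\in M$, a \emph{flat chart}: coordinates $(x^1,\ldots,x^n)$ centred at $p$ in which $\mathcal{D}$ is spanned by $\partial_{x^1},\ldots,\partial_{x^k}$, where $k$ is the constant rank of the regular distribution $\mathcal{D}$. First I would choose a smooth local frame $X_1,\ldots,X_k$ spanning $\mathcal{D}$ near $p$, which exists because a regular rank-$k$ distribution is a subbundle of $TM$. Working in any coordinate system $(y^1,\ldots,y^n)$ in which the first $k$ coordinate components of $X_1,\ldots,X_k$ form an invertible matrix $A$ at $p$, I would replace the frame by $Y_i:=\sum_j(A^{-1})_{ij}X_j$, so that $Y_i=\partial_{y^i}+\sum_{a>k}b^a_i\,\partial_{y^a}$ for smooth functions $b^a_i$. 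A direct computation then shows that $[Y_i,Y_j]$ has no component along $\partial_{y^1},\ldots,\partial_{y^k}$; but involutivity forces $[Y_i,Y_j]\in\mathcal{D}=\langle Y_1,\ldots,Y_k\rangle$, and comparing the first $k$ components yields $[Y_i,Y_j]=0$. Thus involutivity upgrades an arbitrary spanning frame to a \emph{commuting} one.

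With a commuting frame in hand, I would straighten it using flows. Since the $Y_i$ pairwise commute and are pointwise independent, their flows $\phi^i_{t}$ commute, and the map $(t^1,\ldots,t^k,c^{k+1},\ldots,c^n)\mapsto \phi^1_{t^1}\circ\cdots\circ\phi^k_{t^k}\big(\sigma(c)\big)$, where $\sigma$ parametrises a transverse $(n-k)$-dimensional slice through $p$, is a local diffeomorphism by the inverse function theorem and pulls each $Y_i$ back to $\partial_{t^i}$. This produces the flat chart, in which the slices $\{x^{k+1}=\mathrm{const},\ldots,x^n=\mathrm{const}\}$ are local integral submanifolds, so $\mathcal{D}$ is locally integrable. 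Finally I would assemble global leaves by declaring two points equivalent when joined by a piecewise-smooth curve everywhere tangent to $\mathcal{D}$; each equivalence class, endowed with the leaf topology and smooth structure induced by overlapping flat charts, becomes a connected immersed integral submanifold, and these classes partition $M$ into a stratification (in fact a foliation, the rank being constant) whose tangent spaces recover $\mathcal{D}$.

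I expect the main obstacle to be the passage from an arbitrary frame to a commuting one together with its straightening, that is, the steps where involutivity is actually used: producing $Y_1,\ldots,Y_k$ with $[Y_i,Y_j]=0$ and verifying that the commuting flows define a chart is the technical heart of the argument. The global patching is comparatively routine once the flat charts exist, resting on the uniqueness of connected integral submanifolds through a point and a standard connectedness argument to check that the leaf structure is well defined.
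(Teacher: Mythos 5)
The paper does not prove this statement at all: it is quoted as the classical Frobenius theorem, with the proof deferred to the cited references (Frobenius and Lavau), so there is no in-paper argument to compare against. Your sketch is the standard textbook proof — the easy direction via tangency of brackets to immersed submanifolds, and the converse via normalising a local frame to $Y_i=\partial_{y^i}+\sum_{a>k}b^a_i\partial_{y^a}$, using involutivity to force $[Y_i,Y_j]=0$, straightening the commuting frame by its flows, and patching the resulting plaques into global leaves — and it is correct. The only step you gloss over is the final one: to endow a global leaf with an immersed-submanifold structure one must check that it meets each flat chart in at most countably many slices (a second-countability argument), which is what makes the induced leaf topology and smooth structure well defined; this is routine but is the one point where "comparatively routine" hides an actual argument.
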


Involutivity is a natural necessary condition for a generalised distribution to be {\it integrable}  because the set of vector fields tangent to any stratum of a stratification is involutive. Nevertheless, if a generalised distribution is not regular, its involutiveness does not necessarily implies its integrability. If a generalised distribution $\mathcal{D}$ on $M$ is {\it analytical}, i.e. for every $x\in M$ there exists a family of analytical vector fields taking values in $\mathcal{D}$ and spanning $\mathcal{D}_{x'}$ for every $x'$ in an open neighbourhood of $x$, one has the following proposition. 

 \begin{theorem}\emph{\textbf{(Nagano \cite{Na66} and   \cite{La18})}}\label{theonagano}
	Let $M$ be a real analytic manifold, and let $V$ be a sub-Lie algebra of analytic vector fields on $M$. Then, the induced analytic distribution $\mathcal{D}^V$ is integrable.
\end{theorem}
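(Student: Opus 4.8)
The plan is to realise the integral strata of $\mathcal{D}^V$ as the \emph{orbits} of the local pseudogroup generated by the flows of the vector fields in $V$, and to bring in analyticity only at the single decisive step, where one must verify that the tangent spaces of these orbits coincide with $\mathcal{D}^V$ rather than being strictly larger. First I would fix $x_0\in M$ and define its orbit $\mathcal{O}_{x_0}$ as the set of all points of the form $\Phi^{X_1}_{t_1}\circ\cdots\circ\Phi^{X_k}_{t_k}(x_0)$, where $X_1,\dots,X_k\in V$, the $\Phi^{X_i}_{t_i}$ are their (local) flows, and the composition is defined. By the orbit theorem of Stefan and Sussmann \cite{St80,Su73}, the collection $\{\mathcal{O}_{x}\}_{x\in M}$ is a stratification of $M$ by connected immersed submanifolds, and the tangent space $T_p\mathcal{O}_{x_0}$ at any $p\in\mathcal{O}_{x_0}$ is spanned by all vectors $[(\Phi_* X)]_p$ with $X\in V$ and $\Phi$ ranging over the pseudogroup above. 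The inclusion $\mathcal{D}^V_p\subseteq T_p\mathcal{O}_{x_0}$ is immediate (take $\Phi=\mathrm{id}$), so the whole theorem reduces to the reverse inclusion, i.e. to proving that $\mathcal{D}^V$ is invariant under the flow of every $Y\in V$, in the sense that $(\Phi^Y_t)_*\mathcal{D}^V_q=\mathcal{D}^V_{\Phi^Y_t(q)}$.

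The crux, and the only place analyticity enters, is this flow-invariance. For $X,Y\in V$ put $Z(t):=(\Phi^Y_{-t})_*X$; differentiating the pushforward along the flow gives $Z'(t)=(\Phi^Y_{-t})_*[Y,X]$ and, inductively, $Z^{(n)}(0)=\mathrm{ad}_Y^nX:=[Y,[\dots,[Y,X]]]\in V$. Since $Y$ and $X$ are analytic, their flows, and hence $Z(t)$, depend analytically on $t$, so for small $t$
\[
(\Phi^Y_{-t})_*X=\sum_{n=0}^{\infty}\frac{t^n}{n!}\,\mathrm{ad}_Y^nX .
\]
Evaluating at a point $p$, every partial sum lies in the subspace $\mathcal{D}^V_p\subseteq T_pM$, because each $\mathrm{ad}_Y^nX\in V$; as $T_pM$ is finite dimensional this subspace is closed, so the limit $[(\Phi^Y_{-t})_*X]_p$ lies in $\mathcal{D}^V_p$ as well. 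Thus $(\Phi^Y_t)_*\mathcal{D}^V_q\subseteq\mathcal{D}^V_{\Phi^Y_t(q)}$ for small $t$; applying the same reasoning to $-t$ yields equality, and the group law $\Phi^Y_{t+s}=\Phi^Y_t\circ\Phi^Y_s$ together with a subdivision of $[0,t]$ into small steps propagates the invariance to all admissible $t$.

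Finally I would assemble the pieces: the flow-invariance of $\mathcal{D}^V$ forces every generator $(\Phi_*X)_p$ of $T_p\mathcal{O}_{x_0}$ to lie in $\mathcal{D}^V_p$, whence $T_p\mathcal{O}_{x_0}=\mathcal{D}^V_p$, so the orbits are exactly the integral strata of $\mathcal{D}^V$ and $\mathcal{D}^V$ is integrable. I expect the main obstacle to be the honest justification of the displayed $\mathrm{ad}$-series: one must argue the joint analytic dependence of the pushforward on $(t,p)$ to guarantee convergence to the true pushforward, and then pass from small $t$ to arbitrary flow times through the stepwise composition argument, since the Taylor expansion itself only controls a neighbourhood of $t=0$. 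This is precisely the mechanism by which analyticity repairs the failure of involutivity to imply integrability in the merely smooth singular setting.
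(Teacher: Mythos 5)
The paper does not prove this theorem: it is quoted as a known result from Nagano \cite{Na66} and Lavau \cite{La18}, so there is no in-paper argument to compare against. Your sketch is precisely the standard proof found in those references — Sussmann's orbit theorem to realise the candidate strata, followed by flow-invariance of $\mathcal{D}^V$ via the analytic expansion $(\Phi^Y_{-t})_*X=\sum_{n\geq 0}\tfrac{t^n}{n!}\,\mathrm{ad}_Y^nX$, whose terms remain in $V$ because $V$ is a Lie subalgebra and whose partial sums at a point stay in the closed, finite-dimensional subspace $\mathcal{D}^V_p$ — and it is correct as an outline, with the single step you yourself flag (joint analytic dependence of the flow on $(t,p)$, guaranteeing the series converges to the true pushforward, before propagating from small $t$ by the group law) being exactly where the remaining analytic work lies.
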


If a generalised distribution is not analytical, the Stefan-Sussmann's theorem establishes additional conditions to involutivity to ensure integrability. 
We are more interested in the following result that ensures integrability in a particular case of relevance to us.

\begin{theorem}{\bf (Hermann \cite{He62} and   \cite{La18})}\label{Th:VGID} Let $M$ be a smooth manifold. If $V$ is a  finite-dimensional Lie subalgebra of $\mathfrak{X}(M)$, then the distribution $\mathcal{D}^V$ is integrable.
\end{theorem}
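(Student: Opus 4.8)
The plan is to realise the strata of $\mathcal{D}^V$ as the orbits of the (local) flows generated by $V$, and to show that each such orbit is a connected immersed submanifold whose tangent space at every point is exactly $\mathcal{D}^V$. Fix $x\in M$ and let $\mathcal{O}_x$ be the set of points that can be joined to $x$ by a finite concatenation of integral curves of vector fields of $V$. Since some elements of $V$ may be incomplete, I would work throughout with the \emph{local} flows $\phi^X_t$, which are defined on suitable open sets for $t$ near $0$. By construction the sets $\mathcal{O}_x$ are the equivalence classes of a relation on $M$, hence connected and pairwise disjoint, so $\mathcal{F}:=\{\mathcal{O}_x\}$ already partitions $M$; the remaining tasks are to endow each $\mathcal{O}_x$ with a manifold structure and to compute its tangent space.

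The key step is a flow-invariance lemma: for every $X,Y\in V$ one has $(\phi^X_{-t})_*Y\in V$ wherever the local flow is defined. To prove it, set $Z(t):=(\phi^X_{-t})_*Y$; writing $\phi^X_{-(t+s)}=\phi^X_{-s}\circ\phi^X_{-t}$ and differentiating the resulting identity $Z(t+s)=(\phi^X_{-s})_*Z(t)$ at $s=0$ yields the linear ordinary differential equation
\begin{equation}\label{Eq:flowinv}
\frac{d}{dt}Z(t)=\mathcal{L}_XZ(t)=[X,Z(t)]=\mathrm{ad}_X\,Z(t),\qquad Z(0)=Y.
\end{equation}
Because $V$ is a finite-dimensional Lie algebra, $\mathrm{ad}_X$ restricts to a linear endomorphism of $V$, so \eqref{Eq:flowinv} is a linear system on the finite-dimensional vector space $V$, whose unique solution is $Z(t)=e^{t\,\mathrm{ad}_X}Y\in V$. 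Hence the local flow of any $X\in V$ carries $\mathcal{D}^V$ into itself; in particular $(\phi^X_t)_*$ maps $\mathcal{D}^V_{x}$ isomorphically onto $\mathcal{D}^V_{\phi^X_t(x)}$, so the rank of $\mathcal{D}^V$ is constant along every orbit.

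With flow-invariance in hand I would build the manifold structure on $\mathcal{O}_x$ in the standard Stefan--Sussmann manner. Let $r:=\dim\mathcal{D}^V_x$ and choose $X_1,\dots,X_r\in V$ whose values at $x$ form a basis of $\mathcal{D}^V_x$; the map $(t_1,\dots,t_r)\mapsto(\phi^{X_1}_{t_1}\circ\dots\circ\phi^{X_r}_{t_r})(x)$ is an immersion near the origin onto a piece of $\mathcal{O}_x$, and the invariance established above ensures that the transition maps between such parametrisations based at different points of the orbit are smooth, producing a well-defined $r$-dimensional immersed submanifold structure on $\mathcal{O}_x$. The tangent space is then forced from both sides: each $X\in V$ gives the curve $t\mapsto\phi^X_t(x)$ with velocity $X_x$, so $\mathcal{D}^V_x\subseteq T_x\mathcal{O}_x$; conversely $T_x\mathcal{O}_x$ is spanned by velocities of the defining concatenations of flows, all of which lie in $\mathcal{D}^V$ by flow-invariance, whence $T_x\mathcal{O}_x\subseteq\mathcal{D}^V_x$. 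Thus $T_x\mathcal{O}_x=\mathcal{D}^V_x$, and $\mathcal{F}$ is the stratification witnessing integrability.

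The main obstacle is the third paragraph: promoting the orbits to bona fide immersed submanifolds and verifying that the tangent space never exceeds $\mathcal{D}^V_x$. The constancy of rank along orbits, which is exactly where the finite-dimensionality of $V$ enters through \eqref{Eq:flowinv}, is precisely what rules out the ``jumps'' that, for a general non-invariant generalised distribution, can enlarge $T_x\mathcal{O}_x$ beyond $\mathcal{D}^V_x$. Once this invariance is secured, the construction of charts and the identification $T_x\mathcal{O}_x=\mathcal{D}^V_x$ proceed by the usual Stefan--Sussmann argument, so the whole proof rests on establishing \eqref{Eq:flowinv}.
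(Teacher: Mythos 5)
The paper does not prove this theorem: it is quoted verbatim from Hermann's 1962 paper and Lavau's survey, so there is no internal argument to compare yours against. Judged on its own merits, your proposal reconstructs the classical orbit-theorem proof correctly. The key lemma --- that for $X,Y\in V$ the pushforward $(\phi^X_{-t})_*Y$ stays in $V$ because it solves the linear equation $\dot Z=\mathrm{ad}_X Z$ on the finite-dimensional space $V$ --- is exactly where the hypothesis on $V$ enters, and it is what distinguishes this case from a general involutive singular distribution (where involutivity alone does not give integrability, as the paper itself remarks). Your identification of constancy of rank along orbits as the mechanism that prevents $T_x\mathcal{O}_x$ from exceeding $\mathcal{D}^V_x$ is also the right diagnosis.

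Two points deserve tightening. First, the uniqueness step for your equation: a priori $Z(t)=(\phi^X_{-t})_*Y$ lives in the infinite-dimensional space of vector fields on an open set, so you cannot immediately invoke uniqueness for a linear ODE ``on $V$'' to conclude $Z(t)=e^{t\,\mathrm{ad}_X}Y$; the clean argument is to set $W(t):=(\phi^X_t)_*\bigl(e^{t\,\mathrm{ad}_X}Y\bigr)$, compute $\frac{d}{dt}W(t)=0$, and deduce $W(t)=Y$, which yields the desired identity on the domain of the local flow. Second, the third paragraph defers the construction of the immersed-submanifold structure on each orbit (independence of the charts from the chosen $X_1,\dots,X_r$ and base point, smoothness of transitions, and the second-countability needed for an immersed submanifold) to ``the usual Stefan--Sussmann argument''; this is the technical bulk of the proof in Hermann's and Lavau's treatments, so as written your text is an accurate and well-organised outline rather than a complete proof. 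The logical skeleton, however, matches the cited sources.
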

 
One relevant problem concerns the determination of the form of the strata of a stratification. If a stratification is regular on an open subset $U \subset M$, then one can define locally a set of functionally independent functions whose level sets are the strata of the stratification. Indeed, this amounts to the integration of a standard distribution. If a stratification is singular, the previous statement is not longer true and other methods are needed. The following facts are useful to understand further parts of this work. First, given a point $x\in M$ and a Vessiot--Guldberg Lie algebra $V$ on $M$, the stratum $\mathcal{F}_k$ of the generalised distribution $\mathcal{D}^V$ passing through $x$ is given by the points of the form \cite{St80}
\begin{equation}
\label{eq:dec}
\exp(X_1)\circ \exp(X_2)\circ \ldots \circ \exp(X_s)x,
\end{equation}
where $s$ is any natural number, $X_1,\ldots,X_s$ are any vector fields of  $V$, and each $\exp(X)$ stands for the local diffeomorphism on $M$ induced by the vector field $X$ on $M$.  Let $f$ be a function on $M$ such that if $X\in V$, then there exists a function $f_X\in C^\infty(M)$ such that $Xf=f_Xf$. This can be considered as a not necessarily polynomial analogue of a Darboux function for a vector field. Let us assume that $f=0$ at a point of $\mathcal{F}_k$. It immediately follows from (\ref{eq:dec}) that $f$ vanishes on the whole $\mathcal{F}_k$. Hence, two points $x_1,x_2\in M$ such that $f(x_1)\neq f(x_2)$ cannot belong to the same $\mathcal{F}_k$. As shown in following sections, $f$ does not need to be a constant of motion of the vector fields in $V$. This is specially relevant when determining the strata of an integrable singular generalised foliation, as their strata cannot always be determined locally as the zeroes of a family of common first-integrals of the vector fields of $V$. Meanwhile, this can  always be achieved locally for the leaves of an integrable distribution \cite{AM78,Va94}.

\section{Obtaining a matrix representation for a Lie algebra with nontrival center}\label{Se:MatrRep}

If a Lie algebra $\mathfrak{g}$ has a nontrivial center, then the image of the map ${\rm ad}: \mathfrak{g} \ni v \mapsto [v, \cdot] \in \mathfrak{gl}(\mathfrak{g})$ is a matrix Lie algebra that is not isomorphic to $\mathfrak{g}$. Hence, the image of {\rm ad} does not give a matrix algebra representation of $\mathfrak{g}$, which may give rise to a problem as matrix Lie algebra representation are quite practical in computations \cite{GBGH19}. The aim of this section is to provide a method to obtain a matrix Lie algebra isomorphic to $\mathfrak{g}$ when its center, $\mathcal{Z}(\mathfrak{g})$, is not trivial, namely $\mathcal{Z}(\mathfrak{g})\neq 0$, and $\mathfrak{g}$ satisfy some additional conditions. Our method is to be employed in the rest of our work during calculations.

Let $\{e_1,\ldots, e_s, e_{s+1}, \ldots, e_n\}$ be a basis of $\mathfrak{g}$ such that $\{e_1,\ldots, e_s\}$ form a basis for  $\mathcal{Z}(\mathfrak{g})$. Let $c_{ij}^k$, with $i,j,k=1,\ldots,n$, be the structure constants of $\mathfrak{g}$ in the given basis, i.e. $[e_i,e_j]=\sum_{k=1}^nc_{ij}^ke_k$ for $i,j=1,\ldots,n$. Define then $\tilde{\mathfrak{g}} := \langle e_1, \ldots, e_n, e\rangle$. Let us determine the conditions ensuring that $\tilde{\mathfrak{g}}$ is a Lie algebra relative to a Lie bracket, $[\cdot,\cdot]_{\bar{\mathfrak{g}}}$, that is an extension of the Lie bracket on $\mathfrak{g}$, i.e. $[e',e'']_{\bar{\mathfrak{g}}}=[e',e'']_{\mathfrak{g}}$ for every $e',e''\in \mathfrak{g}$, and $[e, e_i]_{\bar{\mathfrak{g}}} = \alpha_i e_i$ for $i=1,\ldots,n$ so that 
\begin{equation}\label{cond1}
\alpha_1 \cdot \alpha_2 \cdot \ldots \cdot \alpha_s \neq 0.
\end{equation}
The meaning of the last condition will be clear in a while. If  $[\cdot,\cdot]_{\bar{\mathfrak{g}}}$ is to be a Lie bracket, its Jacobi identity leads to the following conditions 
\begin{equation}\label{cond2}
\begin{gathered}
[e, [e_i, e_j]_{\bar{\mathfrak{g}}}]_{\bar{\mathfrak{g}}} = [[e,e_i]_{\bar{\mathfrak{g}}},e_j]_{\bar{\mathfrak{g}}} + [e_i, [e,e_j]_{\bar{\mathfrak{g}}}]_{\bar{\mathfrak{g}}} \iff \sum_{k=1}^n\alpha_k c_{ij}^ke_k =\sum_{k=1}^n (\alpha_i + \alpha_j) c_{ij}^ke_k,\qquad \forall i,j=1,\ldots,n, \\
[e, [e, e_j]_{\bar{\mathfrak{g}}}]_{\bar{\mathfrak{g}}} = [[e,e]_{\bar{\mathfrak{g}}},e_j]_{\bar{\mathfrak{g}}} + [e, [e,e_j]_{\bar{\mathfrak{g}}}]_{\bar{\mathfrak{g}}}=[e, [e,e_j]_{\bar{\mathfrak{g}}}]_{\bar{\mathfrak{g}}},\qquad \forall j=1,\ldots,n.
\end{gathered}
\end{equation}
Thus, previous conditions can be reduced to requiring that, for those indices $i,j,k$ satisfying $c_{ij}^k \neq 0$, one gets 
$$
\alpha_i + \alpha_j = \alpha_k.
$$
If the latter condition is satisfied, a new Lie algebra $\tilde{\mathfrak{g}}$ arises. Note that $\mathfrak{g}$ is an ideal of $\tilde{\mathfrak{g}}$. This leads to a Lie algebra morphism $\mathscr{R}:v\in {\mathfrak{g}}\mapsto \mathscr{R}_v:=[v,\cdot]_{\tilde{\mathfrak{g}}}\in \mathfrak{gl}(\tilde{\mathfrak{g}})$. If $\mathscr{R}_v=0$, then $[v,e_i]_{\tilde{\mathfrak{g}}}=0$ for $i=1,\ldots,n$, which means that $v\in \mathcal{Z}(\mathfrak{g})$. Thus, $v = \sum_{i=1}^s \lambda_i e_i$ for certain constants $\lambda_1,\ldots,\lambda_s$ and $0 = [e,v]_{\tilde{\mathfrak{g}}} = \sum_{i=1}^s \lambda_i \alpha_i e_i$. Then, condition (\ref{cond1}) yields $\lambda_1 = \ldots =\lambda_s = 0$ and it turns out that $v=0$.
Hence, the elements $\mathscr{R}_v$, with $v\in \mathfrak{g}$, span a matrix Lie algebra of endomorphisms on $\tilde{\mathfrak{g}}$ isomorphic to $\mathfrak{g}$. 

It is clear that conditions (\ref{cond1}) and (\ref{cond2}) do not need to be satisfied for a general Lie algebra $\mathfrak{g}$ with a nontrivial center. 
Nevertheless, we used in this work that this method works for all indecomposable real four-dimensional Lie algebras with non-trivial center (see \cite{SW14} for a complete list of such Lie algebras). This will be enough for our purposes. 

Let us illustrate our method. Consider the Lie algebra $\mathfrak{s}_{ 1} = \langle e_1, e_2, e_3, e_4\rangle$ with non-vanishing commutation relations $[e_4, e_2] = e_1, [e_4, e_3] = e_3$. Let us construct a new Lie algebra $\tilde{\mathfrak{g}}=\langle e_1,e_2, e_3,e_4,e\rangle$ following (\ref{cond1}) and (\ref{cond2}). This gives rise to the following system of equations:
$$
\alpha_4 + \alpha_2 = \alpha_1, \quad \alpha_4 + \alpha_3 = \alpha_3, \quad \alpha_1 \neq 0.
$$
We have that $\mathcal{Z}(\mathfrak{s}_{1})=\langle e_1\rangle$. The previous system, under the corresponding restriction (\ref{cond1}),  has a solution $\alpha_4 =0, \alpha_3 \in \mathbb{R}, \alpha_2 = \alpha_1, \alpha_1 \in \mathbb{R}\backslash\{0\}$. In particular, set $\alpha_2=\alpha_1 = 1$ and $\alpha_3 = 0$. Thus, the endomorphisms $\mathscr{R}_{e_i}$ on $\tilde{\mathfrak{g}}$ read
{\footnotesize
$$
\mathscr{R}_{e_1} = \left(
\begingroup
\setlength\arraycolsep{4pt}
\begin{array}{ccccc}
0 & 0 & 0 & 0 & -1 \\
0 & 0 & 0 & 0 & 0 \\
0 & 0 & 0 & 0 & 0 \\
0 & 0 & 0 & 0 & 0 \\
0 & 0 & 0 & 0 & 0 
\end{array}
\endgroup
\right),
\mathscr{R}_{e_2} = \left(
\begingroup
\setlength\arraycolsep{4pt}
\begin{array}{ccccc}
0 & 0 & 0 & -1 & 0 \\
0 & 0 & 0 & 0 & -1 \\
0 & 0 & 0 & 0 & 0 \\
0 & 0 & 0 & 0 & 0 \\
0 & 0 & 0 & 0 & 0 
\end{array}
\endgroup
\right),
\mathscr{R}_{e_3} = \left(
\begingroup
\setlength\arraycolsep{4pt}
\begin{array}{ccccc}
0 & 0 & 0 & 0 & 0 \\
0 & 0 & 0 & 0 & 0 \\
0 & 0 & 0 & -1 & 0 \\
0 & 0 & 0 & 0 & 0 \\
0 & 0 & 0 & 0 & 0 
\end{array}
\endgroup
\right),
\mathscr{R}_{e_4} = \left(
\begingroup
\setlength\arraycolsep{4pt}
\begin{array}{ccccc}
0 & 1 & 0 & 0 & 0 \\
0 & 0 & 0 & 0 & 0 \\
0 & 0 & 1 & 0 & 0 \\
0 & 0 & 0 & 0 & 0 \\
0 & 0 & 0 & 0 & 0 
\end{array}
\endgroup
\right).
$$
}
As desired, the above matrices have the same non-vanishing commutation relations as $\mathfrak{s}_{ 1}$, i.e.
$$
[\mathscr{R}_{e_4},\mathscr{R}_{e_2}]=\mathscr{R}_{e_1},\qquad [\mathscr{R}_{e_4},\mathscr{R}_{e_3}]=\mathscr{R}_{e_3}.
$$

Let us prove that not every Lie algebra $\mathfrak{g}$ with non-trivial center admits an extended Lie algebra $\tilde{\mathfrak{g}}$ of the form required by our method. Consider the Lie algebra $\mathfrak{s}_{6,231} = \langle e_1, \ldots, e_6\rangle$ with nonzero commutation relations (see \cite{SW14} for details):
$$
[e_2, e_3] = e_1, \quad [e_5, e_1] = e_1, \quad [e_5, e_2] = e_2, \quad [e_6, e_1] = e_1, \quad [e_6, e_3] = e_3, \quad [e_6, e_5] = e_4.
$$
Thus, $\mathcal{Z}(\mathfrak{s}_{4,231}) = \langle e_4\rangle$. The corresponding system of equations (\ref{cond2}) reads
$$
\alpha_2 + \alpha_3 = \alpha_1, \quad \alpha_5 + \alpha_1 = \alpha_1, \quad \alpha_5 + \alpha_2 = \alpha_2, \quad \alpha_6 + \alpha_1 = \alpha_1, \quad \alpha_6 + \alpha_3 = \alpha_3, \quad \alpha_6 + \alpha_5 = \alpha_4.
$$
Then, $\alpha_5 = 0, \alpha_6 = 0$ and $\alpha_4 = \alpha_5=0$, which contradicts the assumption of our method (\ref{cond2}), namely $\alpha_4 \neq 0$. 

Consider the 7-dimensional Lie algebra $\mathfrak{g} := \langle e_1, \ldots, e_7\rangle$ with $\mathcal{Z}(\mathfrak{g}) = \langle e_7 \rangle$ and nonzero commutation relations (cf. \cite[p. 492, case $7_I$]{Se93})
\begin{equation*}
\begin{gathered}
[e_1, e_2] = e_3, \quad [e_1, e_3] = e_4, \quad [e_1, e_4] = e_5, \quad [e_1, e_6] = e_7, \quad [e_2, e_3] = e_6,\\
[e_2, e_4] = e_7, \quad [e_2, e_5] = e_7, \quad [e_2, e_6] = e_7, \quad [e_3, e_4] = -e_7.
\end{gathered}
\end{equation*}
The corresponding system (\ref{cond2}) reads
$$
\begin{array}{rlrlrl}
{\rm (i)} & \alpha_1 + \alpha_2 = \alpha_3, & {\rm (ii)} & \alpha_1 + \alpha_3 = \alpha_4, & {\rm (iii)} & \alpha_1 + \alpha_4 = \alpha_5, \\
{\rm (iv)} & \alpha_1 + \alpha_6 = \alpha_7, & {\rm (v)} & \alpha_2 + \alpha_3 = \alpha_6, & {\rm (vi)} & \alpha_2 + \alpha_4 = \alpha_7, \\
{\rm (vii)} & \alpha_2 + \alpha_5 = \alpha_7, & {\rm (viii)} & \alpha_2 + \alpha_6= \alpha_7, & {\rm (ix)} & \alpha_3 + \alpha_4 = \alpha_7.
\end{array}
$$
From (vii) and (viii), we  get $\alpha_5 = \alpha_6$; from (vi) and (ix), we get $\alpha_2 = \alpha_3$; and from (iv) and (viii), we get $\alpha_1 = \alpha_2$. Thus, $\alpha_1 = \alpha_2 = \alpha_3$. From (i), it follows that $\alpha_1 = 0$. Thus, $\alpha_2 =  \alpha_3 = 0$. From (ii), we get $\alpha_4 = 0$ and (iii) yields $\alpha_5 = 0$. From (iv), one gets $\alpha_6 = \alpha_7$. From (ix), we get $\alpha_7  =0$, and thus $\alpha_6 = 0$. Therefore, $\alpha_1 = \ldots = \alpha_7 = 0$ is the only possibility, but it does not satisfy (\ref{cond1}).

\section{Darboux families and the Lie algebra $\mathfrak{s}_{1}$}\label{Se:DarFam}

Let us classify solutions to the mCYBE for the Lie algebra $\mathfrak{s}_{1}$ up to its Lie algebra automorphisms. As a byproduct, we introduce a generalisation of the concept of Darboux polynomial of a vector field.

The space of derivations on $\mathfrak{s}_1$ can straightforwardly be obtained. It is indeed the solution of a linear algebra problem that can be easily solved by hand calculation and/or via any mathematical computation program. The same could be achieved for any four, five, or even some other higher-dimensional Lie algebra. In particular, derivations on $\mathfrak{s}_1$ take the form
\begin{equation}\label{Eq:Der1}
\mathfrak{der}(\mathfrak{s}_1)=\left\{\left(
\begin{array}{cccc}
\mu_{11} & \mu_{12} & 0 & \mu_{14} \\
0 & \mu_{11} & 0 & \mu_{24} \\
0 & 0 & \mu_{33} & \mu_{34} \\
0 & 0 & 0 & 0
\end{array}
\right):  \mu_{11}, \mu_{12}, \mu_{33}, \mu_{14}, \mu_{24}, \mu_{34} \in \mathbb{R}\right\}
\end{equation}
in the basis $\{e_{1},e_{2},e_{3},e_{4}\}$ appearing in Table \ref{Tab:StruCons}. 
By  Proposition \ref{prop:derbiv} and, more specifically, Remark \ref{Re:DerAlg}, the Lie algebra $V_{\mathfrak{s}_1}$ of fundamental vector fields of the natural Lie group action of ${\rm Aut}(\mathfrak{s}_{1})$ on $\Lambda^2\mathfrak{s}_{1}$ is spanned by the basis (over the reals)
\begin{equation}\label{fundv_s1}
\begin{gathered}
X_1=2x_1\partial_{x_1}+x_2\partial_{x_2}+x_3\partial_{x_3}+x_4\partial_{x_4}+x_5\partial_{x_5},\qquad X_2=x_4\partial_{x_2}+x_5\partial_{x_3},\qquad X_3= -x_5\partial_{x_1}-x_6\partial_{x_2},\\
X_4=x_3\partial_{x_1}-x_6\partial_{x_4},\qquad X_5=x_2\partial_{x_2}+x_4\partial_{x_4}+x_6\partial_{x_6},\qquad X_6=x_3\partial_{x_2}+x_5\partial_{x_4}.
\end{gathered}
\end{equation}
As $X_1,\ldots,X_6$ close on a finite-dimensional Lie algebra,  Theorem \ref{Th:VGID} shows that they span an integrable generalised distribution $\mathcal{D}^{V_{\mathfrak{s}_1}}$ on $\Lambda^2\mathfrak{s}_{1}$.  In view of (\ref{eq:dec}) the strata of $\mathcal{D}^{V_{\mathfrak{s}_1}}$ are the orbits of the action of the connected part of the identity of ${\rm Aut}(\mathfrak{g})$, let us say ${\rm Aut}_c(\mathfrak{s}_{1})$, on $\Lambda^2\mathfrak{s}_{1}$.

We define $M(p)$ to be a matrix whose entry $(i,j)$ is the $j$-coefficient of $X_i$  at $p\in \Lambda^2\mathfrak{s}_1$ in the basis $\partial_{x_1}|_p,\ldots,\partial_{x_6}|_p$, namely
$$
M(p):=\left(
\begin{array}{cccccc}
2x_1 & x_2 & x_3 & x_4 & x_5 & 0 \\
0 & x_4 & x_5 & 0 & 0 & 0 \\
-x_5 & -x_6 & 0 & 0 & 0 & 0 \\
x_3 & 0 & 0 & -x_6 & 0 & 0 \\
0 & x_2 & 0 & x_4 & 0 & x_6 \\
0 & x_3 & 0 & x_5 & 0 & 0 
\end{array}
\right),\qquad p:=(x_1,\ldots,x_6)\in \Lambda^2\mathfrak{s}_{1}.
$$
The rank of  $\mathcal{D}^{V_{\mathfrak{s}_1}}$ at $p\in \Lambda^2\mathfrak{s}_{1}$ is equal to the rank of $M(p)$. It is simple to calculate that the mCYBE  for $\mathfrak{s}_{1}$ is given by the common zeroes of the functions on $\Lambda^2\mathfrak{s}_1$ of the form
\begin{equation}\label{mCYBEs}
f_1:=x_3 x_4, \quad f_2:=x_3 x_6, \quad f_3:=x_5^2.
\end{equation}
It is immediate that the space of solutions to the mCYBE, let us say $\mathcal{Y}_{\mathfrak{s}_1}$, is not a submanifold in $\Lambda^2\mathfrak{s}_{1}$. 
Nevertheless, 
\begin{equation}\label{tangency}
X_if_j=\sum_{k=1}^3\lambda_{ij}^kf_k,\qquad i=1,\ldots,6,\quad
j=1,2,3,
\end{equation}
for certain constants $\lambda_{ij}^k$, with $i=1,\ldots,6$ and $j,k=1,2,3$. Relations (\ref{tangency}) show that the integral curves of any vector field in $V_{\mathfrak{s}_1}$  passing through $w\in \mathcal{Y}_{\mathfrak{s}_1}$ is contained in $\mathcal{Y}_{\mathfrak{s}_1}$. In fact, the derivative of the functions $f_1,f_2,f_3$ along an integral curve, $\Psi(t)$, of a vector field $X_j\in V_{\mathfrak{s}_1}$ such that $\Psi(0)\in \mathcal{Y}_{\mathfrak{s}_1}$ reads
$$
\frac{df_i(\Psi(t))}{dt}=(X_j f_i)(\Psi(t))=\sum_{k=1}^3\lambda_{ji}^kf_k(\Psi(t)),\qquad i=1,2,3.
$$
Hence, the values of the $f_i(\Psi(t))$, with $i=1,2,3$, can be understood as the solutions to a linear system of first-order ordinary differential equations in normal form with constant coefficients and zero initial condition since $f_i(\Psi(0))=0$ for $i=1,2,3$. Hence, $f_1,f_2,f_3$ vanish along $\Psi(t)$ and, since  (\ref{eq:dec}) shows that the integral curves for all the vector fields in $V$ connect all the points within the same strata of $\mathcal{D}^{V_{\mathfrak{s}_1}}$, one obtains that the functions $f_1,f_2,f_3$ are zero on any strata of $\mathcal{D}^{V_{\mathfrak{s}_1}}$ containing a point where $f_1,f_2,f_3$ vanish. 
It is worth noting that $f_1,f_2,f_3$ are not constants of motion common to all the vector fields of $V_{\mathfrak{s}_1}$. 
We hereafter call $\langle f_1,f_2,f_3\rangle$ a {\it Darboux family for the Lie algebra of vector fields} $\langle X_1,\ldots,X_6\rangle $ on $\Lambda^2\mathfrak{s}_{1}$. More generally, we propose the following definition and we extend previous results to a more general realm. 

\begin{definition}
We call an $s$-dimensional linear space of functions $\mathcal{A}:=\langle f_1,\ldots,f_s\rangle $ on a manifold $M$ a {\it Darboux family} for a Vessiot--Guldberg Lie algebra $V$ on $M$ if, for every $X\in V$ and $f_j$, with $j=1,\ldots,s$, one can write
$$
Xf_j=\sum_{i=1}^sh_{j X}^i f_i,
$$
for a  certain family of smooth functions,  $h_{jX}^i$, with $i=1,\ldots,s$, on $M$, the so-called {\it co-factors} of $f_j$ relative to $X$ and the basis $\{f_1,\ldots,f_s\}$. The subset $\mathcal{S}_{\mathcal{A}}:= \{p \in M: f(p)=0, \forall f \in \mathcal{A}\}$ is called the {\it locus} of the Darboux family $\mathcal{A}$.
\end{definition}

It is worth stressing that we require the functions $f_1,\ldots,f_s$ and $h^i_{jX}$, for $i,j=1,\ldots,s$  and $X\in V$, in the above definition to be smooth. 
If all the functions $h_{jX}^i$ are equal to zero for every $X \in V$ and $i = 1, \ldots, s$, then $f_j $ becomes a constant of motion for the vector fields of $V$ on $M$. If the $h_{jX}^i$ are constants for every $X \in V$ and $i,j = 1, \ldots, s$, then we say that the Darboux family is {\it linear}. In this case, $V$ gives rise to a Lie algebra representation $\rho:X\in V\mapsto D_X\in {\rm End}(\mathcal{A})$, where $D_X$ stands for the action of the vector field $X$ on the space of functions $\mathcal{A}$. 

The vector fields of $V$ span a generalised distribution, which is integrable and leads  to a stratification of $M$ by some disjoint immersed submanifolds which may have different dimensions. Darboux families are interesting to us because the integral curves of vector fields in $V$ passing through the set $\mathcal{S}_\mathcal{A}$ of common zeroes for the elements of  $\mathcal{A}$ remain within it. Let us prove and analyse this fact, which represents a rather simple generalisation of the argument given for $\mathfrak{s}_1$. Consider a point in the locus $\mathcal{S}_{\mathcal{A}}$ and consider a basis $Y_1,\ldots,Y_q$ of the Lie algebra $V$ of vector fields. Consider an integral curve $\Psi(t)$ of a vector field of $V$ passing through a point   $\Psi(0)\in \mathcal{S}_\mathcal{A}$. Then, the time derivative  of the functions $f_1(\Psi(t)),\ldots,f_s(\Psi(t))$ is 
$$
\frac{df_i(\Psi(t))}{dt}=(Xf_i)\Psi(t)=\sum_{i=1}^sh_{iX}^j(\Psi(t))f_j(\Psi(t)),\qquad i=1,\ldots,s.
$$
Hence, the values of the $f_i(\Psi(t))$ can be understood as the solutions to the linear system of first-order ordinary differential equations in normal form with $t$-dependent coefficients 
$$
\frac{du_i}{dt}=\sum_{j=1}^sh^j_{iX}(\Psi(t))u_j,\qquad i=1,\ldots,s,
$$
with initial condition $u_1=\ldots=u_s=0$. By the existence and uniqueness theorem, the solution to the previous Cauchy problem is $u_1(t)=\ldots=u_s(t)=0$. Therefore, the functions $f_1,\ldots,f_s$ vanish on the integral curves of $X$. From this and the decomposition (\ref{eq:dec}), we can infer that the functions $f_1,\ldots,f_s$ vanish on the strata of the distribution $\mathcal{D}^V$ containing a point within $\mathcal{S}_\mathcal{A}$. In other words, the strata of $\mathcal{D}^V$ containing some point of $\mathcal{S}_A$ are fully contained in $\mathcal{S}_A$.

In view of the above, Darboux families can be used to reduce the determination of the strata in $M$ of the generalised distribution $\mathcal{D}^V$ to determining the strata within  ${\mathcal{S}_\mathcal{A}}$ and out of  ${\mathcal{S}_\mathcal{A}}$. This will be specially interesting to obtain the strata of $\mathcal{D}^V$ at points where the generalised distribution is not regular and it may not exist a constant of motion common to all the vector fields in $V$ that can be used to obtain the strata of $\mathcal{D}^V$. Note also that if $\langle f_1,\ldots,f_s\rangle $ form a Darboux family for $V$, then the set of common zeroes of $\langle f_1,\ldots,f_s\rangle $ must be the sum (as subsets of $M$) of a collection of strata of the generalised distribution $\mathcal{D}^V$.

Let us study how to use Darboux families to study solutions to  mCYBEs and CYBEs. Let us start by a general result.

 \begin{theorem}\label{Th:ConsDar32} Let $\mathcal{A}^{(3)}_{\mathfrak{g}}$ be a Darboux family for $V_{\mathfrak{g}}^{(3)}$ on $\Lambda^3\mathfrak{g}$. Then, the space of functions
 \begin{equation}\label{Eq:Ind}
 \mathcal{A}: = \left\{g \in C^{\infty}(\Lambda^2 \mathfrak{g}): \, g = f \circ [\cdot, \cdot], \, f\in \mathcal{A}_{\mathfrak{g}}^{(3)} \right\}
 \end{equation}
 is a Darboux family for $V_{\mathfrak{g}}$ on $\Lambda^2\mathfrak{g}$. If $\mathcal{A}_{\mathfrak{g}}^{(3)}$ is a linear Darboux family, then $\mathcal{A}$ is also a linear Darboux family.
 \end{theorem}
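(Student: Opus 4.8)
The plan is to exploit the fact that the composition in \eqref{Eq:Ind} is taken with the quadratic map $\Phi\colon r\in\Lambda^2\mathfrak{g}\mapsto [r,r]\in\Lambda^3\mathfrak{g}$, and to show that $\Phi$ intertwines the natural $\mathrm{Aut}(\mathfrak{g})$-actions on $\Lambda^2\mathfrak{g}$ and $\Lambda^3\mathfrak{g}$. Once this equivariance is established, the Darboux condition for $\mathcal{A}$ will follow from the Darboux condition for $\mathcal{A}^{(3)}_{\mathfrak{g}}$ by a single application of the chain rule. First I would observe that $f\mapsto f\circ\Phi$ is linear, so $\mathcal{A}$ is automatically a finite-dimensional subspace of $C^\infty(\Lambda^2\mathfrak{g})$, spanned by the functions $g_j:=f_j\circ\Phi$ where $\{f_1,\dots,f_s\}$ is a basis of $\mathcal{A}^{(3)}_{\mathfrak{g}}$. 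It therefore remains only to check that $\mathcal{A}$ is invariant under every element of $V_{\mathfrak{g}}$ and to exhibit smooth co-factors realising this invariance.

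The key step, and the only nontrivial one, is the equivariance of $\Phi$. Every $T\in\mathrm{Aut}(\mathfrak{g})$ induces an automorphism $\Lambda T$ of the Grassmann algebra $\Lambda\mathfrak{g}$ which, preserving both the exterior product and the Lie bracket on $\mathfrak{g}=\Lambda^1\mathfrak{g}$, also preserves the algebraic Schouten bracket (cf.\ the $\mathfrak{g}$-module structure on $\Lambda\mathfrak{g}$ recalled above and \cite{LW20,Va94}). Restricting to the relevant degrees yields $\Lambda^3 T[r,r]=[\Lambda^2 T\,r,\Lambda^2 T\,r]$, that is, $\Phi\circ\Lambda^2 T=\Lambda^3 T\circ\Phi$ for all $T\in\mathrm{Aut}(\mathfrak{g})$. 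Setting $T=\exp(td)$ for $d\in\mathfrak{der}(\mathfrak{g})$ and differentiating at $t=0$, using \eqref{Eq:Red} together with the identifications $T_w\Lambda^m\mathfrak{g}\simeq\Lambda^m\mathfrak{g}$, I would obtain
\begin{equation*}
\mathrm{d}\Phi_r\big(X^2_d(r)\big)=X^3_d\big(\Phi(r)\big),\qquad \forall r\in\Lambda^2\mathfrak{g},
\end{equation*}
where $X^2_d(w)=(\Lambda^2 d)(w)$ and $X^3_d(w)=(\Lambda^3 d)(w)$ are the fundamental vector fields generating $V_{\mathfrak{g}}$ and $V^{(3)}_{\mathfrak{g}}$, respectively. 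In other words, $X^2_d$ and $X^3_d$ are $\Phi$-related for every derivation $d$.

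With the $\Phi$-relatedness in hand, the conclusion is immediate. For each generator $g_j=f_j\circ\Phi$ and each $d\in\mathfrak{der}(\mathfrak{g})$, the chain rule gives $(X^2_d g_j)(r)=\mathrm{d}(f_j)_{\Phi(r)}\big(\mathrm{d}\Phi_r(X^2_d(r))\big)=(X^3_d f_j)(\Phi(r))$. Since $\mathcal{A}^{(3)}_{\mathfrak{g}}$ is a Darboux family for $V^{(3)}_{\mathfrak{g}}$, one has $X^3_d f_j=\sum_{i=1}^s h^i_{j}f_i$ for smooth co-factors $h^i_j$ on $\Lambda^3\mathfrak{g}$, whence
\begin{equation*}
X^2_d g_j=\sum_{i=1}^s\big(h^i_{j}\circ\Phi\big)\,g_i.
\end{equation*}
Thus $\mathcal{A}$ is invariant under every $X^2_d$, and since these span $V_{\mathfrak{g}}$ it is a Darboux family, with smooth co-factors $h^i_j\circ\Phi$. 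If $\mathcal{A}^{(3)}_{\mathfrak{g}}$ is linear, the $h^i_j$ are constants, and then $h^i_j\circ\Phi$ are the same constants, so $\mathcal{A}$ is linear as well. The main obstacle is precisely the equivariance established in the second paragraph; everything else is routine bookkeeping. A minor point to treat with care is that $\{g_1,\dots,g_s\}$ need not be linearly independent even when $\{f_1,\dots,f_s\}$ is, so strictly one should pass to a basis of $\mathcal{A}=\langle g_1,\dots,g_s\rangle$; the invariance computation above is unaffected by this reduction.
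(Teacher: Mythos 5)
Your proposal is correct and follows essentially the same route as the paper: the core of both arguments is the equivariance $[\Lambda^2T\,r,\Lambda^2T\,r]=\Lambda^3T[r,r]$, which the paper exploits directly by differentiating along the flow $\Lambda^2T_t$ while you phrase it as $\Phi$-relatedness of the fundamental vector fields before applying the chain rule — the same computation in slightly different packaging. Your closing remark that the generators $f_j\circ\Phi$ need not remain linearly independent is a small point the paper leaves implicit, but it does not change the argument.
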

 \begin{proof}
Since each $X^{(2)} \in V_{\mathfrak{g}}$ is a fundamental field of the action of ${\rm Aut}(\mathfrak{g})$ on $\Lambda^2 \mathfrak{g}$, its flow is given by the one-parametric group of diffeomorphisms $\Lambda^2 T_t$ induced by a certain one-parametric group of Lie algebra automorphisms $T_t \in {\rm Aut}(\mathfrak{g})$, $t \in \mathbb{R}$. Every $g \in \mathcal{A}$ is of the form $g(r) = f([r, r])$ for some $f \in \mathcal{A}_{\mathfrak{g}}^{(3)}$ and every $r\in \Lambda^2\mathfrak{g}$. Then,
 \begin{multline*}
 (X^{(2)}g)(r) = \frac{d}{dt}\bigg\vert_{t = 0} g(\Lambda^2 T_t (r)) = \frac{d}{dt}\bigg\vert_{t = 0} f([\Lambda^2 T_t (r), \Lambda^2 T_t (r)]) \\= \frac{d}{dt}\bigg\vert_{t = 0} f(\Lambda^3 T_t [r, r]) = (X^{(3)} f)([r, r]) = \sum_{i = 1}^{r} h^i([r,r])f_i([r, r]),
 \end{multline*}
 where $f_1, \ldots, f_s$ form a basis of the Darboux family $\mathcal{A}_{\mathfrak{g}}^{(3)}$, the functions $h^1,\ldots,h^s$ are the cofactors of $f$ relative to $X^{(3)}$ and the basis $\{f_1,\ldots,f_s\}$, and $X^{(3)}$ is the fundamental vector field of the action of ${\rm Aut}(\mathfrak{g})$ on $\Lambda^3\mathfrak{g}$ induced by the one-parameter group $\{T_t\}_{t\in \mathbb{R}}$ of Lie algebra automorphisms of $\mathfrak{g}$. Since $f_1([r, r]),\ldots,f_s([r,r])$ belong to $\mathcal{A}$ and $h^1([r,r]),\ldots,h^s([r,r])$ are functions on $\Lambda^2\mathfrak{g}$, the $\mathcal{A}$ forms a Darboux family for $V_{\mathfrak{g}}$ on $\Lambda^2 \mathfrak{g}$.
 
 If $\mathcal{A}^{(3)}_{\mathfrak{g}}$ is a linear Darboux family, then $h^1,\ldots,h^s$ are constants. Therefore, the cofactors $h^1([r,r]),\ldots,h^s([r,r])$ are also constants, which makes $\mathcal{A}$ into a linear Darboux family for $V_{\mathfrak{g}}$ on $\Lambda^2\mathfrak{g}$.
 \end{proof}
 
 In short, the following proposition shows that the solutions to the mCYBE on $\mathfrak{g}$ can be considered as the locus of a Darboux family relative to $V_{\mathfrak{g}}$. A similar result can be applied to the CYBE on any Lie algebra $\mathfrak{g}$. 

 \begin{proposition}  Let $((\Lambda^3\mathfrak{g})^\mathfrak{g})^\circ$ be the annihilator of $(\Lambda^3\mathfrak{g})^\mathfrak{g}$, i.e. the subspace of elements of $(\Lambda^3\mathfrak{g})^*$ vanishing on $(\Lambda^3\mathfrak{g})^\mathfrak{g}$.
The functions
\begin{equation}\label{Eq:ParFam}
f_\upsilon:r\in \Lambda^2\mathfrak{g}\mapsto \upsilon([r,r])\in \mathbb{R},\qquad \upsilon\in ((\Lambda^3\mathfrak{g})^\mathfrak{g})^\circ,
\end{equation}
span a linear Darboux family, $\mathcal{A}$, for the Lie algebra $V_\mathfrak{g}$ of fundamental vector fields of the action of ${\rm Aut}(\mathfrak{g})$ on $\Lambda^2\mathfrak{g}$. Its locus is the set of solutions to the mCYBE for $\mathfrak{g}$. Moreover, the components of the CYBE in a coordinate system given by a basis of $\Lambda^2\mathfrak{g}^*$ span also a linear Darboux family for $V_{\mathfrak{g}}$ on $\Lambda^2\mathfrak{g}$.  
\end{proposition}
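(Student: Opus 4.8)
The plan is to exhibit $\mathcal{A}$ as the family obtained from a linear Darboux family on $\Lambda^3\mathfrak{g}$ through the quadratic map $[\cdot,\cdot]\colon r\mapsto[r,r]$, so that Theorem \ref{Th:ConsDar32} does most of the work. Write $W:=(\Lambda^3\mathfrak{g})^\mathfrak{g}$. Each $\upsilon\in W^\circ$ is by definition a linear functional on $\Lambda^3\mathfrak{g}$, i.e. an element of $(\Lambda^3\mathfrak{g})^*$, and $f_\upsilon=\upsilon\circ[\cdot,\cdot]$. Hence it suffices to prove that $W^\circ$, regarded as a space of linear functions on $\Lambda^3\mathfrak{g}$, is a linear Darboux family for $V^{(3)}_\mathfrak{g}$; Theorem \ref{Th:ConsDar32} then immediately yields that $\mathcal{A}$ is a linear Darboux family for $V_\mathfrak{g}$ on $\Lambda^2\mathfrak{g}$.

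To check the Darboux condition on $\Lambda^3\mathfrak{g}$, recall from Proposition \ref{prop:derbiv} and Remark \ref{Re:DerAlg} that every $X^{(3)}\in V^{(3)}_\mathfrak{g}$ is the linear vector field associated with $\Lambda^3 d$ for some $d\in\mathfrak{der}(\mathfrak{g})$, so at a point $w$ it equals $(\Lambda^3 d)(w)$. Acting on a linear function $\upsilon\in(\Lambda^3\mathfrak{g})^*$ this gives $X^{(3)}\upsilon=\upsilon\circ\Lambda^3 d=(\Lambda^3 d)^*\upsilon$, again a linear function. Thus $W^\circ$ is a linear Darboux family precisely when it is invariant under every transpose $(\Lambda^3 d)^*$, which is equivalent to $W=(\Lambda^3\mathfrak{g})^\mathfrak{g}$ being invariant under every $\Lambda^3 d$; fixing a basis of $W^\circ$ then turns this invariance into constant cofactors, as required for \emph{linearity}.

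The crux, and the only place where the Lie-algebraic structure really enters, is the invariance of $W$ under $\Lambda^3 d$ for $d\in\mathfrak{der}(\mathfrak{g})$. I would prove it from two standard facts: first, the derivation extension $T\mapsto\Lambda^3 T$ is a Lie algebra homomorphism $\mathfrak{gl}(\mathfrak{g})\to\mathfrak{gl}(\Lambda^3\mathfrak{g})$, so that $[\Lambda^3{\rm ad}_v,\Lambda^3 d]=\Lambda^3[{\rm ad}_v,d]$; second, the derivation identity $[d,{\rm ad}_v]={\rm ad}_{d(v)}$, which gives $[\Lambda^3{\rm ad}_v,\Lambda^3 d]=-\Lambda^3{\rm ad}_{d(v)}$. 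Since $W=\bigcap_{v\in\mathfrak{g}}\ker\Lambda^3{\rm ad}_v$, for $w\in W$ one then computes
$$
\Lambda^3{\rm ad}_v\bigl(\Lambda^3 d\,(w)\bigr)=\Lambda^3 d\bigl(\Lambda^3{\rm ad}_v(w)\bigr)-\Lambda^3{\rm ad}_{d(v)}(w)=0
$$
for every $v\in\mathfrak{g}$, whence $\Lambda^3 d\,(w)\in W$. This is the main obstacle and the step most likely to hide sign or bracket-convention subtleties; everything else is formal.

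With the invariance in hand, the defining condition of the locus, $f_\upsilon(r)=0$ for all $\upsilon\in W^\circ$, reads $\upsilon([r,r])=0$ for all $\upsilon\in W^\circ$, which by the finite-dimensional double-annihilator identity $(W^\circ)^\circ=W$ is equivalent to $[r,r]\in W=(\Lambda^3\mathfrak{g})^\mathfrak{g}$, i.e. to the mCYBE. Finally, for the CYBE I would repeat the argument with the whole dual space $(\Lambda^3\mathfrak{g})^*$ in place of $W^\circ$: it is trivially invariant under every $(\Lambda^3 d)^*$, so by Theorem \ref{Th:ConsDar32} the pullbacks $\eta\circ[\cdot,\cdot]$, for $\eta$ ranging over a basis of $(\Lambda^3\mathfrak{g})^*$ — equivalently, the components of $r\mapsto[r,r]$ expressed in the chosen linear coordinates — form a linear Darboux family whose locus is $\{r:[r,r]=0\}$, the set of solutions to the CYBE.
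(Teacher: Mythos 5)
Your proof is correct and follows the same skeleton as the paper's: both reduce the statement to showing that $((\Lambda^3\mathfrak{g})^\mathfrak{g})^\circ$ is a linear Darboux family for $V^{(3)}_\mathfrak{g}$ on $\Lambda^3\mathfrak{g}$ and then invoke Theorem \ref{Th:ConsDar32}, and both handle the CYBE part by replacing the annihilator with all of $(\Lambda^3\mathfrak{g})^*$. The difference lies in how the key invariance is established. The paper works at the group level: it writes the flow of $X^{(3)}$ as $\Lambda^3 T_t$ for a one-parameter group $T_t\in{\rm Aut}(\mathfrak{g})$ and differentiates $f((\Lambda^3T_t)(w))$ at $t=0$, relying on the stated-but-not-proved fact that $(\Lambda^3\mathfrak{g})^\mathfrak{g}$ is preserved by the extensions of Lie algebra automorphisms. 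You instead work infinitesimally: you prove directly that $\Lambda^3 d$ preserves $(\Lambda^3\mathfrak{g})^\mathfrak{g}$ for every derivation $d$, using that the derivation-type extension $T\mapsto\Lambda^3T$ is a Lie algebra homomorphism together with $[d,{\rm ad}_v]={\rm ad}_{d(v)}$. Your computation checks out (the sign is right: $[\Lambda^3{\rm ad}_v,\Lambda^3 d]=\Lambda^3[{\rm ad}_v,d]=-\Lambda^3{\rm ad}_{d(v)}$, and both resulting terms vanish on $w\in(\Lambda^3\mathfrak{g})^\mathfrak{g}$ because $w$ is killed by $\Lambda^3{\rm ad}_u$ for every $u$, in particular $u=d(v)$). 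This buys you a self-contained proof of the invariance the paper takes as known, and it never requires passing from derivations to automorphisms. Your explicit appeal to the double-annihilator identity $(W^\circ)^\circ=W$ to identify the locus with the mCYBE solution set is also a welcome detail that the paper leaves implicit.
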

\begin{proof} Let us prove that the annihilator of $(\Lambda^3\mathfrak{g})^{\mathfrak{g}}$ is a Darboux family for $V_{\mathfrak{g}}^{(3)}$ on $\Lambda^3\mathfrak{g}$. In fact, if $X^{(3)} \in V_{\mathfrak{g}}^{(3)}$, then its flow is given by a one-parameter group of diffeomorphisms of the form $\Lambda^3 T_t$ for certain Lie algebra automorphisms $T_t\in {\rm Aut}(\mathfrak{g})$ with $t\in \mathbb{R}$. Consider a function $f \in \Lambda^3 \mathfrak{g}^*$ such that $f(w) = 0$ for every $w \in (\Lambda^3\mathfrak{g})^{\mathfrak{g}}$, i.e. $f\in ((\Lambda^3\mathfrak{g})^\mathfrak{g})^\circ$. Then,
 $$
 (X^{(3)}f)(w) = \frac{d}{dt}\bigg\vert_{t = 0} f((\Lambda^3 T_t)( w)) = \frac{d}{dt}\bigg\vert_{t = 0} f\left(w_t \right) =
 0,
 $$
where we have used  that $(\Lambda^3\mathfrak{g})^{\mathfrak{g}}$ is closed under the extension to $\Lambda^3\mathfrak{g}$ of Lie algebra automorphisms of $\mathfrak{g}$ and thus $(\Lambda^3T_t)(w) = :w_t\in (\Lambda^3\mathfrak{g})^{\mathfrak{g}}$. Hence, $X^{(3)}f$ vanishes on $(\Lambda^3\mathfrak{g})^{\mathfrak{g}}$. Since $X^{(3)}$ is a linear vector field on $\Lambda^3\mathfrak{g}$ and $f$ is a linear function on $\Lambda^3\mathfrak{g}$, one obtains that $X^{(3)}f$ is a linear function on $\Lambda^3\mathfrak{g}$ vanishing on $(\Lambda^3\mathfrak{g})^{\mathfrak{g}}$. Hence, $X^{(3)}f$ is a linear combination of elements of a basis of $((\Lambda^3\mathfrak{g})^{\mathfrak{g}})^\circ$ and $((\Lambda^3\mathfrak{g})^{\mathfrak{g}})^\circ$ becomes a linear Darboux family of $V_{\mathfrak{g}}^{(3)}$ on $\Lambda^3\mathfrak{g}$. By Theorem \ref{Th:ConsDar32}, the space of functions (\ref{Eq:ParFam}) 
becomes a linear Darboux family for $V_{\mathfrak{g}}$ on $\Lambda^2\mathfrak{g}$. The locus of (\ref{Eq:ParFam}) is the space of solutions of the mCYBE for $\mathfrak{g}$. 

Since every $X^{(3)}\in V^{(3)}_{\mathfrak{g}}$ is linear vector field on a linear coordinate system on $\Lambda^3 \mathfrak{g}$, the space $\Lambda^3\mathfrak{g}^*$ is also a Darboux family for $V_{\mathfrak{g}}^{(3)}$ on $\Lambda^3\mathfrak{g}$. By Theorem \ref{Th:ConsDar32}, one has that the induced  by (\ref{Eq:Ind})  family of functions on $\Lambda^2\mathfrak{g}$ is a linear Darboux family relative to $V_{\mathfrak{g}}$ on $\Lambda^2\mathfrak{g}$. The Darboux family is indeed spanned by the components of the CYBE of $\mathfrak{g}$, i.e. $[r,r]=0$, and its locus is its set of solutions.
\end{proof}

A natural question is how to determine Darboux families. Here, we provide several results about them that will be useful to as so as to derive new Darboux families from known ones.

\begin{proposition} If $\mathcal{A}_1$ and $\mathcal{A}_2$ are Darboux families for the same Vessiot--Guldberg Lie algebra $V$ on a manifold $M$, then the sum $\mathcal{A}_1+\mathcal{A}_2$ is a Darboux family for $V$ on $M$ as well.
\end{proposition}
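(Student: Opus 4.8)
The plan is to reduce the statement to a basis-free reformulation of the Darboux condition and then exploit the additivity of the action of $V$. First I would observe that the defining property of a Darboux family is independent of the chosen basis. Let $N(\mathcal{A})$ denote the $C^\infty(M)$-submodule of $C^\infty(M)$ generated by the functions of a finite-dimensional linear space $\mathcal{A}$. By $\mathbb{R}$-linearity, the condition $Xf_j=\sum_i h^i_{jX}f_i$ imposed on a basis $\{f_1,\ldots,f_s\}$ of $\mathcal{A}$ is equivalent to requiring $Xf\in N(\mathcal{A})$ for every $f\in\mathcal{A}$ and every $X\in V$. Indeed, any spanning set of $\mathcal{A}$ generates the same module $N(\mathcal{A})$, so this reformulation does not depend on the basis, and it is the form of the condition most convenient for the argument.

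Next, writing $\mathcal{A}_1=\langle f_1,\ldots,f_p\rangle$ and $\mathcal{A}_2=\langle g_1,\ldots,g_q\rangle$, I would note that $\mathcal{A}_1+\mathcal{A}_2$ is a finite-dimensional linear subspace of $C^\infty(M)$, spanned by $\{f_1,\ldots,f_p,g_1,\ldots,g_q\}$, and that the module it generates satisfies $N(\mathcal{A}_1+\mathcal{A}_2)=N(\mathcal{A}_1)+N(\mathcal{A}_2)$, since a spanning set of the sum is obtained by concatenating spanning sets of the summands.

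The core step is then a one-line verification. Given $X\in V$ and $h\in\mathcal{A}_1+\mathcal{A}_2$, decompose $h=\phi+\psi$ with $\phi\in\mathcal{A}_1$ and $\psi\in\mathcal{A}_2$. Since $\mathcal{A}_1$ and $\mathcal{A}_2$ are Darboux families for $V$, we have $X\phi\in N(\mathcal{A}_1)$ and $X\psi\in N(\mathcal{A}_2)$, whence $Xh=X\phi+X\psi\in N(\mathcal{A}_1)+N(\mathcal{A}_2)=N(\mathcal{A}_1+\mathcal{A}_2)$. By the reformulation of the first paragraph, this shows that $\mathcal{A}_1+\mathcal{A}_2$ is a Darboux family for $V$ on $M$.

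To match the definition verbatim I would, if desired, fix a genuine basis $\{e_1,\ldots,e_r\}$ of $\mathcal{A}_1+\mathcal{A}_2$ extracted from the generators $\{f_i,g_j\}$, expand each $f_i$ and $g_j$ in that basis with constant coefficients, and substitute into the smooth-coefficient expressions for $X\phi$ and $X\psi$, thereby producing the cofactors of each $e_m$ explicitly. The only point requiring minor care—hence the closest thing to an obstacle—is that the concatenated generating set $\{f_i\}\cup\{g_j\}$ need not be linearly independent, so one cannot simply take it as a basis of the sum; passing to the generated module, or extracting a basis as above, sidesteps this entirely, and no genuine difficulty remains.
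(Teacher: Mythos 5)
Your proof is correct; the paper itself gives no argument here (it simply declares the proof ``immediate''), and your direct verification via the $C^\infty(M)$-modules generated by $\mathcal{A}_1$ and $\mathcal{A}_2$ is exactly the intended computation. Your final remark—that the concatenated spanning set $\{f_i\}\cup\{g_j\}$ need not be linearly independent, so one must extract a basis or pass to the generated module—is the one genuine (if minor) point the paper's dismissal glosses over, and you handle it properly.
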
 

The proof of the previous result is immediate. Obviously, the locus of $\mathcal{A}_1+\mathcal{A}_2$ is contained in the locus of $\mathcal{A}_1$ and $\mathcal{A}_2$. This will be interesting in next sections to distinguish between  different strata of the generalised distribution spanned by the vector fields of $V_{\mathfrak{g}}$ on $\mathcal{Y}_{\mathfrak{g}}$. 

In next sections, it will be relevant to study the linear Darboux families for a certain $V_{\mathfrak{g}}$ given by a one-dimensional $\mathcal{A}\subset (\Lambda^2\mathfrak{g})^*$. This is due to the fact that, as seen in Section \ref{Sec:Cla}, the sum of such Darboux families as vector spaces will generate new useful Darboux families.  We will call such a linear one-dimensional Darboux family a {\it brick}. Bricks are one-dimensional representations of $V_\mathfrak{g}$ on $\Lambda^2\mathfrak{g}^*$. 

Note that, for every linear Darboux family for $V_{\mathfrak{g}}$, bricks  are easy to obtain since, in view of (\ref{Eq:Red}), they are given by the intersection of an eigenvector space for each endomorphism of the form $\Lambda^2d\in \mathfrak{gl}(\Lambda^2\mathfrak{g})$ with  $d\in \mathfrak{der}(\mathfrak{g})$.

Let us give another interesting proposition, which is an immediate consequence of the fact that the vector fields of $V_{\mathfrak{g}}$ are linear relative to a linear coordinate set on $\Lambda^2\mathfrak{g}$.

\begin{proposition} The space $\Lambda^2\mathfrak{g}^*$ is a linear Darboux family of functions on  $\Lambda^2\mathfrak{g}$ relative to the Lie algebra $V_{\mathfrak{g}}$.
\end{proposition}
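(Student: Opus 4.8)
The plan is to exploit the fact, already recorded in Remark \ref{Re:DerAlg}, that every fundamental vector field $X\in V_{\mathfrak{g}}$ is a \emph{linear} vector field on $\Lambda^2\mathfrak{g}$. Concretely, I would fix a linear coordinate system $x_1,\ldots,x_s$ on $\Lambda^2\mathfrak{g}$ associated with a basis $\{f_1,\ldots,f_s\}$ of $\Lambda^2\mathfrak{g}^*$, so that each $f_j$ is identified with the coordinate $x_j$. By Remark \ref{Re:DerAlg} each $X\in V_{\mathfrak{g}}$ reads $X(w)=\sum_{i=1}^s[\Upsilon^2_w]_i\,\partial_{x_i}$, and from (\ref{Eq:Red}) its components $[\Upsilon^2_w]_i=[(\Lambda^2 d)(w)]_i$ depend linearly on $w$, since $\Lambda^2 d$ is a linear endomorphism of $\Lambda^2\mathfrak{g}$ for $d\in\mathfrak{der}(\mathfrak{g})$. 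Hence $X=\sum_{i,k=1}^s a_{ik}x_k\,\partial_{x_i}$ for certain constants $a_{ik}$ determined by $d$.

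Next I would verify that $V_{\mathfrak{g}}$ preserves $\Lambda^2\mathfrak{g}^*$, regarded as the space of linear functions on $\Lambda^2\mathfrak{g}$. For an arbitrary basis element $f_j=x_j$ and any $X\in V_{\mathfrak{g}}$ written as above, a one-line computation gives
\[
Xf_j=\sum_{i,k=1}^s a_{ik}x_k\,\partial_{x_i}x_j=\sum_{k=1}^s a_{jk}x_k=\sum_{k=1}^s a_{jk}f_k,
\]
which is again an element of $\Lambda^2\mathfrak{g}^*$. This exhibits the cofactors of $f_j$ relative to $X$ and the basis $\{f_1,\ldots,f_s\}$ as the constants $h^k_{jX}=a_{jk}$. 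Matching this against the definition of a Darboux family, I would conclude that $\Lambda^2\mathfrak{g}^*$ is closed under the action of every $X\in V_{\mathfrak{g}}$ and therefore constitutes a Darboux family; since all the cofactors are constant rather than merely smooth, it is in fact a \emph{linear} Darboux family for $V_{\mathfrak{g}}$ on $\Lambda^2\mathfrak{g}$.

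There is essentially no obstacle here: the only point requiring genuine care is the assertion that the components of the fundamental vector fields are linear in the coordinates, which is precisely the content of (\ref{Eq:Red}) together with the linearity of the map $w\mapsto(\Lambda^2 d)(w)$. Once this is granted, the closure of $\Lambda^2\mathfrak{g}^*$ under Lie differentiation is the elementary observation that a linear vector field sends linear functions to linear functions, which is why the statement is described as an immediate consequence.
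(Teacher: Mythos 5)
Your proposal is correct and follows exactly the argument the paper intends: the paper dispatches this proposition as an immediate consequence of the linearity of the vector fields of $V_{\mathfrak{g}}$ in a linear coordinate system on $\Lambda^2\mathfrak{g}$, and your computation $Xf_j=\sum_{k}a_{jk}f_k$ with constant cofactors is just that observation written out explicitly. Nothing is missing and nothing differs in substance from the paper's reasoning.
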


Since $\Lambda^2 \mathfrak{g}^*$ is a linear Darboux family of $V_{\mathfrak{g}}$, there exists a linear representation of $\mathfrak{aut}(\mathfrak{g})$ on $
\Lambda^2\mathfrak{g}^*$. Its irreducible representations also give rise to Darboux families, which can be summed (as linear spaces) to the Darboux family, or potentially elements thereof, associated with the mCYBE to determine new Darboux families. The locus of such sums will be interested to as when they contain elements of  $\mathcal{Y}_{\mathfrak{g}}$. This will be employed to obtain the strata of $V_{\mathfrak{g}}$ within $\mathcal{Y}_{\mathfrak{g}}$ and, therefore, inequivalent $r$-matrices relative to the action of ${\rm Aut}_c(\mathfrak{g})$ on $\Lambda^2\mathfrak{g}$. This in turn will be used to obtain families of inequivalent $r$-matrices and coboundary cocommutators for real four-dimensional indecomposable Lie algebras \cite{SW14} in Section \ref{Sec:Cla}.

\section{Geometric structure of solutions to mCYBEs and Darboux families}\label{Se:GSrMatDarFam}

Let us study several details on the geometry of the space of solutions to modified and non-modified classical Yang-Baxter equations, the problem of classification up to Lie algebra automorphisms of coboundary Lie bialgebras, and Darboux families. As previously, we hereafter write $V_{\mathfrak{g}}$ for the Lie algebra of fundamental vector fields of the action of ${\rm Aut}(\mathfrak{g})$ on $\Lambda^2\mathfrak{g}$, whilst $\mathscr{E}_\mathfrak{g}$ stands for the distribution on $\Lambda^2\mathfrak{g}$ spanned  by the vector fields of $V_{\mathfrak{g}}$. We recall that we have defined $\Lambda^2_R\mathfrak{g}:=\Lambda^2\mathfrak{g}/(\Lambda^2\mathfrak{g})^\mathfrak{g}$ and $\pi_\mathfrak{g}:w\in \Lambda^2\mathfrak{g}\mapsto [w]\in \Lambda^2_R\mathfrak{g}$ stands for the canonical projection onto the quotient space  $\Lambda_R^2\mathfrak{g}$. As previously, we hereafter write $\mathcal{Y}_\mathfrak{g}$ for the space of $r$-matrices of $\mathfrak{g}$.

The action of ${\rm Aut}(\mathfrak{g})$ on $\Lambda^2\mathfrak{g}$  induces an action of ${\rm Aut}(\mathfrak{g})$ on $\Lambda^2_R\mathfrak{g}$ of the form (see \cite[Lemma 8.3]{LW20} for details)
$$
\Psi:T\in {\rm Aut}(\mathfrak{g})\mapsto [\Lambda^2T]\in { GL}(\Lambda^2_R\mathfrak{g}), \qquad [\Lambda^2T]([w]):=[\Lambda^2T(w)],\qquad \forall w\in \Lambda^2\mathfrak{g}.
$$
The above result leads to the following consequence
\begin{equation}\label{Eq:Equiv}
\pi_{\mathfrak{g}}((\Lambda^2T)(w))=[\Psi(T)](\pi_{\mathfrak{g}}(w)),\qquad \forall w\in \Lambda^2\mathfrak{g},\qquad \forall T\in {\rm Aut}(\mathfrak{g}).
\end{equation}

Recall that two actions of a Lie group $G$, let us say $\Phi_i:G\times N_i\rightarrow N_i$ with $i=1,2$, are {\it equivariant} relative to  $\phi:N_1\rightarrow N_2$ if $\phi\circ \Phi_1(g,x)=\Phi_2(g,\phi(x))$ for every $g\in G$ and $x\in N_1$. It can be proved that if $\Phi_1$ and $\Phi_2$ are equivariant, the Lie algebra of fundamental vector fields of $\Phi_1$ projects, via $\phi_*$, onto the Lie algebra of fundamental vector fields of $\Phi_2$. Moreover, the orbits of $\Phi_1$ project, via $\phi$, onto the orbits of $\Phi_2$  (see \cite{AM78} for details). 

Expression (\ref{Eq:Equiv}) shows that the actions of ${\rm Aut}(\mathfrak{g})$ on $\Lambda^2\mathfrak{g}$ and $\Lambda^2_R\mathfrak{g}$  are equivariant relative to $\pi_\mathfrak{g}:\Lambda^2\mathfrak{g}\rightarrow \Lambda^2_R\mathfrak{g}$. Then, the vector fields of  $V_\mathfrak{g}$ project via $\pi_{\mathfrak{g}*}$ onto the Lie algebra of fundamental vector fields of the action of ${\rm Aut}(\mathfrak{g})$ on $\Lambda^2_R\mathfrak{g}$. We write $\mathscr{E}^R_{\mathfrak{g}}$ for the generalised distribution spanned by the fundamental vector fields of the action of ${\rm Aut}(\mathfrak{g})$ on $\Lambda^2_R\mathfrak{g}$. This means that the strata of the generalised distribution $\mathscr{E}_{\mathfrak{g}}$ project onto the strata of the generalised distribution $\mathscr{E}^R_\mathfrak{g}$. 

As the elements of ${\rm Aut}(\mathfrak{g})$ map solutions to the mCYBE of $\mathfrak{g}$ onto solutions of the same equation, the orbits of ${\rm Aut}(\mathfrak{g})$ containing a solution to the mCYBE consist of solutions to the mCYBE. Similarly, the orbits of the action of ${\rm Aut}(\mathfrak{g})$ on $\Lambda^2\mathfrak{g}$ containing a solution to the CYBE consist of solutions to the CYBE.   Two $r$-matrices are equivalent up to Lie algebra automorphisms of $\mathfrak{g}$ if and only if they belong to the same orbit of ${\rm Aut}(\mathfrak{g})$ within $\mathcal{Y}_\mathfrak{g}$. 

Recall that  ${\rm Aut}(\mathfrak{g})/{\rm Aut}_c(\mathfrak{g})$, where ${\rm Aut}_c(\mathfrak{g})$ is the connected component of the identity of the Lie group ${\rm Aut}(\mathfrak{g})$, is discrete and therefore countable \cite{Bo05}. In other words, the Lie group ${\rm Aut}(\mathfrak{g})$ is a numerable sum (as subsets) of disjoint and connected subsets of ${\rm Aut}(\mathfrak{g})$  diffeomorphic to ${\rm Aut}_c(\mathfrak{g})$. The   strata of $\mathscr{E}_\mathfrak{g}$ coincide with the orbits of ${\rm Aut}_c(\mathfrak{g})$. Therefore, the orbits of ${\rm Aut}(\mathfrak{g})$ are given by a numerable sum of  strata of $\mathscr{E}_\mathfrak{g}$. Moreover, each particular orbit of ${\rm Aut}(\mathfrak{g})$ is an immersed submanifold in $\Lambda^2\mathfrak{g}$ of a fixed dimension. Hence, each one of the orbits of ${\rm Aut}_c(\mathfrak{g})$, whose sum gives rise to an orbit of ${\rm Aut}(\mathfrak{g})$, must have the same dimension. Similarly, the orbits of ${\rm Aut}(\mathfrak{g})$ on $\Lambda^2_R\mathfrak{g}$ are given by a numerable sum (as subsets) of strata of $\mathscr{E}^R_{\mathfrak{g}}$ of the same dimension, which are orbits of the action of ${\rm Aut}_c(\mathfrak{g})$ on $\Lambda^2_R\mathfrak{g}$. 

It can be proved that two coboundary cocommutators can be equivalent up to Lie algebra automorphisms even when their associated $r$-matrices are not. For instance, 
the zero cocommutator $\delta_0:\mathfrak{g}\rightarrow \mathfrak{g}\wedge\mathfrak{g}$ can take the form $0=\delta_0=[r,\cdot]$ for $r=0$ or for any other $r_1\in (\Lambda^2\mathfrak{g})^\mathfrak{g}\backslash\{0\}$. Nevertheless, the $0$ and $r_1$ cannot be connected by the action of an element of ${\rm Aut}(\mathfrak{g})$ on $\Lambda^2\mathfrak{g}$, since the action of ${\rm Aut}(\mathfrak{g})$ on $\Lambda^2\mathfrak{g}$ is linear and $0\in \Lambda^2\mathfrak{g}$ is an orbit. In spite of that, the cocommutators generated by $0$ and $r_1$ are the same. Let us give the conditions ensuring that two coboundary cocommutators on $\mathfrak{g}$ are equivalent up to a Lie algebra automorphism of $\mathfrak{g}$.

\begin{proposition} Two coboundary cocommutators $\delta_i:v\in \mathfrak{g}\mapsto [r_i,v]\in \mathfrak{g}\wedge\mathfrak{g}$, with $i=1,2$ and $r_i\in \mathcal{Y}_\mathfrak{g}$, are equivalent under a Lie algebra automorphism of $\mathfrak{g}$ if and only if there exists $T\in {\rm Aut}(\mathfrak{g})$ such that $(\Lambda^2T)(r_1)$ and $r_2$ belong to the same equivalence class in $\Lambda^2_R\mathfrak{g}$.
\end{proposition}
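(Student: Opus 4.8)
The plan is to reduce the statement to the elementary fact, recalled just after Theorem~\ref{thm1}, that $\delta_{r_1}=\delta_{r_2}$ if and only if $r_1-r_2\in(\Lambda^2\mathfrak{g})^\mathfrak{g}$, which is precisely the condition $\pi_\mathfrak{g}(r_1)=\pi_\mathfrak{g}(r_2)$ in $\Lambda^2_R\mathfrak{g}$. First I would fix the meaning of the equivalence to be proved: two cocommutators $\delta_1,\delta_2$ on $\mathfrak{g}$ are equivalent under a Lie algebra automorphism when there exists $T\in{\rm Aut}(\mathfrak{g})$ that is simultaneously a Lie bialgebra isomorphism $(\mathfrak{g},\delta_1)\to(\mathfrak{g},\delta_2)$, i.e. $(\Lambda^2T)\circ\delta_1=\delta_2\circ T$, this being the antisymmetric form of the homomorphism condition $(\phi\otimes\phi)\circ\delta_\mathfrak{g}=\delta_\mathfrak{h}\circ\phi$.

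The technical heart of the argument is the intertwining identity
\begin{equation}\label{Eq:Inter}
\delta_{(\Lambda^2T)(r)}=(\Lambda^2T)\circ\delta_r\circ T^{-1},\qquad \forall T\in{\rm Aut}(\mathfrak{g}),\ \forall r\in\Lambda^2\mathfrak{g}.
\end{equation}
To establish it I would use that the extension $\Lambda T$ of a Lie algebra automorphism $T$ to the Grassmann algebra is an automorphism of $\Lambda\mathfrak{g}$ for the Schouten bracket: it preserves the wedge product and, on $\mathfrak{g}$, the Lie bracket, so the graded Leibniz rule propagates $[\Lambda T(\mathcal{X}),\Lambda T(\mathcal{Y})]=\Lambda T([\mathcal{X},\mathcal{Y}])$ from generators to all multivectors. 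Writing $v=\Lambda T(T^{-1}v)$ for $v\in\mathfrak{g}$ and noting $[r,T^{-1}v]\in\Lambda^2\mathfrak{g}$, one computes $\delta_{(\Lambda^2T)(r)}(v)=[(\Lambda^2T)(r),v]=\Lambda T([r,T^{-1}v])=(\Lambda^2T)(\delta_r(T^{-1}v))$, which is exactly \eqref{Eq:Inter}. I expect this step, namely verifying that $\Lambda T$ respects the Schouten bracket, to be the main obstacle, although it amounts to bookkeeping with the graded Leibniz rule and could be cited from the properties of $\Lambda T$ used elsewhere in the paper.

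With \eqref{Eq:Inter} in hand both implications are short. For the backward direction, given $T$ with $[(\Lambda^2T)(r_1)]=[r_2]$ in $\Lambda^2_R\mathfrak{g}$, the recalled lemma gives $\delta_{(\Lambda^2T)(r_1)}=\delta_{r_2}=\delta_2$, and \eqref{Eq:Inter} rewrites this as $(\Lambda^2T)\circ\delta_1=\delta_2\circ T$, so this very $T$ is the desired Lie bialgebra automorphism. For the forward direction, equivalence via some $T$ means $(\Lambda^2T)\circ\delta_1=\delta_2\circ T$; combining with \eqref{Eq:Inter} yields $\delta_{(\Lambda^2T)(r_1)}\circ T=\delta_{r_2}\circ T$, and cancelling the invertible $T$ gives $\delta_{(\Lambda^2T)(r_1)}=\delta_{r_2}$, whence $(\Lambda^2T)(r_1)-r_2\in(\Lambda^2\mathfrak{g})^\mathfrak{g}$ by the lemma, i.e. $[(\Lambda^2T)(r_1)]=[r_2]$ in $\Lambda^2_R\mathfrak{g}$. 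I would also note in passing that the sign discrepancy between $\delta_i(v)=[r_i,v]$ here and $\delta_r(v)=[v,r]$ in Theorem~\ref{thm1} is immaterial, as it only changes $\delta$ by an overall sign, while both \eqref{Eq:Inter} and the lemma hold verbatim for either convention.
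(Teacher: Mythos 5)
Your proposal is correct and follows essentially the same route as the paper: the key intertwining identity $\Lambda^2T\circ\delta_{r_1}\circ T^{-1}=[\Lambda^2Tr_1,\cdot]$ (which the paper derives in one line "by the properties of the Schouten bracket") combined with the fact that $\delta_{r}= \delta_{r'}$ iff $r-r'\in(\Lambda^2\mathfrak{g})^{\mathfrak{g}}$. You merely spell out the Leibniz-rule verification and the two directions more explicitly than the paper does.
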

\begin{proof} It is clear that the transformation of $\delta_1$ by $T$ reads, by the properties of the Schouten bracket, as follows
$$
\Lambda^2T\circ \delta_1\circ T^{-1}=\Lambda^2T\circ [r_1,T^{-1}(\cdot)]=[\Lambda^2Tr_1,\cdot].
$$
Hence,
$$
\delta_2=\Lambda^2T\circ \delta_1\circ T^{-1}\Leftrightarrow [\Lambda^2Tr_1,\cdot]=[r_2,\cdot] \Leftrightarrow [\Lambda^2Tr_1 - r_2,\cdot]=0,
$$
and the last condition amounts to the fact that $\Lambda^2Tr_1-r_2 \in (\Lambda^2 \mathfrak{g})^{\mathfrak{g}}$, which implies that $\Lambda^2Tr_1,r_2$ belong to the same equivalence class in $\Lambda_R^2\mathfrak{g}$. 
\end{proof}

Note that two $r$-matrices are equivalent (relative to the action of an element of ${\rm Aut}(\mathfrak{g})$ on $\Lambda^2\mathfrak{g}$) if and only if they belong to the same family of strata of the distribution $\mathscr{E}_\mathfrak{g}$ giving rise to an orbit of ${\rm Aut}(\mathfrak{g})$ in $\mathcal{Y}_\mathfrak{g}$. Nevertheless, two $r$-matrices give rise to two equivalent coboundary cocommutators if and only if they belong to the same orbit of ${\rm Aut}(\mathfrak{g})$ on $\Lambda^2_R\mathfrak{g}$. In other words, we have proven the following proposition.

\begin{proposition}\label{Pro:ClassificationRmatrix} There exists a one-to one correspondence between the $r$-matrices for $\mathfrak{g}$ that are equivalent up to an element of ${\rm Aut}_c(\mathfrak{g})$ and the strata of the generalised distribution $\mathscr{E}_\mathfrak{g}$ within $\mathcal{Y}_\mathfrak{g}$. Moreover, there exists a one-to-one correspondence between the families of equivalent (up to Lie algebra automorphisms of $\mathfrak{g}$) coboundary cocommutators of a Lie algebra $\mathfrak{g}$ and the orbits of the action of ${\rm Aut}(\mathfrak{g})$ on $\pi_{\mathfrak{g}}(\mathcal{Y}_{\mathfrak{g}})\subset \Lambda^2_R\mathfrak{g}$. Every such an orbit in $\pi_{\mathfrak{g}}(\mathcal{Y}_{\mathfrak{g}})$ is the sum (as subsets) of a numerable collection of strata of the same dimension of the generalised distribution $\mathscr{E}_\mathfrak{g}^R$.
\end{proposition}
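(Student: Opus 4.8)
The plan is to assemble the three assertions from facts already established, treating each correspondence in turn; essentially no new computation is required, only careful bookkeeping of invariant subsets.

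For the first claim I would start from the identification---obtained via the decomposition (\ref{eq:dec})---that the strata of $\mathscr{E}_\mathfrak{g}$ are exactly the orbits of ${\rm Aut}_c(\mathfrak{g})$ on $\Lambda^2\mathfrak{g}$. Since every $T\in{\rm Aut}(\mathfrak{g})$, in particular every $T\in{\rm Aut}_c(\mathfrak{g})$, maps solutions of the mCYBE to solutions of the mCYBE, the set $\mathcal{Y}_\mathfrak{g}$ is ${\rm Aut}_c(\mathfrak{g})$-invariant, hence a disjoint union of such orbits. Consequently the strata of $\mathscr{E}_\mathfrak{g}$ lying inside $\mathcal{Y}_\mathfrak{g}$ are precisely the ${\rm Aut}_c(\mathfrak{g})$-orbits contained in $\mathcal{Y}_\mathfrak{g}$; and by definition two $r$-matrices are ${\rm Aut}_c(\mathfrak{g})$-equivalent if and only if they lie in a common such orbit. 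This yields the bijection between ${\rm Aut}_c(\mathfrak{g})$-equivalence classes of $r$-matrices and strata of $\mathscr{E}_\mathfrak{g}$ within $\mathcal{Y}_\mathfrak{g}$.

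For the second claim I would first record that, by Theorem \ref{thm1} together with the observation $\delta_{r_1}=\delta_{r_2}\Leftrightarrow r_1-r_2\in(\Lambda^2\mathfrak{g})^\mathfrak{g}$, the set $\mathcal{Y}_\mathfrak{g}$ is a union of fibres of $\pi_\mathfrak{g}$ and the assignment $r\mapsto\delta_r$ descends to a bijection between $\pi_\mathfrak{g}(\mathcal{Y}_\mathfrak{g})$ and the coboundary cocommutators of $\mathfrak{g}$. I would then invoke the immediately preceding proposition characterising equivalence of coboundary cocommutators: $\delta_{r_1}$ and $\delta_{r_2}$ are equivalent under a Lie algebra automorphism if and only if there is $T\in{\rm Aut}(\mathfrak{g})$ with $(\Lambda^2T)(r_1)$ and $r_2$ in the same class of $\Lambda_R^2\mathfrak{g}$. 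Rewriting this via the equivariance identity (\ref{Eq:Equiv}) as $[\Psi(T)](\pi_\mathfrak{g}(r_1))=\pi_\mathfrak{g}(r_2)$ converts the equivalence of cocommutators into the statement that $\pi_\mathfrak{g}(r_1)$ and $\pi_\mathfrak{g}(r_2)$ lie in a common $\Psi$-orbit. As $\pi_\mathfrak{g}(\mathcal{Y}_\mathfrak{g})$ is $\Psi$-invariant (again by equivariance and invariance of $\mathcal{Y}_\mathfrak{g}$), the equivalence classes of coboundary cocommutators correspond bijectively to the orbits of ${\rm Aut}(\mathfrak{g})$ on $\pi_\mathfrak{g}(\mathcal{Y}_\mathfrak{g})$.

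For the third claim I would use that the vector fields of $V_\mathfrak{g}$ project under $\pi_{\mathfrak{g}*}$ onto the fundamental vector fields spanning $\mathscr{E}^R_\mathfrak{g}$, so that the strata of $\mathscr{E}^R_\mathfrak{g}$ coincide with the orbits of ${\rm Aut}_c(\mathfrak{g})$ on $\Lambda_R^2\mathfrak{g}$. Since ${\rm Aut}(\mathfrak{g})/{\rm Aut}_c(\mathfrak{g})$ is discrete, hence countable, every ${\rm Aut}(\mathfrak{g})$-orbit on $\Lambda_R^2\mathfrak{g}$ is a countable union of ${\rm Aut}_c(\mathfrak{g})$-orbits, i.e. of strata of $\mathscr{E}^R_\mathfrak{g}$; and because such an orbit is a single immersed submanifold of fixed dimension, all the strata composing it share that dimension. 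Restricting to the orbits meeting $\pi_\mathfrak{g}(\mathcal{Y}_\mathfrak{g})$ gives the final assertion. I expect the only genuinely delicate point to be the bookkeeping that everything remains inside the invariant subsets $\mathcal{Y}_\mathfrak{g}$ and $\pi_\mathfrak{g}(\mathcal{Y}_\mathfrak{g})$---namely verifying that $\mathcal{Y}_\mathfrak{g}$ is saturated for $\pi_\mathfrak{g}$ and that both sets are invariant, so that the orbit/stratum structure restricts cleanly---whereas the remaining steps are direct applications of results already proved, with the countability and equidimensionality reusing the discreteness of ${\rm Aut}(\mathfrak{g})/{\rm Aut}_c(\mathfrak{g})$ noted earlier.
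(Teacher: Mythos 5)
Your proposal is correct and follows essentially the same route as the paper, which establishes this proposition by assembling the preceding discussion: the identification of strata of $\mathscr{E}_\mathfrak{g}$ with ${\rm Aut}_c(\mathfrak{g})$-orbits, the invariance of $\mathcal{Y}_\mathfrak{g}$, the equivariance relation (\ref{Eq:Equiv}) together with the preceding proposition on equivalence of coboundary cocommutators, and the countability of ${\rm Aut}(\mathfrak{g})/{\rm Aut}_c(\mathfrak{g})$. Your explicit verification that $\mathcal{Y}_\mathfrak{g}$ is saturated for $\pi_\mathfrak{g}$ (via Theorem \ref{thm1} and the characterisation $\delta_{r_1}=\delta_{r_2}\Leftrightarrow r_1-r_2\in(\Lambda^2\mathfrak{g})^\mathfrak{g}$) is a point the paper leaves implicit, and it is handled correctly.
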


In practice, we shall obtain each orbit of ${\rm Aut}(\mathfrak{g})$ on $\mathcal{Y}_\mathfrak{g}$ as a numerable family of strata of $\mathscr{E} _\mathfrak{g}$ in $\mathcal{Y}_\mathfrak{g}$. Such orbits represent the families of $r$-matrices that are equivalent up to an element of ${\rm Aut}(\mathfrak{g})$. Then, we will derive all strata in $\mathcal{Y}_\mathfrak{g}$ that map onto the same space in $\Lambda_R^2\mathfrak{g}$. This last task will give us, along with the orbits of ${\rm Aut}(\mathfrak{g})$ on $\mathcal{Y}_\mathfrak{g}$, the equivalent classes of  coboundary cocommutators on $\mathfrak{g}$ up to Lie algebra automorphisms of $\mathfrak{g}$. 

Let us study how Darboux families can be employed to obtain the orbits of ${\rm Aut}(\mathfrak{g})$ on $\Lambda^2\mathfrak{g}$ and $\Lambda_R^2\mathfrak{g}$. In particular, Darboux families are employed here to identify the strata of $\mathscr{E}_\mathfrak{g}$.

\begin{proposition}\label{Pro:UseProp} The locus, $\ell_{\mathfrak{g}}$, of a Darboux family $\mathcal{A}^E_\mathfrak{g}$ relative to the Lie algebra $V_\mathfrak{g}$ on $\Lambda^2\mathfrak{g}$ is a sum (as subsets) of the orbits of the action of ${\rm Aut}_c(\mathfrak{g})$ on $\Lambda^2\mathfrak{g}$ containing a point in $\ell_{\mathfrak{g}}$. If $\ell_{\mathfrak{g}}$ is a connected submanifold in $\Lambda^2\mathfrak{g}$ of dimension given by the rank of the generalised distribution $\mathscr{E}_{\mathfrak{g}}$, then $\ell_{\mathfrak{g}}$ is a   strata of the generalised distribution $\mathscr{E}_{\mathfrak{g}}$. 
\end{proposition}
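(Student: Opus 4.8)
The plan is to split the statement into its two assertions and treat each in turn, relying on the general facts about Darboux families developed earlier. For the first assertion, I would invoke the observation established right after the definition of a Darboux family: for any Darboux family $\mathcal{A}$ on a manifold $M$ relative to a Vessiot--Guldberg Lie algebra $V$, every integral curve of an element of $V$ issuing from a point of the locus $\mathcal{S}_\mathcal{A}$ remains within $\mathcal{S}_\mathcal{A}$, so that, by the flow decomposition (\ref{eq:dec}), each stratum of $\mathcal{D}^V$ meeting $\mathcal{S}_\mathcal{A}$ is entirely contained in $\mathcal{S}_\mathcal{A}$. Specialising to $M=\Lambda^2\mathfrak{g}$ and $V=V_\mathfrak{g}$, so that $\mathcal{D}^V=\mathscr{E}_\mathfrak{g}$ and $\mathcal{S}_{\mathcal{A}^E_\mathfrak{g}}=\ell_\mathfrak{g}$, and recalling that the strata of $\mathscr{E}_\mathfrak{g}$ coincide with the orbits of ${\rm Aut}_c(\mathfrak{g})$, I conclude that each orbit of ${\rm Aut}_c(\mathfrak{g})$ meeting $\ell_\mathfrak{g}$ lies inside $\ell_\mathfrak{g}$. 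Conversely, since these orbits partition $\Lambda^2\mathfrak{g}$, every point of $\ell_\mathfrak{g}$ lies on such an orbit, which therefore meets $\ell_\mathfrak{g}$. Hence $\ell_\mathfrak{g}$ is exactly the union, as subsets, of the ${\rm Aut}_c(\mathfrak{g})$-orbits that meet it.

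For the second assertion, I would assume that $\ell_\mathfrak{g}$ is a connected submanifold whose dimension equals the rank of $\mathscr{E}_\mathfrak{g}$ at its points. By the first part, $\ell_\mathfrak{g}$ is a disjoint union of strata $\{\mathcal{F}_k\}$ of $\mathscr{E}_\mathfrak{g}$, each an immersed submanifold satisfying $T_x\mathcal{F}_k=(\mathscr{E}_\mathfrak{g})_x$ for $x\in\mathcal{F}_k$, so that $\dim\mathcal{F}_k$ equals the rank of $\mathscr{E}_\mathfrak{g}$ at those points and therefore $\dim\mathcal{F}_k=\dim\ell_\mathfrak{g}$. The key step is then to show that each such $\mathcal{F}_k\subseteq\ell_\mathfrak{g}$ is open in $\ell_\mathfrak{g}$: the inclusion $\mathcal{F}_k\hookrightarrow\Lambda^2\mathfrak{g}$ factors through $\ell_\mathfrak{g}$ as an injective immersion between manifolds of equal dimension, hence it is a local diffeomorphism onto an open subset of $\ell_\mathfrak{g}$. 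Since the $\mathcal{F}_k$ partition $\ell_\mathfrak{g}$ and each is open, each is also closed in $\ell_\mathfrak{g}$, its complement being the union of the remaining open strata. Connectedness of $\ell_\mathfrak{g}$ then forces the partition to reduce to a single piece, so $\ell_\mathfrak{g}$ is a single stratum of $\mathscr{E}_\mathfrak{g}$, as claimed.

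The main obstacle I anticipate lies in the openness claim of the second part. It requires matching the tangent space of the stratum with that of $\ell_\mathfrak{g}$, which hinges on reading the dimension hypothesis pointwise along $\ell_\mathfrak{g}$ so that every stratum inside $\ell_\mathfrak{g}$ shares the common dimension $\dim\ell_\mathfrak{g}$, and then on the fact that an injective immersion of a manifold into another of the same dimension is a local diffeomorphism. One must also check that the factorisation $\mathcal{F}_k\hookrightarrow\ell_\mathfrak{g}$ is smooth, which holds because $\ell_\mathfrak{g}$ is assumed to be a submanifold; once these points are secured, the clopen-partition-of-a-connected-space argument is entirely routine.
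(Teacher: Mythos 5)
Your proposal is correct and follows essentially the same route as the paper: the first assertion is obtained exactly as in the text, by propagating the vanishing of the Darboux family along the orbits of ${\rm Aut}_c(\mathfrak{g})$ (equivalently, along the strata of $\mathscr{E}_\mathfrak{g}$), and the second by combining the dimension hypothesis with connectedness. Your clopen-partition argument merely makes explicit what the paper compresses into the phrase that $\ell_{\mathfrak{g}}$ is ``locally generated'' by the action around each point and hence, being connected, wholly generated by it.
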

\begin{proof} For each point $p$ in $\ell_{\mathfrak{g}}$ the orbit $\mathcal{O}_p$ of ${\rm Aut}_c(\mathfrak{g})$ passing through $p$ has a tangent space spanned by the vector fields of $V_{\mathfrak{g}}$. By the definition of a Darboux family for $V_{\mathfrak{g}}$ and the fact that the functions of $\mathcal{A}^E_\mathfrak{g}$ vanish  at $p$, one has that all functions of $\mathcal{A}^E_{\mathfrak{g}}$ are equal to zero on $\mathcal{O}_p$ (see Section \ref{Se:DarFam}). Hence, $\mathcal{O}_p$ is contained in $\ell_{\mathfrak{g}}$. Then, $\ell_{\mathfrak{g}}$ is the sum (as subsets) of orbits of ${\rm Aut}_c(\mathfrak{g})$ passing through the points of  $\ell_{\mathfrak{g}}$.

Recall that the strata of $\mathscr{E}_\mathfrak{g}$ are the orbits of ${\rm Aut}_c(\mathfrak{g})$ acting on $\Lambda^2\mathfrak{g}$. Hence, $\ell_{\mathfrak{g}}$ is a sum of strata of $\mathscr{E}_\mathfrak{g}$. 
If the rank of $\mathscr{E}_\mathfrak{g}$ on $\ell_\mathfrak{g}$ is equal to $\dim \ell_{\mathfrak{g}}$, where $\ell_{\mathfrak{g}}$ is assumed to be a connected submanifold, then $\ell_{\mathfrak{g}}$ is locally generated around any point $p\in \ell_{\mathfrak{g}}$ by the action of ${\rm Aut}_c(\mathfrak{g})$ on that point. Since $\ell_1$ is connected, it is wholly generated by the action of ${\rm Aut}_c(\mathfrak{g})$ and it becomes a strata of $\mathscr{E}_{\mathfrak{g}}$.
\end{proof}

Note that we are interested only in those loci of Darboux families of $V_{\mathfrak{g}}$ contained in the space $\mathcal{Y}_\mathfrak{g}$ of solutions to the mCYBE of $\mathfrak{g}$.

It is interesting that necessary conditions for the existence of a Lie algebra automorphism connecting two $r$-matrices can be given using  $\Lambda^3\mathfrak{g}$. Consider, for instance, the following proposition, whose proof is straightforward.

\begin{proposition}\label{Prop:NecCon} If $r_1$ and $r_2$ are two equivalent $r$-matrices for a Lie algebra $\mathfrak{g}$, then, $[r_1,r_1]$ and $[r_2,r_2]$ belong to the same orbit of ${\rm Aut}(\mathfrak{g})$ acting on $(\Lambda^3\mathfrak{g})^{\mathfrak{g}}$. 
\end{proposition}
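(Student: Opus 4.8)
The plan is to unwind the definition of equivalence of $r$-matrices and then transport the element $[r,r]$ through the connecting automorphism using the fact that the extension of a Lie algebra automorphism to the Grassmann algebra intertwines the Schouten bracket. First I would recall that, by the discussion preceding the statement, two $r$-matrices $r_1,r_2\in \mathcal{Y}_\mathfrak{g}$ are equivalent up to a Lie algebra automorphism if and only if they lie in the same orbit of the action of ${\rm Aut}(\mathfrak{g})$ on $\Lambda^2\mathfrak{g}$; that is, there exists $T\in {\rm Aut}(\mathfrak{g})$ with $(\Lambda^2 T)(r_1)=r_2$.

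The key step is the equivariance identity $(\Lambda^{s+l-1}T)([w_1,w_2])=[(\Lambda^s T)(w_1),(\Lambda^l T)(w_2)]$, valid for every $T\in {\rm Aut}(\mathfrak{g})$ and all $w_1\in \Lambda^s\mathfrak{g}$, $w_2\in \Lambda^l\mathfrak{g}$. This is precisely the identity already exploited inside the proof of Theorem \ref{Th:ConsDar32}, there specialised to $w_1=w_2=r$. Applying it with $s=l=2$ yields
$$
[r_2,r_2]=[(\Lambda^2 T)(r_1),(\Lambda^2 T)(r_1)]=(\Lambda^3 T)([r_1,r_1]).
$$

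Then I would invoke that, since $r_1$ is an $r$-matrix, Theorem \ref{thm1} guarantees $[r_1,r_1]\in (\Lambda^3\mathfrak{g})^{\mathfrak{g}}$, and that $(\Lambda^3\mathfrak{g})^{\mathfrak{g}}$ is stable under the action of ${\rm Aut}(\mathfrak{g})$ on $\Lambda^3\mathfrak{g}$, a fact already used in the proof following equation (\ref{Eq:ParFam}). Consequently $(\Lambda^3 T)([r_1,r_1])$ again lies in $(\Lambda^3\mathfrak{g})^{\mathfrak{g}}$, and the displayed equality exhibits $[r_1,r_1]$ and $[r_2,r_2]$ as two points of the same ${\rm Aut}(\mathfrak{g})$-orbit within $(\Lambda^3\mathfrak{g})^{\mathfrak{g}}$, which is exactly the assertion.

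The only point requiring care, hardly an obstacle and consistent with the paper's remark that the proof is straightforward, is the justification of the equivariance identity for the Schouten bracket. I would either cite its appearance in the proof of Theorem \ref{Th:ConsDar32}, or note that it follows from the functoriality of the Grassmann extension together with the defining property $T([v,w]_\mathfrak{g})=[T(v),T(w)]_\mathfrak{g}$ of a Lie algebra automorphism: one checks it on decomposable multivectors directly from formula (\ref{multi_sn}) and then extends by bilinearity.
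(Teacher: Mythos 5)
Your proof is correct and is precisely the ``straightforward'' argument the paper has in mind but omits: equivalence gives $T\in{\rm Aut}(\mathfrak{g})$ with $(\Lambda^2T)(r_1)=r_2$, the equivariance of the Schouten bracket under $\Lambda T$ (already used implicitly in the proof of Theorem \ref{Th:ConsDar32}) yields $[r_2,r_2]=(\Lambda^3T)([r_1,r_1])$, and stability of $(\Lambda^3\mathfrak{g})^{\mathfrak{g}}$ under ${\rm Aut}(\mathfrak{g})$ places both elements in the same orbit inside that subspace. Nothing is missing.
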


The necessary condition in  Proposition \ref{Prop:NecCon} is not sufficient. For instance, it may happen that two $r$-matrices are solutions to the classical Yang-Baxter equation without being equivalent (as seen in Table \ref{Tab:g_orb_2}). From a practical point of view, Proposition \ref{Prop:NecCon} shows that the determination of the orbits of the action of ${\rm Aut}(\mathfrak{g})$ on $\Lambda^3\mathfrak{g}$ may be of interest. Following the reasoning for studying the equivalence of $r$-matrices in $\Lambda^2\mathfrak{g}$, the  orbits of the action of ${\rm Aut}(\mathfrak{g})$ on $\Lambda^3\mathfrak{g}$ can be obtained by obtaining the strata of the distribution, $\mathscr{E}^{(3)}_{\mathfrak{g}}$, spanned by the fundamental vector fields $V^{(3)}_{\mathfrak{g}}$ of the action of ${\rm Aut}_c(\mathfrak{g})$ on $\Lambda^3\mathfrak{g}$. Such strata can be obtained through Darboux families for $V^{(3)}_{\mathfrak{g}}$ in an analogous way to the method employed to study $\mathscr{E}_{\mathfrak{g}}$. Then, the action of one element of each connected component of ${\rm Aut}(\mathfrak{g})$ on the strata of  $V^{(3)}_{\mathfrak{g}}$ in $(\Lambda^2\mathfrak{g})^{\mathfrak{g}}$ gives the orbits of ${\rm Aut}(\mathfrak{g})$ within $(\Lambda^3\mathfrak{g})^{\mathfrak{g}}$. 

\begin{proposition} If two $r$-matrices $r_1,r_2$ are equivalent, then the rank of $r_1,r_2$, as bilinear mappings on $\mathfrak{g}^*$, are the same.
\end{proposition}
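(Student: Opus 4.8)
The plan is to reduce the statement to the elementary fact that composing a linear map with invertible maps on either side does not change its rank. First I would recall from the discussion preceding this proposition that two $r$-matrices $r_1,r_2$ are equivalent precisely when they lie in the same orbit of ${\rm Aut}(\mathfrak{g})$ acting on $\Lambda^2\mathfrak{g}$; that is, there exists $T\in {\rm Aut}(\mathfrak{g})\subset GL(\mathfrak{g})$ with $r_2=(\Lambda^2T)(r_1)$.

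Next I would make explicit the identification of a bivector $r\in \Lambda^2\mathfrak{g}$ with a skew-symmetric bilinear map on $\mathfrak{g}^*$, or equivalently with a linear map $\widetilde{r}:\mathfrak{g}^*\rightarrow \mathfrak{g}$ determined by $\langle \eta,\widetilde{r}(\xi)\rangle=r(\xi,\eta)$ for all $\xi,\eta\in \mathfrak{g}^*$. By definition, the rank of $r$ as a bilinear mapping is the rank of $\widetilde{r}$, and for a decomposable $r=v\wedge w$ one computes directly that $\widetilde{r}(\xi)=\langle\xi,v\rangle w-\langle\xi,w\rangle v$.

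The heart of the argument is a short computation showing how $\widetilde{r}$ transforms under the action of $T$. Checking on a decomposable bivector and extending by linearity, one finds that $\widetilde{(\Lambda^2T)(r)}=T\circ \widetilde{r}\circ T^*$, where $T^*:\mathfrak{g}^*\rightarrow \mathfrak{g}^*$ is the transpose of $T$. Indeed, $(\Lambda^2T)(v\wedge w)=Tv\wedge Tw$, and evaluating the associated linear map on $\xi\in \mathfrak{g}^*$ gives $\langle \xi,Tv\rangle Tw-\langle \xi,Tw\rangle Tv=T\big(\langle T^*\xi,v\rangle w-\langle T^*\xi,w\rangle v\big)=T\widetilde{r}(T^*\xi)$, as claimed.

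Finally, since $T\in GL(\mathfrak{g})$ is invertible, so is its transpose $T^*$; hence $T\circ \widetilde{r}\circ T^*$ has the same rank as $\widetilde{r}$. Applying this to $r_2=(\Lambda^2T)(r_1)$ yields that $r_1$ and $r_2$ have equal rank. I expect essentially no obstacle here: the only point requiring care is the bookkeeping in the transformation formula for $\widetilde{r}$, in particular the appearance of the transpose $T^*$ (rather than $T$ or $T^{-1}$) on the right, which is exactly what guarantees that the rank is preserved under the $GL(\mathfrak{g})$-action.
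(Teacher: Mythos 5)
Your proposal is correct and follows the same route as the paper: the paper simply asserts that the action of ${\rm Aut}(\mathfrak{g})$ on a bivector does not change its rank as a bilinear antisymmetric mapping, and your computation of the transformation law $\widetilde{(\Lambda^2T)(r)}=T\circ \widetilde{r}\circ T^*$ is exactly the detail that justifies that assertion. The only difference is that you spell out the bookkeeping that the paper leaves implicit.
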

\begin{proof}
Recall that $r$-matrices can be considered as the antisymmetric bilinear  mappings on $\mathfrak{g}^*$ associated with them. The action of ${\rm Aut}(\mathfrak{g})$ on an $r$-matrix does not change its rank as a bilinear antisymmetric mapping. Hence, equivalent $r$-matrices must have the same rank. 
\end{proof}

Although $r$-matrices whose bilinear antisymmetric mappings on $\mathfrak{g}^*$ have different rank are not equivalent up to Lie algebra automorphisms of $\mathfrak{g}$, it may happen that $r$-matrices related to the same rank are not  equivalent. In fact, although there always exists in this latter case  an  $A\in GL(\mathfrak{g})$ acting on $\Lambda^2\mathfrak{g}$ mapping one element of this space into another of the same order ($A$ can be obtained by mapping both bivectors into a canonical form), $A$ does not necessarily belong to ${\rm Aut}(\mathfrak{g})$. Hence, both bivectors need not  be equivalent up to Lie algebra automorphisms of $\mathfrak{g}$. Many examples of this will appear in Table \ref{Tab:g_orb_2}.
 
\section{Classification of coboundary Lie bialgebras on four-dimensional indecomposable Lie algebras}\label{Sec:Cla}

A Lie algebra $\mathfrak{g}$ is {\it decomposable} when it can be written as the direct sum of two of its proper ideals. Otherwise, we say that $\mathfrak{g}$ is {\it indecomposable}. Winternitz and \v{S}nobl classified in \cite[Part 4]{SW14} all indecomposable Lie algebras up to dimension six (see \cite[p. 217]{SW14} for comments on previous classifications) . The aim of this section is to present the classification up to Lie algebra automorphisms of coboundary Lie bialgebras on real four-dimensional indecomposable Lie algebras via Darboux families and the Winternitz--\v{S}nobl (W\v{S}) classification. Moreover, the equivalence of $r$-matrices and solutions to CYBEs for four-dimensional real  indecomposable Lie algebras are also analysed.

We hereafter assume $\mathfrak{g}$ to be a real four-dimensional and indecomposable Lie algebra. For the sake of completeness, the structure constants for every $\mathfrak{g}$ in a basis $\{e_1,e_2,e_3,e_4\}$ are given in Table \ref{Tab:StruCons}. We hereafter endow $\Lambda^2\mathfrak{g}$  with a basis $\{e_{12}:=e_1\wedge e_2,e_{13}:=e_1\wedge e_3,e_{14}:=e_1\wedge e_4,e_{23}:=e_2\wedge e_3,e_{24}:=e_2\wedge e_4,e_{34}:=e_3\wedge e_4\}$, while its dual basis is denoted by $\{x_1,x_2,x_3,x_4,x_5,x_6\}$. Additionally, $\Lambda^3\mathfrak{g}$ is endowed with a basis $\{e_{123},e_{124},e_{134},e_{234}\}$. Moreover, due to the bilinearity and symmetry of the Schouten bracket on $\Lambda^2\mathfrak{g}$, it is enough to know  $[e_{ij},e_{kl}]$, with $1\leq i<k\leq 4$ or $i=k\in \{1,\ldots,4\},1\leq j\leq l\leq 4$, to determine the Schouten bracket on the whole $\Lambda^2\mathfrak{g}$. In Tables \ref{Tab:Schoten_1_2}--\ref{Tab:Schoten_1_3}, we summarise several relevant Schouten brackets between the basis elements of $\mathfrak{g}$, $\Lambda^2\mathfrak{g}$, and $\Lambda^3\mathfrak{g}$ that can easily be obtained via Table \ref{Tab:StruCons} and, eventually, by means of a symbolic computation program. Anyway, we add such calculations to easily follow the proofs of our following results. The classes of Lie algebras in Table \ref{Tab:StruCons} may contain several non-isomorphic Lie algebras depending on several parameters, e.g. $\alpha,\beta$. To simplify the notation, we will not detail specific values of the parameters when we refer to properties of a whole class of Lie algebras and/or it is clear the particular case we are discussing from the context. 

	\begin{table}[h]
\centering		
\begin{tabular}{|c||c|c|c|c|c|c|c|}
	\hline
	Lie algebra & $[e_1,e_2]$ & $[e_1,e_3]$ & $[e_1,e_4]$ & $[e_2,e_3]$ & $[e_2,e_4]$ & $[e_3,e_4]$ & Parameters \\
	\hhline{|=#=|=|=|=|=|=|=|}
	$\mathfrak{s}_{1}$ & 0 & 0 & 0 & 0 & $-e_1$ & $-e_3$ & \\
	\hline
	$\mathfrak{s}_{2}$ & 0 & 0 & $-e_1$ & 0 & $-e_1 -e_2$ & $-e_2 -e_3$ & \\
	\hline
	\multirow{2}{*}{$\mathfrak{s}_{3}$} & \multirow{2}{*}{0} & \multirow{2}{*}{0} & \multirow{2}{*}{$-e_1$} & \multirow{2}{*}{0} & \multirow{2}{*}{$-\alpha e_2$} & \multirow{2}{*}{$-\beta e_3$} & $0 < |\beta| \leq |\alpha| \leq 1$, \\
 & & & & & & & $(\alpha, \beta) \neq (-1,-1)$  \\
	\hline
	$\mathfrak{s}_{4}$ & 0 & 0 & $-e_1$ & 0 & $-e_1 -e_2$ & $-\alpha e_3$ & $\alpha \in \rz$ \\
	\hline
	$\mathfrak{s}_{5}$ & 0 & 0 & $-\alpha e_1$ & 0 & $-\beta e_2 + e_3$ & $-e_2 -\beta e_3$ & $\alpha > 0, \beta \in \rr$ \\
	\hline
	$\mathfrak{s}_{6}$ & 0 & 0 & 0 & $e_1$ & $-e_2$ & $e_3$ & \\
	\hline
	$\mathfrak{s}_{7}$ & 0 & 0 & 0 & $e_1$ & $e_3$ & $-e_2$ & \\
	\hline
	$\mathfrak{s}_{8}$ & 0 & 0 & $-(1 + \alpha) e_1$ & $e_1$ & $-e_2$ & $-\alpha e_3$ & $\alpha \in ]-1,1] \backslash \{0\}$ \\
	\hline
	$\mathfrak{s}_{9}$ & 0 & 0 & $-2\alpha e_1$ & $e_1$ & $-\alpha e_2 + e_3$ & $-e_2 -\alpha e_3$ & $\alpha > 0$ \\
	\hline
	$\mathfrak{s}_{10}$ & 0 & 0 & $-2e_1$ & $e_1$ & $-e_2$ & $-e_2 -e_3$ & \\
	\hline
	$\mathfrak{s}_{11}$ & 0 & 0 & $-e_1$ & $e_1$ & $-e_2$ & 0 & \\
	\hline
	$\mathfrak{s}_{12}$ & 0 & $-e_1$ & $e_2$ & $-e_2$ & $-e_1$ & 0 & \\
	\hline
	$\mathfrak{n}_{1}$ & 0 & 0 & 0 & 0 & $e_1$ & $e_2$ & \\
	\hline
\end{tabular}
\caption{Structure constants for real four-dimensional indecomposable Lie algebras according to the W\v{S} classification. Such Lie algebras are denoted by $\mathfrak{s}_{4,k},\mathfrak{n}_{4,1}$ for $1\leq k\leq 12$ in the W\v{S} classification. Nevertheless, the subindex concerning the dimension of the Lie algebra will be removed in our work to simplify the notation. The letters $\mathfrak{n}$ and $\mathfrak{s}$ are used to denote nilpotent and solvable (but not nilpotent) Lie algebras, respectively. To avoid duplicities of Lie algebras, the parameters $\alpha,\beta$ of $\mathfrak{s}_3$ must satisfy additional restrictions that were not detailed explicitly in \cite{SW14} (see \cite[p. 228]{SW14}). Such restrictions will be determined in Subsection \ref{Subsec:3}.}  \label{Tab:StruCons}
	\end{table}

It is interesting to discuss some physical and mathematical applications of the Lie bialgebras given in our list and their relations to some previous works. For instance, the nilpotent Lie algebra $\mathfrak{n}_1$ corresponds to the so-called Galilean Lie algebra, which is deformed  in the study of XYZ models with the quantum Galilei group appearing in  \cite{BCGST92}. Lie bialgebras defined on Lie algebras of the type $\mathfrak{n}_1$ also occur in \cite{Ne16,Op98}. The coboundary Lie bialgebras on the Lie algebra $\mathfrak{s}_6$, which corresponds to the so-called {\it harmonic oscillator algebra}, are obtained  in \cite{BH96} without studying the equivalence up to Lie algebra automorphisms. In that work, the quantisations and $R$-matrices of such a sort of Lie bialgebras are also studied. In particular, equation (3.7) in \cite{BH96} matches the mCYBE obtained in Section \ref{Se:s6}. The work \cite{BH97} proves that all Lie bialgebras on the harmonic oscillator algebra are coboundary ones. This shows that our classification of Lie bialgebras on $\mathfrak{s}_6$ finishes the classification of all Lie bialgebras on the harmonic oscillator algebra. On the other hand, the Lie algebra $\mathfrak{s}_8$ with $\alpha=-1/2$ is isomorphic to the Lie algebra $A\bar{G}_2(1)$ in \cite{Ne16}. The Lie algebra $\mathfrak{s}_7$ provides a central extension of Caley-Klein Lie algebras analysed in \cite[eq. (3.13)]{BHOS93}. Other cases of real four-dimensional indecomposable Lie bialgebras can be found in \cite{BCH00,Op00}, where, for instance, two-dimensional central extended Galilei algebras are studied. 

\subsection{Lie algebra $\mathfrak{s}_{1}$} \label{Sec:s1}

The structure constants for  Lie algebra $\mathfrak{s}_1$ are given in Table \ref{Tab:StruCons}, while relevant Schouten brackets between the elements of the bases of $\mathfrak{s}_1$, $\Lambda^2\mathfrak{s}_1$, $\Lambda^3\mathfrak{s}_1$ to be used hereafter are displayed in Tables  \ref{Tab:Schoten_1_2}--\ref{Tab:Schoten_1_3}. In particular, one sees from such tables that $(\Lambda^2\mathfrak{s}_1)^{\mathfrak{s}_1}=\langle e_{12}\rangle$ and $(\Lambda^3\mathfrak{s}_1)^{\mathfrak{s}_1}=0$.

To obtain the classification of $r$-matrices for $\mathfrak{s}_1$ up to Lie algebra automorphisms thereof, we will rely on the determination of the strata of the distribution $\mathscr{E}_{\mathfrak{s}_1}$ (see Proposition \ref{Pro:ClassificationRmatrix}) via  Darboux families for $V_{\mathfrak{s}_1}$ and a few elements of ${\rm Aut}(\mathfrak{s}_1)$. 

By Proposition \ref{Pro:UseProp}, the locus of a Darboux family for a Lie algebra $V_{\mathfrak{s}_1}$ is the sum (as subsets) of strata of the generalised distribution $\mathscr{E}_{\mathfrak{s}_1}$ spanned by the vector fields of $V_{\mathfrak{s}_1}$. In particular, we are interested in those loci being a submanifold within $\mathcal{Y}_{\mathfrak{s}_1}$ of dimension matching the rank of $\mathscr{E}_\mathfrak{\mathfrak{s}_1}$ on them. This is due to the fact that, in view of Proposition \ref{Pro:UseProp}, their connected components are the orbits of ${\rm Aut}_c(\mathfrak{s}_1)$ on $\mathcal{Y}_{\mathfrak{s}_1}$.
With this aim, we derive the fundamental vector fields of the action of ${\rm Aut}(\mathfrak{s}_1)$ on $\Lambda^2 \mathfrak{s}_1$. By Remark \ref{Re:DerAlg}, the derivations of $\mathfrak{s}_1$ take the form (\ref{Eq:Der1}),
and a basis of $V_{\mathfrak{s}_1}$ is given by (\ref{fundv_s1}).

For an element $r \in \Lambda^2 \mathfrak{s}_1$, we get
$$
[r,r] = 2(-x_2 x_5 + x_3 x_4 - x_4 x_5)e_{123}  -2x_5^2e_{124} + 2(x_3 -x_5)x_6e_{134} + 2x_5 x_6e_{234}.
$$
This expression can easily be derived by using the properties of the algebraic Schouten bracket and the structure constants for $\mathfrak{s}_1$. A similar computation would not be difficult even for many  higher-dimensional Lie algebras. 

Since $(\Lambda^3 \mathfrak{s}_1)^{\mathfrak{s}_1} = 0$, the mCYBE and the CYBE for $\mathfrak{s}_1$ are equal and read
$$
x_3 x_4 = 0, \quad x_3 x_6 = 0, \quad x_5=0.
$$
It is interesting  that, since $(\Lambda^3\mathfrak{s}_2)^{\mathfrak{s}_1}=0$, every $r$-matrix for $\mathfrak{s}_1$ amounts to a left-invariant Poisson bivector on $\mathfrak{s}_1^*$ (cf. \cite{CP94}).

Let us apply to $\mathfrak{s}_1$ the following  procedure to obtain the orbits of ${\rm Aut}_c(\mathfrak{g})$ on $\mathcal{Y}_\mathfrak{g}$ for a Lie algebra $\mathfrak{g}$ of arbitrary dimension, i.e. the strata of $\mathscr{E}_\mathfrak{g}$ within $\mathcal{Y}_\mathfrak{g}$. We start by considering a locus $\ell_1$ of a Darboux family of $V_\mathfrak{g}$ on $\Lambda^2\mathfrak{g}$. As we are interested in determining the strata of $\mathscr{E}_{\mathfrak{g}}$ within $\mathcal{Y}_{\mathfrak{g}}$, we look for an $\ell_1$ containing some solution to the mCYBE, namely $\ell_1\cap \mathcal{Y}_\mathfrak{g}\neq \emptyset$. Recall that  the strata of $\mathscr{E}_\mathfrak{g}$ are completely contained within the locus or completely contained off the locus of every Darboux family for $V_{\mathfrak{g}}$. This divides the strata of $\mathscr{E}_\mathfrak{g}$  on $\mathcal{Y}_\mathfrak{g}$ into those within or outside $\ell_1$. If $\ell_1\cap \mathcal{Y}_\mathfrak{g}$ is a submanifold of dimension given by the rank of $\mathscr{E}_{\mathfrak{g}}$ on it, its connected parts are the strata of $\mathscr{E}_{\mathfrak{g}}$ in $\ell_1\cap \mathcal{Y}_{\mathfrak{g}}$. Otherwise, to obtain the strata within $\ell_1\cap \mathcal{Y}_\mathfrak{g}$, we look for a new Darboux family for $V_\mathfrak{g}$ containing the previous Darboux family so that its new locus, let us say $\ell_2$,  will satisfy $\ell_2\cap \mathcal{Y}_\mathfrak{g}\neq\emptyset$. This process is repeated until we obtain a locus, $\ell_k$, that is a submanifold of dimension equal to the rank of $\mathscr{E}_\mathfrak{g}$ on it. In other words, the connected components of the locus $\ell_k$ are the orbits of ${\rm Aut}_c(\mathfrak{g})$ in $\ell_k$. 

To determine all the strata of $\mathscr{E}_\mathfrak{g}$ on $\mathcal{Y}_\mathfrak{g}\cap (\ell_k)^c$, where $(\ell_k)^c$ stands for the complementary subset of $\ell_k$ in $\Lambda^2\mathfrak{g}$, which is always open, one considers a subspace of the final Darboux family and the previous procedure is applied again iteratively to the obtained Darboux family on $\Lambda^2\mathfrak{g}\backslash\ell_k$. At the end, we obtain another family of strata of $\mathscr{E}_{\mathfrak{g}}$ and the procedure can be repeated to search for remaining strata of $\mathscr{E}_{\mathfrak{g}}$ in $\mathcal{Y}_{\mathfrak{g}}$. The above-described process, for all four-dimensional indecomposable Lie algebras, allows us to obtain the orbits of ${\rm Aut}_c(\mathfrak{g})$ on $\mathcal{Y}_\mathfrak{g}$.

To represent schematically our procedure, we use a diagram, which is hereafter called a {\it Darboux tree}. Every Darboux family of the above-mentioned method is described by a set of boxes going from an edge of the diagram on the left to one of the edges on the right. The squared boxes of the type $f=0$, for a certain function $f$, give the generating functions of the Darboux family while the squared boxes of the type $f\neq 0$,  give the conditions restricting the manifold in $\Lambda^2\mathfrak{g}$ where the Darboux family and $V_{\mathfrak{g}}$ are restricted to. The connected parts of the loci of all the Darboux families represented in a Darboux tree give rise to a decomposition of $\mathcal{Y}_{\mathfrak{g}}$ into orbits of ${\rm Aut}_c(\mathfrak{g})$. To help understanding the calculations, some oval boxes with additional information, e.g. explaining why other possibilities are not considered, are given. It is worth noting that bricks where employed to generate Darboux families. In particular, the Darboux tree of $\mathfrak{s}_1$ is given below. In this case, it is immediate that $x_5$ and $x_6$ are bricks for $V_{\mathfrak{s}_1}$. We use this fact to generate Darboux families of the following Darboux tree.

\begin{center}
{\small
\begin{tikzpicture}[
roundnode/.style={rounded rectangle, draw=green!40, fill=green!3, very thick, minimum size=2mm},
squarednode/.style={rectangle, draw=red!30, fill=red!2, thick, minimum size=4mm}
]
\node[squarednode] (brick) at (0,0) {$x_5=0$};

\node[squarednode] (u)  at (2,0) {$x_6=0$};
\node[squarednode] (d)  at (2,-7) {$x_6 \neq 0$};

\node[squarednode] (uu)  at (4,0) {$x_3=0$};
\node[squarednode] (ud)  at (4,-5) {$x_3 \neq 0$};
\node[squarednode] (du)  at (4,-7) {$x_3=0$};
\node[roundnode] (dd)  at (4,-8) {$\stackrel{\tiny{\rm No \, solutions}}{x_3 \neq 0}$};

\node[squarednode] (uuu)  at (6,0) {$x_4=0$};
\node[squarednode] (uud)  at (6,-4) {$x_4 \neq 0$};
\node[roundnode] (udu)  at (6,-5) {$\stackrel{\tiny{\rm No \, solutions}}{x_4 \neq 0}$};
\node[squarednode] (udd)  at (6,-6) {$x_4=0$};
\node[squarednode] (duu)  at (6,-7) {$x_1=0$};
\node[squarednode] (dud)  at (6,-8) {$x_1\neq 0$};

\node[squarednode] (uuuu)  at (8,0) {$x_2=0$};
\node[squarednode] (uuud)  at (8,-2) {$x_2 \neq 0$};
\node[squarednode] (uudu)  at (8,-4) {$x_1=0$};
\node[squarednode] (uudd)  at (8,-5) {$x_1 \neq 0$};

\node[squarednode] (uuuuu)  at (10,0) {$x_1 = 0$};
\node[squarednode] (uuuud)  at (10,-1) {$x_1 \neq 0$};
\node[squarednode] (uuudu)  at (10,-2) {$x_1 = 0$};
\node[squarednode] (uuudd)  at (10,-3) {$x_1 \neq 0$};

\node[squarednode] (0) at (12,0) {0};
\node[squarednode] (I) at (12,-1) {I};
\node[squarednode] (II) at (12,-2) {II};
\node[squarednode] (III) at (12,-3) {III};
\node[squarednode] (IV) at (12,-4) {IV};
\node[squarednode] (V) at (12,-5) {V};
\node[squarednode] (VI) at (12,-6) {VI};
\node[squarednode] (VII) at (12,-7) {VII};
\node[squarednode] (VIII) at (12,-8) {VIII};

\draw[->] (brick.east) -- (u.west);
\draw[->] (brick.east) -- (d.west);

\draw[->] (u.east) -- (uu.west);
\draw[->] (u.east) -- (ud.west);
\draw[->] (d.east) -- (du.west);
\draw[->] (d.east) -- (dd.west);

\draw[->] (uu.east) -- (uuu.west);
\draw[->] (uu.east) -- (uud.west);
\draw[->] (ud.east) -- (udu.west);
\draw[->] (ud.east) -- (udd.west);
\draw[->] (du.east) -- (duu.west);
\draw[->] (du.east) -- (dud.west);

\draw[->] (uuu.east) -- (uuuu.west);
\draw[->] (uuu.east) -- (uuud.west);
\draw[->] (uud.east) -- (uudu.west);
\draw[->] (uud.east) -- (uudd.west);

\draw[->] (uuuu.east) -- (uuuuu.west);
\draw[->] (uuuu.east) -- (uuuud.west);
\draw[->] (uuud.east) -- (uuudu.west);
\draw[->] (uuud.east) -- (uuudd.west);

\end{tikzpicture}
}
\end{center}


The connected components of the loci of the Darboux families of the previous Darboux tree give the orbits of ${\rm Aut}_c(\mathfrak{s}_1)$ on $\mathcal{Y}_{\mathfrak{s}_1}$. Results are given in Table \ref{Tab:g_orb_1}. Let us study the equivalence of $r$-matrices up to action of the whole ${\rm Aut}(\mathfrak{s}_1)$. To do so, we use that the group of automorphisms of $\mathfrak{s}_1$ reads
$$
{\rm Aut}(\mathfrak{s}_1) = \left\{
\left( 
\begin{array}{cccc}
T_2^2 & T^2_1 & 0 & T^4_1 \\
0 & T^2_2 & 0 & T^4_2 \\
0 & 0 & T_3^3 & T^4_3 \\
0 & 0 & 0 & 1
\end{array}
\right): T_2^2,  T_3^3 \in \rz, \, T^2_1, T^4_1, T^4_2, T^4_3 \in \mathbb{R}
\right\}.
$$
This Lie group is easy to be derived from the structure constants of $\mathfrak{s}_1$. In reality, it is enough for our purposes to consider an element of each connected component of ${\rm Aut}(\mathfrak{s}_1)$. It is remarkable that Lie algebra automorphism groups can rather easily be obtained for Lie algebras of relatively high dimension and/or satisfying special properties, e.g. for semisimple Lie algebras \cite{Mu52}. In our case, it is immediate that ${\rm Aut}(\mathfrak{s}_1)$ has four such components. One element of each connected component of ${\rm Aut}(\mathfrak{s}_1)$ and its extension  to $\Lambda^2\mathfrak{s}_1$ are given by 
\begin{equation}\label{aut_s1_con}
{\small T_{\lambda_1,\lambda_2}:=\left(
\begin{array}{cccc}
\lambda_1 & 0 & 0 & 0 \\
0 & \lambda_1 & 0 & 0 \\
0 & 0 & \lambda_2 & 0 \\
0 & 0 & 0 & 1
\end{array}
\right) \Longrightarrow 
\Lambda^2T_{\lambda_1,\lambda_2}:=\left(
\begin{array}{cccccc}
1 & 0 & 0 & 0 &0&0 \\
0 & \lambda_1 \lambda_2 & 0 & 0 &0&0 \\
0 & 0 & \lambda_1 & 0 &0&0 \\
0 & 0 & 0 & \lambda_1 \lambda_2&0&0 \\
0 & 0 & 0 & 0&\lambda_1&0 \\
0 & 0 & 0 & 0&0&\lambda_2 \\
\end{array}
\right),\qquad \lambda_1, \lambda_2 \in \{\pm 1\}}.
\end{equation}
The orbits of ${\rm Aut}(\mathfrak{s}_1)$ in $\mathcal{Y}_{\mathfrak{s}_1}$ are given by the action of $\Lambda^2T_{\lambda_1,\lambda_2}$ on the orbits of ${\rm Aut}_c(\mathfrak{s}_1)$ on $\mathcal{Y}_{\mathfrak{s}_1}$.
By using all mappings $\Lambda^2T_{\lambda_1,\lambda_2}$, we can verify whether some of the orbits of ${\rm Aut}_c(\mathfrak{s}_1)$ on $\mathcal{Y}_{\mathfrak{s}_1}$ can be  connected among themselves by a Lie algebra automorphism of $\mathfrak{s}_1$. Our results are summarised in Table \ref{Tab:g_orb_1}.

Recall that $(\Lambda^2\mathfrak{s}_1)^{\mathfrak{s}_1}=\langle e_{12}\rangle$. Consequently, all orbits of ${\rm Aut}(\mathfrak{s}_1)$ on $\mathcal{Y}_{\mathfrak{s}_1}$ mapping onto the same space in $\Lambda^2_R\mathfrak{s}_1$ via $\pi_{\mathfrak{s}_1}$ give equivalent coboundary coproducts up to the action of elements of ${\rm Aut}(\mathfrak{s}_1)$. In particular, we get five classes of coboundary coproducts induced by the following classes of $r$-matrices: a) ${\rm I}_{\pm}$, b) II, III$_{\pm}$, c) IV, V$_\pm$, d) VI, and e) VII, VIII$_{\pm}$. We recall that all given $r$-matrices are solutions to the CYBE.

\subsection{Lie algebra $\mathfrak{s}_{2}$}
Let us apply the formalism given in the previous section to  Lie algebra $\mathfrak{s}_2$. Lie algebra $\mathfrak{s}_{2}$ has structure constants given in Table \ref{Tab:StruCons}. Relevant Schouten brackets between the bases of elements of $\mathfrak{s}_2$, $\Lambda^2\mathfrak{s}_2$, and $\Lambda^3\mathfrak{s}_2$ are given in Tables    \ref{Tab:Schoten_1_2}--\ref{Tab:Schoten_1_3}. Remarkably,  $(\Lambda^2\mathfrak{s}_2)^{\mathfrak{s}_2}=0$ and $(\Lambda^3\mathfrak{s}_2)^{\mathfrak{s}_2}=0$.

By Remark \ref{Re:DerAlg}, it follows that $$\mathfrak{der}(\mathfrak{s}_2)=
\left\{\left(
\begin{array}{cccc}
\mu_{11} & \mu_{12} & \mu_{13} & \mu_{14} \\
0 & \mu_{11} & \mu_{12} & \mu_{24} \\
0 & 0 & \mu_{11} & \mu_{34} \\
0 & 0 & 0 & 0
\end{array}
\right): \mu_{11}, \mu_{12}, \mu_{13}, \mu_{14}, \mu_{24}, \mu_{34} \in \mathbb{R}\right\},
$$
which gives rise, by lifting these derivations to $\Lambda^2\mathfrak{s}_2$, to the basis of $V_{\mathfrak{s}_2}$ of the form
\begin{equation*}
\begin{gathered} 
X_1 = 2x_1 \partial_{x_1} + 2x_2 \partial_{x_2} + x_3 \partial_{x_3} + 2x_4 \partial_{x_4} + x_5 \partial_{x_5} +  x_6 \partial_{x_6}, \quad
X_2 = x_2 \partial_{x_1} + x_4 \partial_{x_2} + x_5 \partial_{x_3} + x_6 \partial_{x_5}, \\
X_3 = -x_4 \partial_{x_1} + x_6 \partial_{x_3}, \quad
X_4 = -x_5 \partial_{x_1} - x_6 \partial_{x_2}, \quad
X_5 = x_3 \partial_{x_1} - x_6 \partial_{x_4}, \quad
X_6 = x_3 \partial_{x_2} + x_5 \partial_{x_4}.
\end{gathered}
\end{equation*}

Meanwhile, for $r\in \Lambda^2\mathfrak{s}_2$, one has
$$
[r,r]=2(2x_1x_6 - 2x_2x_5 + x_2x_6 + 2x_3x_4 - x_4x_5)e_{123} + 2(x_3x_6 - x_5^2)e_{124} - 2x_5x_6e_{134} -2x_6^2e_{234}.
$$

Since $(\Lambda^3 \mathfrak{s}_2)^{\mathfrak{s}_2}=0$, the  mCYBE and the CYBE are the same and read
$$
x_3x_4 = 0, \quad x_5 = 0, \quad x_6 = 0.
$$

With the previous information, we are  ready to obtain the classification of orbits of the action of Aut$_c(\mathfrak{s}_2)$ on the subset of $r$-matrices $\mathcal{Y}_{\mathfrak{s}_2}\subset \Lambda^2\mathfrak{s}_2$ by using Darboux families. We shall start by using bricks. The only brick of $\mathfrak{s}_2$ is $x_6$. The procedure is accomplished as in the previous section and it is summarised in the following Darboux tree. 

\begin{center}
 {\small
\begin{tikzpicture}[
roundnode/.style={rounded rectangle, draw=green!40, fill=green!3, very thick, minimum size=2mm},
squarednode/.style={rectangle, draw=red!30, fill=red!2, thick, minimum size=4mm}
]
\node[squarednode] (brick) at (0,0) {$x_6=0$};

\node[squarednode] (u)  at (2,0) {$x_5=0$};
\node[roundnode] (d)  at (2,-5) {$\stackrel{\tiny{\rm No\,\, solutions}}{x_5 \neq 0}$};

\node[squarednode] (uu)  at (4,0) {$x_4=0$};
\node[squarednode] (ud)  at (4,-4) {$x_4 \neq 0$};

\node[squarednode] (uuu)  at (6,0) {$x_3=0$};
\node[squarednode] (uud)  at (6,-3) {$x_3 \neq 0$};
\node[squarednode] (udu)  at (6,-4) {$x_3=0$};
\node[roundnode] (udd)  at (6,-5) {$\stackrel{\tiny{\rm No\,\, solutions}}{x_3 \neq 0}$};

\node[squarednode] (uuuu)  at (8,0) {$x_2=0$};
\node[squarednode] (uuud)  at (8,-2) {$x_2 \neq 0$};

\node[squarednode] (uuuuu)  at (10,0) {$x_1 = 0$};
\node[squarednode] (uuuud)  at (10,-1) {$x_1 \neq 0$};

\node[squarednode] (0) at (12,0) {0};
\node[squarednode] (I) at (12,-1) {I};
\node[squarednode] (II) at (12,-2) {II};
\node[squarednode] (III) at (12,-3) {III};
\node[squarednode] (IV) at (12,-4) {IV};

\draw[->] (brick.east) -- (u.west);
\draw[->] (brick.east) -- (d.west);

\draw[->] (u.east) -- (uu.west);
\draw[->] (u.east) -- (ud.west);

\draw[->] (uu.east) -- (uuu.west);
\draw[->] (uu.east) -- (uud.west);
\draw[->] (ud.east) -- (udu.west);
\draw[->] (ud.east) -- (udd.west);

\draw[->] (uuu.east) -- (uuuu.west);
\draw[->] (uuu.east) -- (uuud.west);

\draw[->] (uuuu.east) -- (uuuuu.west);
\draw[->] (uuuu.east) -- (uuuud.west);

\end{tikzpicture}}
\end{center}

As  commented in Subsection \ref{Sec:s1}, to obtain the orbits of ${\rm Aut}(\mathfrak{s}_2)$ on $\mathcal{Y}_{\mathfrak{s}_2}$, it is enough to derive the extension to $\Lambda^2\mathfrak{s}_2$ of a single element of each connected component of Aut$(\mathfrak{s}_2)$.
The automorphisms of the Lie algebra $\mathfrak{s}_{2}$ can easily be obtained. They  read
$$
{\rm Aut}(\mathfrak{s}_2)=\left\{\left(
\begin{array}{cccc}
T_2^2 & T^3_2 & T^3_1 & T^4_1 \\
0 & T^2_2 & T^3_2 & T^4_2 \\
0 & 0 & T^2_2 & T^4_3 \\
0 & 0 & 0 & 1
\end{array}
\right): T^2_2 \in \rz, \, T_{1}^3,T_2^3,T_1^4,T_2^4,T_3^4 \in \mathbb{R}\right\}.
$$
Then, one element of ${\rm Aut}(\mathfrak{s}_2)$ for each of its connected components and their extensions to $\Lambda^2\mathfrak{s}_2$ read
\begin{equation}
\label{aut_s2_con}
T_\lambda:=\left(
\begin{array}{cccc}
\lambda & 0 & 0 & 0 \\
0 & \lambda & 0 & 0 \\
0 & 0 & \lambda & 0 \\
0 & 0 & 0 & 1
\end{array}
\right)\Longrightarrow \Lambda^2T_\lambda={\small
\left(
\begin{array}{cccccc}
1 & 0 & 0 & 0 &0&0 \\
0 & 1 & 0 & 0 &0&0 \\
0 & 0 & \lambda & 0 &0&0 \\
0 & 0 & 0 & 1&0&0 \\
0 & 0 & 0 & 0&\lambda&0 \\
0 & 0 & 0 & 0&0&\lambda \\
\end{array}
\right)},\qquad \lambda \in \{\pm 1\}.
\end{equation}
The nine connected parts of the submanifolds $0$, I$_\pm$, II$_\pm$, III, IV$_\pm$ are the orbits of Aut$_c(\mathfrak{s}_2)$ on $\mathcal{Y}_{\mathfrak{s}_2}$. 
It is simple to see which of them are further connected through an element of  Aut$(\mathfrak{s}_2)$. In result,
the orbits of Aut$(\mathfrak{s}_2)$ on $\mathcal{Y}_{\mathfrak{s}_2}$ are given by the eight submanifolds $0,{\rm I}_-,{\rm I}_+,{\rm II}_-,{\rm II}_+, {\rm III}, {\rm IV}_-, {\rm IV}_+$ given in Table \ref{Tab:g_orb_1}.  Since $(\Lambda^2\mathfrak{s}_2)^{\mathfrak{s}_2}=0$, each such a submanifold gives a family of equivalent coboundary coproducts that is non-equivalent to $r$-matrices within remaining submanifolds. This gives all possible classes of non-equivalent coboundary coproducts. As in the previous subsection, all $r$-matrices are solutions to the CYBE in $\mathfrak{s}_2$.

\subsection{Lie algebra $\mathfrak{s}_{3}$}\label{Subsec:3}

The structure constants of the Lie algebras of type $\mathfrak{s}_3$ are given in Table \ref{Tab:StruCons}. It was noted in \cite[p. 228]{SW14}  that additional restrictions must be imposed on the parameters $\alpha,\beta$ given in Table \ref{Tab:StruCons} to avoid the repetition of isomorphic Lie algebras within the Lie algebra class $\mathfrak{s}_3$. Such restrictions were not explicitly detailed in \cite{SW14}, but it is immediate that isomorphic cases within Lie algebras of the class $\mathfrak{s}_3$ can be classified by the adjoint action of $e_4$ on $\langle e_1,e_2,e_3\rangle$. In particular, if $|\alpha|=|\beta|$, then the Lie algebras with parameters $(\alpha,\beta)$ and $(\beta,\alpha)$ are isomorphic relative to the Lie algebra isomorphism that interchanges $e_2$ with $e_3$ and leaves $e_1$ invariant. It will be relevant in what follows that the lift of this Lie algebra automorphism to $\Lambda^2\mathfrak{s}_3$ interchanges $x_1$ with $x_2$, $x_5$ with $x_6$, it maps $x_4$ to $-x_4$, and it leaves $x_3$ invariant. Due to the previous isomorphism, we restrict ourselves to the case $\alpha\geq \beta$ when $|\alpha|=|\beta|$.

Using the structure constants given in Table \ref{Tab:StruCons} and the induced Schouten brackets between the relevant to our purposes basis elements of $\mathfrak{s}_3$, $\Lambda^2\mathfrak{s}_3$, and $\Lambda^3\mathfrak{s}_3$ in Tables   \ref{Tab:Schoten_1_2}--\ref{Tab:Schoten_1_3},  one gets that $(\Lambda^2 \mathfrak{s}_3)^{\mathfrak{s}_3} \subset \{e_{12}, e_{13}, e_{23}\}$. More specifically, $e_{12} \in (\Lambda^2 \mathfrak{s}_3)^{\mathfrak{s}_3}$ if and only if $\alpha = -1$; while $e_{13} \in (\Lambda^2 \mathfrak{s}_3)^{\mathfrak{s}_3}$ if and only if $\beta = -1$. Finally, $e_{23} \in (\Lambda^2 \mathfrak{s}_3)^{\mathfrak{s}_3}$ just when $\alpha + \beta = 0$. Moreover, $(\Lambda^3\mathfrak{s}_3)^{\mathfrak{s}_3}=\langle e_{123}\rangle$ if $\alpha+\beta=-1$ and $(\Lambda^3\mathfrak{s}_3)^{\mathfrak{s}_3}=0$, otherwise.

Since the Lie algebra class $\mathfrak{s}_3$ contains so many subcases that relevant properties may change from one to another subcase with given parameters $(\alpha,\beta)$, e.g. the Lie algebra automorphism group, we will develop here a modification of our method consisting of applying the Darboux family method to the fundamental vector fields of the action of the Lie group ${\rm Aut}_{\rm all}(\mathfrak{s}_3)$ of all common Lie algebra automorphisms for all parameters $\alpha,\beta$ and analysing its relation to the Lie algebra automorphisms for each case, $\mathfrak{s}_{3}^{\alpha,\beta}$, to obtain our final classification. To simplify the notation, we skip the parameters $\alpha,\beta$ in $\mathfrak{s}_{3}^{\alpha,\beta}$ when they are not needed to understand what we are talking out, e.g. if we jest care about the dimension of the linear space $\Lambda^2\mathfrak{s}_3$.

By Remark \ref{Re:DerAlg}, one obtains the following common Lie algebra of derivations of $\mathfrak{s}_3$ for all the values of $\alpha$ and $\beta$:
$$
\mathfrak{der}_{\rm all}(\mathfrak{s}_3)=\left\{\left(
\begin{array}{cccc}
\mu_{11} & 0 & 0 & \mu_{14} \\
0 & \mu_{22} & 0 & \mu_{24} \\
0 & 0 & \mu_{33} & \mu_{34} \\
0 & 0 & 0 & 0
\end{array}
\right): \mu_{11}, \mu_{22}, \mu_{33}, \mu_{14}, \mu_{24}, \mu_{34} \in \mathbb{R}\right\},
$$
and, after lifting the elements of a basis of $\mathfrak{der}_{\rm all}(\mathfrak{s}_3)$ to $\Lambda^2\mathfrak{s}_3$, we obtain a basis of $V^{\rm all}_{\mathfrak{s}_3}$, namely the Lie algebra of fundamental vector fields of the action of the Lie algebra automorphism group ${\rm Aut}_{\rm all}(\mathfrak{s}_3)$ common to all  $\alpha,\beta$ acting on $\Lambda^2\mathfrak{s}_3$, of the form
\begin{equation*}
	\begin{gathered}
	X_1 = x_1 \partial_{x_1} + x_2 \partial_{x_2} + x_3 \partial_{x_3}, \quad
	X_2 = - x_5 \partial_{x_1} - x_6 \partial_{x_2}, \quad
	X_3 = x_1 \partial_{x_1} + x_4 \partial_{x_4} + x_5 \partial_{x_5}, \\
	X_4 = x_3 \partial_{x_1} - x_6 \partial_{x_4}, \quad
	X_5 = x_2 \partial_{x_2} + x_4 \partial_{x_4} + x_6 \partial_{x_6}, \quad
	X_6 = x_3 \partial_{x_2} + x_5 \partial_{x_4}.
	\end{gathered}
	\end{equation*}
It is worth stressing that if $\alpha=\beta$ and/or one on the coefficients $\alpha,\beta$ are equal to one, then the Lie algebra of derivations of the particular $\mathfrak{s}^{\alpha,\beta}_3$ is larger due to the existence of Lie algebra automorphisms of $\mathfrak{s}^{\alpha,\beta}_3$ leaving invariant the eigenspaces of ${\rm ad}_{e_4}$ acting on $\langle e_1,e_2,e_3\rangle$ (more detailed calculations can be seen at the end of this subsection). 
By dealing with $\mathfrak{der}_{\rm all}(\mathfrak{s}_3)$, we shall derive  the orbits of the connected part of the identity of the group of common Lie algebra automorphisms for all $\alpha,\beta$, i.e. ${\rm Aut}_{\rm all}(\mathfrak{s}_3)$, on each $\mathcal{Y}_{\mathfrak{s}^{\alpha,\beta}_3}$ via Darboux families. To obtain the equivalence of $r$-matrices up to the action of ${\rm Aut}(\mathfrak{s}^{\alpha,\beta}_3)$ for each pair $(\alpha,\beta)$, we will use the action of elements of ${\rm Aut}(\mathfrak{s}^{\alpha,\beta}_3)$ not contained in ${\rm Aut}_{\rm all,c}(\mathfrak{s}_3)$ for each particular pair of parameters $(\alpha,\beta)$. 

For an element $r \in \Lambda^2 \mathfrak{s}_3$, we get
$$
[r,r] = 2[(1 + \alpha)x_1 x_6 - (1 + \beta)x_2 x_5 + (\alpha + \beta)x_3 x_4]e_{123} + 2(\alpha - 1)x_3 x_5e_{124} + 2(\beta - 1)x_3 x_6e_{134} + 2(\beta - \alpha)x_5 x_6e_{234}.
$$

For $1 + \alpha + \beta \neq 0$, we have $(\Lambda^3 \mathfrak{s}^{\alpha,\beta}_3)^{\mathfrak{s}^{\alpha,\beta}_3} = 0$. Then, the mCYBE and the CYBE are the same in this case and they read
$$
(1 + \alpha)x_1 x_6 - (1 + \beta)x_2 x_5 + (\alpha + \beta)x_3 x_4 = 0, \quad (\alpha - 1)x_3 x_5 = 0, \quad (\beta - 1)x_3 x_6 = 0, \quad (\beta - \alpha)x_5 x_6 = 0.
$$

If $1 + \alpha + \beta = 0$, then $(\Lambda^3 \mathfrak{s}^{\alpha,-1-\alpha}_3)^{\mathfrak{s}^{\alpha,-1-\alpha}_3} = \langle e_{123}\rangle$ and the mCYBE reads
$$
(\alpha - 1)x_3 x_5 = 0, \quad (\beta - 1)x_3 x_6 = 0, \quad (\beta - \alpha)x_5 x_6 = 0.
$$

Since $x_3, x_5, x_6$ are   bricks of $\mathfrak{s}^{\alpha,\beta}_3$ for every pair $(\alpha,\beta)$, our Darboux tree for the Darboux families of  $V_{\mathfrak{s}_3}^{\rm all}$ starts with cases $x_i = 0$ and $x_i \neq 0$, $i \in \{3,5,6\}$. The full Darboux tree is presented below.

\begin{center}
{\small
\begin{tikzpicture}[
roundnode/.style={rounded rectangle, draw=green!40, fill=green!3, very thick, minimum size=2mm},
squarednode/.style={rectangle, draw=red!30, fill=red!2, thick, minimum size=4mm}
]
\node[squarednode] (brick) at (0,0) {$x_6 = 0$};

\node[squarednode] (u)  at (2,0) {$x_5=0$};
\node[squarednode] (d)  at (2,-10) {$x_5 \neq 0$};

\node[squarednode] (uu)  at (4,0) {$x_3=0$};
\node[squarednode] (ud)  at (4,-8) {$x_3 \neq 0$};
\node[squarednode] (du)  at (4,-10) {$x_3=0$};
\node[squarednode] (dd)  at (4,-12) {$\stackrel{\alpha = 1}{x_3 \neq 0}$};

\node[squarednode] (uuu)  at (6,0) {$x_4=0$};
\node[squarednode] (uud)  at (6,-4) {$x_4 \neq 0$};
\node[squarednode] (udu)  at (6,-8) {$x_4=0$};
\node[squarednode] (udd)  at (6,-9) {$\stackrel{\alpha + \beta \in \{0, -1\}}{x_4 \neq 0}$};
\node[squarednode] (duu)  at (6,-10) {$x_2=0$};
\node[squarednode] (dud)  at (6,-11) {$\stackrel{\beta = -1 \, {\rm or} \, \alpha + \beta = -1}{x_2 \neq 0}$};
\node[squarednode] (ddu)  at (7,-12) {$x_3 x_4 - x_2 x_5=0$};
\node[squarednode] (ddd)  at (7,-13) {$\stackrel{\beta  = -1}{x_3 x_4 - x_2 x_5 \neq 0}$};

\node[squarednode] (uuuu)  at (8,0) {$x_2=0$};
\node[squarednode] (uuud)  at (8,-2) {$x_2 \neq 0$};
\node[squarednode] (uudu)  at (8,-4) {$x_2=0$};
\node[squarednode] (uudd)  at (8,-6) {$x_2 \neq 0$};

\node[squarednode] (uuuuu)  at (10,0) {$x_1=0$};
\node[squarednode] (uuuud)  at (10,-1) {$x_1 \neq 0$};
\node[squarednode] (uuudu)  at (10,-2) {$x_1=0$};
\node[squarednode] (uuudd)  at (10,-3) {$x_1 \neq 0$};
\node[squarednode] (uuduu)  at (10,-4) {$x_1=0$};
\node[squarednode] (uudud)  at (10,-5) {$x_1 \neq 0$};
\node[squarednode] (uuddu)  at (10,-6) {$x_1=0$};
\node[squarednode] (uuddd)  at (10,-7) {$x_1 \neq 0$};

\node[squarednode] (0) at (12,0) {0};
\node[squarednode] (I) at (12,-1) {I};
\node[squarednode] (II) at (12,-2) {II};
\node[squarednode] (III) at (12,-3) {III};
\node[squarednode] (IV) at (12,-4) {IV};
\node[squarednode] (V) at (12,-5) {V};
\node[squarednode] (VI) at (12,-6) {VI};
\node[squarednode] (VII) at (12,-7) {VII};
\node[squarednode] (VIII) at (12,-8) {VIII};
\node[squarednode] (IX) at (12,-9) {${\rm IX}_{\alpha + \beta \in \{0, -1\}}$};
\node[squarednode] (X) at (12,-10) {X};
\node[squarednode] (XI) at (12,-11) {${\rm XI}^{\beta = -1}_{\alpha + \beta = -1}$};
\node[squarednode] (XII) at (12,-12) {${\rm XII}_{\alpha = 1}$};
\node[squarednode] (XIII) at (12,-13) {${\rm XIII}_{\alpha = -\beta = 1}$};

\draw[->] (brick.east) -- (u.west);
\draw[->] (brick.east) -- (d.west);

\draw[->] (u.east) -- (uu.west);
\draw[->] (u.east) -- (ud.west);
\draw[->] (d.east) -- (du.west);
\draw[->] (d.east) -- (dd.west);

\draw[->] (uu.east) -- (uuu.west);
\draw[->] (uu.east) -- (uud.west);
\draw[->] (ud.east) -- (udu.west);
\draw[->] (ud.east) -- (udd.west);
\draw[->] (du.east) -- (duu.west);
\draw[->] (du.east) -- (dud.west);
\draw[->] (dd.east) -- (ddu.west);
\draw[->] (dd.east) -- (ddd.west);

\draw[->] (uuu.east) -- (uuuu.west);
\draw[->] (uuu.east) -- (uuud.west);
\draw[->] (uud.east) -- (uudu.west);
\draw[->] (uud.east) -- (uudd.west);

\draw[->] (uuuu.east) -- (uuuuu.west);
\draw[->] (uuuu.east) -- (uuuud.west);
\draw[->] (uuud.east) -- (uuudu.west);
\draw[->] (uuud.east) -- (uuudd.west);
\draw[->] (uudu.east) -- (uuduu.west);
\draw[->] (uudu.east) -- (uudud.west);
\draw[->] (uudd.east) -- (uuddu.west);
\draw[->] (uudd.east) -- (uuddd.west);

\end{tikzpicture}
}
\end{center}

\begin{center}
{\small
\begin{tikzpicture}[
roundnode/.style={rounded rectangle, draw=green!40, fill=green!3, very thick, minimum size=2mm},
squarednode/.style={rectangle, draw=red!30, fill=red!2, thick, minimum size=4mm}
]
\node[squarednode] (brick) at (0,0) {$x_6 \neq 0$};

\node[squarednode] (u)  at (2,0) {$x_5=0$};
\node[squarednode] (d)  at (2,-4) {$\stackrel{\alpha = \beta}{x_5 \neq 0}$};

\node[squarednode] (uu)  at (4,0) {$x_3=0$};
\node[squarednode] (ud)  at (4,-2) {$\stackrel{\beta = 1}{x_3 \neq 0}$};
\node[squarednode] (du)  at (4,-4) {$x_3=0$};
\node[squarednode] (dd)  at (4,-6) {$\stackrel{\alpha = 1}{x_3 \neq 0}$};

\node[squarednode] (uuu)  at (6,0) {$x_1=0$};
\node[squarednode] (uud)  at (6,-1) {$\stackrel{\alpha = -1 \, {\rm or} \, \alpha + \beta = -1}{x_1 \neq 0}$};
\node[squarednode] (udu)  at (7,-2) {$x_3 x_4 + x_1 x_6=0$};
\node[roundnode] (udd)  at (7,-3) {$\stackrel{{\rm Isomorphic\, to\, XIII}_{\alpha=-\beta=1}}{\alpha = -1,x_3 x_4 + x_1 x_6 \neq 0}$};
\node[squarednode] (duu)  at (7,-4) {$x_1 x_6 - x_2 x_5=0$};
\node[squarednode] (dud)  at (7,-5) {$\stackrel{\alpha=\beta=-1/2}{x_1 x_6 - x_2 x_5 \neq 0}$};
\node[squarednode] (ddu)  at (8,-6) {$x_1 x_6 - x_2 x_5 + x_3 x_4=0$};
\node[roundnode] (ddd)  at (8,-7) {$\stackrel{{\rm No \, solutions}}{x_1 x_6 - x_2 x_5 + x_3 x_4 \neq 0}$};

\node[squarednode] (0) at (12,0) {XIV};
\node[squarednode] (I) at (12,-1) {${\rm XV}^{\alpha = -1}_{\alpha + \beta = -1}$};
\node[squarednode] (II) at (12,-2) {${\rm XVI}_{\beta = 1}$};
\node[squarednode] (IV) at (12,-4) {${\rm XVII}_{\alpha = \beta}$};
\node[squarednode] (V) at (12,-5) {${\rm XVIII}_{\alpha =\beta=-1/2}$};
\node[squarednode] (VI) at (12,-6) {${\rm XIX}_{\alpha = \beta=1}$};

\draw[->] (brick.east) -- (u.west);
\draw[->] (brick.east) -- (d.west);

\draw[->] (u.east) -- (uu.west);
\draw[->] (u.east) -- (ud.west);
\draw[->] (d.east) -- (du.west);
\draw[->] (d.east) -- (dd.west);

\draw[->] (uu.east) -- (uuu.west);
\draw[->] (uu.east) -- (uud.west);
\draw[->] (ud.east) -- (udu.west);
\draw[->] (ud.east) -- (udd.west);
\draw[->] (du.east) -- (duu.west);
\draw[->] (du.east) -- (dud.west);
\draw[->] (dd.east) -- (ddu.west);
\draw[->] (dd.east) -- (ddd.west);

\end{tikzpicture}
}
\end{center}

\vspace{1cm}
We now study four subcases: a) $
\alpha=\beta=1$, b) $\alpha=1\neq \beta$, c) $\alpha=\beta\neq 1$, d) remaining non-isomorphic cases.

{\bf Case d)}: In this case, ${\rm ad}_{e_4}$ acts on $\langle e_1,e_2,e_3\rangle$ having three different eigenvalues and this leads to the fact that the only derivations are those common to all $\alpha,\beta$, i.e. $\mathfrak{der}(\mathfrak{s}^{\alpha,\beta}_3)=\mathfrak{der}_{\rm all}(\mathfrak{s}_3)$. The connected parts of the loci of the Darboux families of the above Darboux tree can be found in Table \ref{Tab:g_orb_2}. Such connected parts are the orbits of ${\rm Aut}_{c}(\mathfrak{s}^{\alpha,\beta}_3)$ in $\mathcal{Y}_{\mathfrak{s}^{\alpha,\beta}_3}$ for $\alpha,\beta,1$ taking different values. Let us obtain the classification of $r$-matrices up to elements of ${\rm Aut}(\mathfrak{s}^{\alpha,\beta}_3)$. On the one hand, the Lie algebra automorphism group reads
$$
{\rm Aut}(\mathfrak{s}^{\alpha,\beta}_3) = \left\{ \left(
\begin{array}{cccc}
T^1_1 & 0 & 0 & T^4_1 \\
0 & T^2_2 & 0 & T^4_2 \\
0 & 0 & T^3_3 & T^4_3 \\
0 & 0 & 0 & 1
\end{array}
\right):  T^1_1, T^2_2, T^3_3 \in \rz, T^4_1, T^4_2, T^4_3 \in \rr \right\}.
$$
As in previous sections, we only need for our purposes one element of each connected part of ${\rm Aut} (\mathfrak{s}^{\alpha,\beta}_3)$, which  has eight ones. An element of each connected component and its lift to $\Lambda^2\mathfrak{s} _3$ are given by
\begin{equation}\label{aut_s3_con}
T_{\lambda_1,\lambda_2,\lambda_3}:=\left(
\begin{array}{cccc}
\lambda_1 & 0 & 0 & 0 \\
0 & \lambda_2 & 0 & 0 \\
0 & 0 & \lambda_3 & 0 \\
0 & 0 & 0 & 1
\end{array}
\right) \Rightarrow \Lambda^2T_{\lambda_1,\lambda_2,\lambda_3}:={\small
\left(
\begin{array}{cccccc}
\lambda_1 \lambda_2 & 0 & 0 & 0 &0&0 \\
0 & \lambda_1 \lambda_3 & 0 & 0 &0&0 \\
0 & 0 & \lambda_1 & 0 &0&0 \\
0 & 0 & 0 & \lambda_2 \lambda_3&0&0 \\
0 & 0 & 0 & 0&\lambda_2&0 \\
0 & 0 & 0 & 0&0&\lambda_3 \\
\end{array}
\right)},\quad \lambda_1, \lambda_2, \lambda_3 \in \{\pm 1\}.
\end{equation}

By using (\ref{aut_s3_con}), we can verify whether some of the strata of $\mathscr{E}_{\mathfrak{s} _3}$ in $\mathcal{Y}_{\mathfrak{s}_3^{\alpha,\beta}}$ are still connected by a Lie algebra automorphism of $\mathfrak{s}^{\alpha,\beta}_3$. The results are summarised in Table \ref{Tab:g_orb_1}. To obtain the equivalence classes of coboundary coproducts for each $\alpha,\beta$, it is enough to identify the orbits in Table \ref{Tab:g_orb_1} whose elements are the same up to an element of $(\Lambda^2\mathfrak{s}^{\alpha,\beta}_3)^{\mathfrak{s}^{\alpha,\beta}_3}$.
	In particular, we have the following subcases:
\begin{itemize}
    \item Case $\alpha=-1, \beta \neq 1$. Hence, $(\Lambda^2\mathfrak{s}^{-1,\beta}_3)^{\mathfrak{s}^{-1,\beta}_3}=\langle e_{12}\rangle$. By analysing Table \ref{Tab:g_orb_1}, we obtain the coboundary coproduct classes
$$
a)\, {\rm I},\,b)\,{\rm II},{\rm III},\,c)\,{\rm IV}, {\rm V},\,d)\,{\rm VI}, {\rm VII}_\pm,\,e)\,{\rm VIII},\,f)\,{\rm X},\,g)\,{\rm XIV},{\rm XV}_{\alpha=-1}.
$$
\item Case $\alpha\neq -1$ and $\beta\neq -1$. Since $(\Lambda^2\mathfrak{s}^{\alpha,\beta}_3)^{\mathfrak{s}^{\alpha,\beta}_3}=0$, the classes of equivalent coboundary coproducts are given by the orbits of ${\rm Aut}(\mathfrak{s}^{\alpha,\beta}_3)$ within $\mathcal{Y}_{\mathfrak{s}^{\alpha,\beta}_3}$. In this case, we have the classes of equivalent coboundary coproducts given by 
$$
{\rm I}, {\rm II},  {\rm III}, {\rm IV}, {\rm V},  {\rm VI},  {\rm VII}_+, {\rm VII}_-,  {\rm VIII},  {\rm IX}_{\alpha+\beta\in\{0,-1\}},  {\rm X},  {\rm XI}_{\alpha+\beta=-1}, {\rm XIV}, {\rm XV}_{\alpha+\beta=-1}.
$$
\end{itemize}
{\bf Case c):} This time $\alpha=\beta\neq 1$. Then, we have
$$
{\rm Aut}(\mathfrak{s}^{\alpha,\alpha}_3) = \left\{ \left(
\begin{array}{cccc}
T^1_1 & 0 & 0 & T^4_1 \\
0 & T^2_2 & T_2^3 & T^4_2 \\
0 & T_3^2 & T^3_3 & T^4_3 \\
0 & 0 & 0 & 1
\end{array}
\right):  T^1_1, T^2_2T^3_3-T^2_3T^3_2 \in \rz, T_1^1,T^2_2,T^3_3,T_2^3,T_3^2,T^4_1, T^4_2, T^4_3 \in \rr \right\}.
$$
To obtain the orbits of ${\rm Aut}(\mathfrak{s}^{\alpha,\alpha}_3)$ on $\mathcal{Y}_{\mathfrak{s}^{\alpha,\alpha}_3}$ from the orbits of ${\rm Aut}_{{\rm all},c}(\mathfrak{s}_3)$, it is necessary to write ${\rm Aut}(\mathfrak{s}^{\alpha,\alpha}_3)$ as a composition of ${\rm Aut}_{{\rm all},c}(\mathfrak{s}_3)$ with certain Lie algebra automorphisms of $\mathfrak{s}^{\alpha,\alpha}_3$ so that their composition generates ${\rm Aut}(\mathfrak{s}^{\alpha,\alpha}_3)$. This can be done by using the Lie algebra automorphisms of $\mathfrak{s}_3^{\alpha,\alpha}$ of the form $
T_A:={\rm Id}\otimes A{\otimes} {\rm Id}
$, for $A\in GL(2,\mathbb{R})$. Then, $\Lambda^2T_A=A\otimes {\rm Id}\otimes (\det A) {\rm Id}\otimes A$. By taking the action of these $\Lambda^2T_A$  on the strata of the distribution spanned by $V^{\rm all}_{\mathfrak{s}_3}$ in $\mathcal{Y}_{\mathfrak{s}^{\alpha,\alpha}_3}$, we obtain the orbits of ${\rm Aut}(\mathfrak{s}^{\alpha,\alpha}_3)$ on $\mathcal{Y}_{\mathfrak{s}^{\alpha,\alpha}_3}$. Our results are summarised in Table \ref{Tab:g_orb_2}. 


To obtain the equivalence classes of coboundary coproducts for each $\alpha=\beta$, it is enough to identify the orbits in Table \ref{Tab:g_orb_1} whose elements are the same up to an element of $(\Lambda^2\mathfrak{s}^{\alpha,\alpha}_3)^{\mathfrak{s}^{\alpha,\alpha}_3}=0$. Therefore, the result is given by the classes of equivalent $r$-matrices detailed in Table \ref{Tab:g_orb_2}.

{\bf Case b):} This time $\alpha=1\neq \beta$. Therefore,
$$
{\rm Aut}(\mathfrak{s}^{1,\beta}_3) = \left\{ \left(
\begin{array}{cccc}
T^1_1 & T_1^2 & 0 & T^4_1 \\
T_2^1 & T^2_2 & 0 & T^4_2 \\
0 & 0 & T^3_3 & T^4_3 \\
0 & 0 & 0 & 1
\end{array}
\right):  T^3_3, T^1_1T^2_2-T^1_2T^2_1 \in \rz, T_1^1,T_2^2,T_1^2,T_2^1,T^4_1, T^4_2, T^4_3 \in \rr \right\}.
$$
To derive the orbits of ${\rm Aut}(\mathfrak{s}^{1,\beta}_3)$ on $\mathcal{Y}_{\mathfrak{s}^{1,\beta}_3}$ from the orbits of ${\rm Aut}_{{\rm all},c}(\mathfrak{s}_3)$, we again write ${\rm Aut}(\mathfrak{s}^{1,\beta}_3)$ as a composition of ${\rm Aut}_{{\rm all},c}(\mathfrak{s}_3)$ with certain Lie algebra automorphisms of $\mathfrak{s}^{1,\beta}_3$ so that their composition generates ${\rm Aut}(\mathfrak{s}^{1,\beta}_3)$. This can be done by employing the Lie algebra autmorphisms of $\mathfrak{s}^{1,\beta}_{3}$ given by $
T_A:=A{\otimes} {\rm Id}
\otimes {\rm Id}$, for $A\in GL(2,\mathbb{R})$. Then $\Lambda^2T_A=(\det A){\rm Id}\otimes (\tau_{43}\circ A\otimes A\circ \tau_{43})\otimes {\rm Id}$, where $\tau_{43}$ is the permutation of coordinates three and four in $\Lambda^2\mathfrak{s}^{1,\beta}_3$. By taking the action of these $\Lambda^2T_A$  on the strata of the distribution spanned by $V^{\rm all}_{\mathfrak{s}_3}$ in $\mathcal{Y}_{\mathfrak{s}^{1,\beta}_3}$, we obtain the orbits of ${\rm Aut}(\mathfrak{s}^{1,\beta}_3)$ on $\mathcal{Y}_{\mathfrak{s}^{1,\beta}_3}$. Our results are summarised in Table \ref{Tab:g_orb_2}. 


To obtain the equivalence classes of coboundary coproducts for each $\beta$, it is enough to identify the equivalence classes of $r$-matrices in Table \ref{Tab:g_orb_1} whose elements are the same up to an element of $(\Lambda^2\mathfrak{s}^{1,\beta}_3)^{\mathfrak{s}^{1,\beta}_3}$. If $\beta\neq -1$, these are just the induced by the orbits of ${\rm Aut}(\mathfrak{s}^{1,\beta}_3)$ on $\mathcal{Y}_{\mathfrak{s}^{1,\beta}_3}$. Otherwise, $(\Lambda^2\mathfrak{s}^{1,-1}_3)^{\mathfrak{s}^{1,-1}_3}=\langle e_{13},e_{23}\rangle$ and 

 $$
 a)\,\mathscr{I},\mathscr{III},\,b) \mathscr{II} \,{\rm(zero-class)},\,c)\,\mathscr{IV},\,d)\,\mathscr{V}_{\beta=-1}\,e)\,\mathscr{VI}.
 $$

{\bf Case a):} In this case, it is easier to derive the Darboux families for the Lie algebra of derivations.

$$
\mathfrak{der}_{\rm all}(\mathfrak{s}_3)=\left\{\left(
\begin{array}{cccc}
A &  v \\
0 & 0 
\end{array}
\right): A\in \mathfrak{gl}(3,\mathbb{R}),v\in \mathbb{R}^3\right\}.
$$
Due to the larger family of symmetries for $\alpha=\beta=1$, the Lie algebra $V_{\mathfrak{s}_3^{1,1}}$ is spanned by
\begin{equation*}
	\begin{gathered}
	X_1 = x_1 \partial_{x_1} + x_2 \partial_{x_2} + x_3 \partial_{x_3}, \quad
	X_2 = - x_5 \partial_{x_1} - x_6 \partial_{x_2}, \quad
	X_3 = x_1 \partial_{x_1} + x_4 \partial_{x_4} + x_5 \partial_{x_5}, \\
	X_4 = x_3 \partial_{x_1} - x_6 \partial_{x_4}, \quad
	X_5 = x_2 \partial_{x_2} + x_4 \partial_{x_4} + x_6 \partial_{x_6}, \quad
	X_6 = x_3 \partial_{x_2} + x_5 \partial_{x_4},\\
	X_7=x_4\partial_{x_2}+x_5\partial_{x_3},\quad X_8=-x_4\partial_{x_1}+x_6\partial_{x_3},\quad X_9=x_2\partial_{x_4}+x_3\partial_{x_5},\quad X_{10}=x_2\partial_{x_1}+x_6\partial_{x_5},\\
	X_{11}=-x_1\partial_{x_4}+x_3\partial_{x_6},\quad X_{12}=x_1\partial_{x_2}+x_5\partial_{x_6}.
	\end{gathered}
	\end{equation*}

The Darboux tree is very simple and becomes

\begin{center}
{\small
\begin{tikzpicture}[
roundnode/.style={rounded rectangle, draw=green!40, fill=green!3, very thick, minimum size=2mm},
squarednode/.style={rectangle, draw=red!30, fill=red!2, thick, minimum size=4mm}
]
\node[squarednode] (brick) at (0,0) {$x_3x_4-x_2x_5+x_1x_6=0$};
\node[roundnode] (nosol) at (0,-3) {$\stackrel{{\rm No\, solution}}{x_3x_4-x_2x_5+x_1x_6\neq 0}$};
\node[squarednode] (u)  at (4,0) {$x_3=0,x_5=0,x_6=0$};
\node[squarednode] (d)  at (4,-2) {$x_3^2+x_5^2+x_6^2\neq 0$};

\node[squarednode] (uu)  at (8,0) {$x_2=0,x_1=0,x_4=0$};
\node[squarednode] (ud)  at (8,-1) {$x_2^2+x_1^2+x_4^2\neq 0$};


\node[squarednode] (0) at (12,0) {0};
\node[squarednode] (I) at (12,-1) {i};
\node[squarednode] (II) at (12,-2) {ii};

\draw[->] (brick.east) -- (u.west);
\draw[->] (brick.east) -- (d.west);

\draw[->] (u.east) -- (uu.west);
\draw[->] (u.east) -- (ud.west);


\end{tikzpicture}
}
\end{center}
It is immediate that the above gives rise to two non-zero coboundaries given in Table \ref{Tab:g_orb_2}. Since in this case $(\Lambda^2\mathfrak{s}^{1,1}_3)^{\mathfrak{s}^{1,1}_3}=0$, each class of equivalent $r$-matrices amounts to a class of equivalent coboundary coproducts.

\subsection{Lie algebra $\mathfrak{s}_{4}$}

Tables \ref{Tab:StruCons}, \ref{Tab:Schoten_1_2}--\ref{Tab:Schoten_1_3} contain the necessary information on the structure constants of $\mathfrak{s}_4$ and several Schouten brackets to prove our following results. Recall that $\alpha\in \mathbb{R}\backslash\{0\}$. First, $(\Lambda^3\mathfrak{s}_4)^{\mathfrak{s}_4}=\langle e_{123}\rangle$ for $2+\alpha=0$ and it is  zero for remaining allowed values of $\alpha$. Meanwhile, $(\Lambda^2\mathfrak{s}_4)^{\mathfrak{s}_4}=0$ for $\alpha\neq -1$ and $(\Lambda^2\mathfrak{s}_4)^{\mathfrak{s}_4}=\langle e_{13}\rangle$ for $\alpha=-1$.

As the Lie algebras of the class $\mathfrak{s}_4$ depend on a parameter $\alpha\in \mathbb{R}\backslash\{0\}$, the space of derivations depend on $\alpha$. It is indeed the same for all values of $\alpha\in \mathbb{R}\backslash\{0,1\}$, and it becomes larger for $\alpha=1$. Due to this, We shall proceed as in Subsection \ref{Subsec:3}. 
By Remark \ref{Re:DerAlg},  the space of derivations of $\mathfrak{s}_4$ for all possible values of $\alpha$ read
$$
\mathfrak{der}_{\rm all}(\mathfrak{s}_4):=\left\{\left(
\begin{array}{cccc}
\mu_{11} & \mu_{12} & 0 & \mu_{14} \\
0 & \mu_{11} & 0 & \mu_{24} \\
0 & 0 & \mu_{33} & \mu_{34} \\
0 & 0 & 0 & 0
\end{array}
\right):\mu_{11}, \mu_{12}, \mu_{33}, \mu_{14}, \mu_{24}, \mu_{34} \in \mathbb{R}\right\}.
$$
By extending the previous derivations to $\Lambda^2\mathfrak{s}_4$ and using Remark \ref{Re:DerAlg}, we obtain a basis of $V^{\rm all}_{\mathfrak{s}_4}$ of the form
\begin{equation*}
\begin{gathered}
X_1 = 2x_1 \partial_{x_1} + x_2 \partial_{x_2} + x_3 \partial_{x_3} + x_4 \partial_{x_4} + x_5 \partial_{x_5}, \quad
X_2 = x_4 \partial_{x_2} + x_5 \partial_{x_3}, \quad
X_3 = -x_5 \partial_{x_1} - x_6 \partial_{x_2}, \\
X_4 = x_3 \partial_{x_1} - x_6 \partial_{x_4}, \quad
X_5 = x_2 \partial_{x_2} + x_4 \partial_{x_4} + x_6 \partial_{x_6}, \quad
X_6 = x_3 \partial_{x_2} + x_5 \partial_{x_4}.
\end{gathered}
\end{equation*}
We recall that these vector fields span the Lie algebra of fundamental vector fields of the action on $\Lambda^2\mathfrak{s}_4$ of the Lie algebra automorphisms ${\rm Aut}_{\rm all}(\mathfrak{s}_4)$ that are common for all values of $\alpha\in \mathbb{R}\backslash\{0\}$.

For an element $r \in \Lambda^2 \mathfrak{s}_4$, we get
$$
[r,r] = 2[2x_1 x_6 - (1 + \alpha)x_2 x_5 + (1 + \alpha)x_3 x_4 - x_4 x_5]e_{123} - 2x_5^2e_{124} + 2[-(1 - \alpha)x_3 x_6 - x_5 x_6]e_{134} + 2(\alpha - 1)x_5 x_6e_{234}.
$$
If $\alpha \neq -2$, we have $(\Lambda^3 \mathfrak{s}_4)^{\mathfrak{s}_4} = 0$. Thus, the mCYBE and the CYBE are equal in this case and they read
$$
2x_1 x_6 + (1 + \alpha) x_3 x_4 = 0, \quad (1 - \alpha)x_3 x_6 = 0, \quad x_5 = 0.
$$
For the case $\alpha + 2 = 0$, we have $(\Lambda^3 \mathfrak{s}_4)^{\mathfrak{s}_4} = \langle e_{123} \rangle$. Thus, the mCYBE reads
$$
\quad (1 - \alpha)x_3 x_6 = 0, \quad x_5 = 0.
$$

Since $x_5, x_6$ are the bricks for $\mathfrak{s}_4$, our Darboux tree  for the Darboux families starts with the cases $x_i  = 0$ and $x_i \neq 0$, $i \in \{5,6\}$. The full Darboux tree is presented below.

\begin{center}
{\small
\begin{tikzpicture}[
roundnode/.style={rounded rectangle, draw=green!40, fill=green!3, very thick, minimum size=2mm},
squarednode/.style={rectangle, draw=red!30, fill=red!2, thick, minimum size=4mm}
]
\node[squarednode] (brick) at (0,0) {$x_6=0$};

\node[squarednode] (u)  at (2,0) {$x_5=0$};
\node[roundnode] (d)  at (2,-7) {$\stackrel{{\rm No \, solutions}}{x_5 \neq 0}$};

\node[squarednode] (uu)  at (4,0) {$x_4=0$};
\node[squarednode] (ud)  at (4,-5) {$x_4 \neq 0$};

\node[squarednode] (uuu)  at (6,0) {$x_3=0$};
\node[squarednode] (uud)  at (6,-4) {$x_3 \neq 0$};
\node[squarednode] (udu)  at (6,-5) {$x_3=0$};
\node[squarednode] (udd)  at (6,-7) {$\stackrel{\alpha = -2 \lor \alpha = -1}{x_3 \neq 0}$};

\node[squarednode] (uuuu)  at (8,0) {$x_2=0$};
\node[squarednode] (uuud)  at (8,-2) {$x_2 \neq 0$};
\node[squarednode] (uduu)  at (8,-5) {$x_1=0$};
\node[squarednode] (udud)  at (8,-6) {$x_1 \neq 0$};

\node[squarednode] (uuuuu)  at (10,0) {$x_1=0$};
\node[squarednode] (uuuud)  at (10,-1) {$x_1 \neq 0$};
\node[squarednode] (uuudu)  at (10,-2) {$x_1=0$};
\node[squarednode] (uuudd)  at (10,-3) {$x_1 \neq 0$};

\node[squarednode] (0) at (12,0) {0};
\node[squarednode] (I) at (12,-1) {I};
\node[squarednode] (II) at (12,-2) {II};
\node[squarednode] (III) at (12,-3) {III};
\node[squarednode] (IV) at (12,-4) {IV};
\node[squarednode] (V) at (12,-5) {V};
\node[squarednode] (VI) at (12,-6) {VI};
\node[squarednode] (VII) at (12,-7) {${\rm VII}_{\alpha \in \{-2,-1\}}$};

\draw[->] (brick.east) -- (u.west);
\draw[->] (brick.east) -- (d.west);

\draw[->] (u.east) -- (uu.west);
\draw[->] (u.east) -- (ud.west);

\draw[->] (uu.east) -- (uuu.west);
\draw[->] (uu.east) -- (uud.west);
\draw[->] (ud.east) -- (udu.west);
\draw[->] (ud.east) -- (udd.west);

\draw[->] (uuu.east) -- (uuuu.west);
\draw[->] (uuu.east) -- (uuud.west);
\draw[->] (udu.east) -- (uduu.west);
\draw[->] (udu.east) -- (udud.west);

\draw[->] (uuuu.east) -- (uuuuu.west);
\draw[->] (uuuu.east) -- (uuuud.west);
\draw[->] (uuud.east) -- (uuudu.west);
\draw[->] (uuud.east) -- (uuudd.west);
\end{tikzpicture}

\vspace{1cm}

\begin{tikzpicture}[
roundnode/.style={rounded rectangle, draw=green!40, fill=green!3, very thick, minimum size=2mm},
squarednode/.style={rectangle, draw=red!30, fill=red!2, thick, minimum size=4mm}
]
\node[squarednode] (brick) at (0,0) {$x_6\neq 0$};

\node[squarednode] (u)  at (2,0) {$x_5=0$};
\node[roundnode] (d)  at (2,-2) {$\stackrel{{\rm No \, solutions}}{x_5 \neq 0}$};

\node[squarednode] (uu)  at (4,0) {$x_3=0$};
\node[squarednode] (ud)  at (4,-2) {$\stackrel{\alpha = 1}{x_3 \neq 0}$};

\node[squarednode] (uuu)  at (6,0) {$x_1=0$};
\node[squarednode] (uud)  at (6,-1) {$\stackrel{\alpha = -2}{x_1 \neq 0}$};
\node[squarednode] (udu)  at (7,-2) {$x_3 x_4 + x_1 x_6 = 0$};
\node[roundnode] (udd)  at (7,-3) {$\stackrel{{\rm No \, solutions}}{x_3 x_4 + x_1 x_6 \neq 0}$};

\node[squarednode] (VIII) at (10,0) {VIII};
\node[squarednode] (IX) at (10,-1) {${\rm IX}_{\alpha = -2}$};
\node[squarednode] (X) at (10,-2) {${\rm X}_{\alpha = 1}$};

\draw[->] (brick.east) -- (u.west);
\draw[->] (brick.east) -- (d.west);

\draw[->] (u.east) -- (uu.west);
\draw[->] (u.east) -- (ud.west);

\draw[->] (uu.east) -- (uuu.west);
\draw[->] (uu.east) -- (uud.west);
\draw[->] (ud.east) -- (udu.west);
\draw[->] (ud.east) -- (udd.west);

\end{tikzpicture}
}
\end{center}
The connected parts of the subspaces denoted in the above diagram are the orbits of ${\rm Aut}_{\rm all,c}(\mathfrak{s}_4)$ on $\mathcal{Y}_{\mathfrak{s}_4}$. To obtain the orbits of ${\rm Aut}_{\rm all}(\mathfrak{s}_4)$ on $\mathcal{Y}_{\mathfrak{s}_4}$, where 
$$
{\rm Aut}_{\rm all}(\mathfrak{s}_4) = \left\{
\left(
\begin{array}{cccc}
T^2_2 & T^2_1 & 0 & T^4_1 \\
0 & T^2_2 & 0 & T^4_2 \\
0 & 0 & T^3_3 & T^4_3 \\
0 & 0 & 0 & 1
\end{array}
\right): T^2_2, T^3_3 \in \rz, \, T^2_1, T_1^4, T_2^4, T_3^4 \in \rr
\right\},
$$
we verify whether some of the connected components of the orbits of ${\rm Aut}_{\rm all, c} (\mathfrak{s}_4)$ on $\mathcal{Y}_{\mathfrak{s}_4}$ are additionally connected by an element of ${\rm  Aut}_{\rm all}(\mathfrak{s}_4)$. To do so, 
we use, as previously, the lift to $\Lambda^2\mathfrak{s}_4$ of one element of each connected component of ${\rm Aut}_{\rm all}(\mathfrak{s}_4)$, for instance
\begin{equation}\label{aut_s4_con}
T_{\lambda_1,\lambda_2}:=\left(
\begin{array}{cccc}
\lambda_1 & 0 & 0 & 0 \\
0 & \lambda_1 & 0 & 0 \\
0 & 0 & \lambda_2 & 0 \\
0 & 0 & 0 & 1
\end{array}
\right) \Rightarrow \Lambda^2T_{\lambda_1,\lambda_2}={\small
\left(
\begin{array}{cccccc}
1 & 0 & 0 & 0 &0&0 \\
0 & \lambda_1 \lambda_2 & 0 & 0 &0&0 \\
0 & 0 & \lambda_1 & 0 &0&0 \\
0 & 0 & 0 & \lambda_1 \lambda_2&0&0 \\
0 & 0 & 0 & 0&\lambda_1&0 \\
0 & 0 & 0 & 0&0&\lambda_2 \\
\end{array}
\right)}, \quad \lambda_1, \lambda_2 \in \{\pm 1\}.
\end{equation}
For $\alpha\in \mathbb{R}\backslash\{1,0\}$, one has that ${\rm Aut}(\mathfrak{s}^\alpha_4)={\rm Aut}_{\rm all}(\mathfrak{s}_4)$, where $\mathfrak{s}^\alpha_4$ stands for the Lie algebra $\mathfrak{s}_4$ for a fixed value of $\alpha$. Then, the classes of equivalent $r$-matrices (up to Lie algebra automorphisms of $\mathfrak{s}_4^1$) on the Lie algebra $\mathfrak{s}^\alpha_4$ can  easily be obtained and they are summarised in Table \ref{Tab:g_orb_1}. Moreover, the classes of equivalent coboundary coproducts for $\alpha=-1$ are induced by the families
$$
a)\, {\rm II} ({\rm zero\, class}),\,b)\,{\rm I}_+,{\rm III}_+,\,c)\,{\rm I}_-,{\rm III}_-,\,d)\,{\rm IV},\,e)\,{\rm VII}_{\alpha=-1},\,f)\,{\rm V},\,g)\,{\rm VI}_+,\,h)\,{\rm VI}_-,\,i)\,{\rm VIII}.
$$
For those $\mathfrak{s}_{4}^\alpha$ with $|\alpha|\neq 1$, one has $(\Lambda^2\mathfrak{s}_4^{\alpha})^{\mathfrak{s}_4^{\alpha}}=0$ and the classes of equivalent coboundary coproducts are given by each one of the following subsets
$$
a)\,{\rm I}_-,\,b)\,{\rm I}_+,\,c)\,{\rm II},\,d)\,{\rm III}_+,\,e)\,{\rm III}_-,\,f)\,{\rm IV},\,g)\,{\rm V},\,h)\,{\rm VI}_+,\,i)\,{\rm VI}_-,\,j)\,{\rm VII}_{\alpha=-2},\,k)\,{\rm VIII},\,l)\,{\rm IX}_-^{\alpha=-2},\,m)\,{\rm IX}_+^{\alpha=-2}.
$$
Note that, for values $|\alpha|\neq 1$, not all the above classes may be simultaneously available as they arise for particular values of $\alpha$. 

Let us consider now the case of the Lie algebra $\mathfrak{s}^1_4$, i.e. the Lie algebra of the class $\mathfrak{s}_4$ for $\alpha=1$. In this case, the group of Lie algebra automorphisms is slightly larger than for remaining admissible values of $\alpha$. In particular,  
$$
{\rm Aut}(\mathfrak{s}^1_4) = \left\{
\left(
\begin{array}{cccc}
T^2_2 & T^2_1 & T^3_1 & T^4_1 \\
0 & T^2_2 & 0 & T^4_2 \\
0 & T_3^2 & T^3_3 & T^4_3 \\
0 & 0 & 0 & 1
\end{array}
\right): T^2_2, T^3_3 \in \rz, \, T_3^2,T^2_1,T^3_1, T_1^4, T_2^4, T_3^4 \in \rr
\right\}
$$
and ${\rm Aut}(\mathfrak{s}^1_4)$ is the group resulting of composing ${\rm Aut}_{\rm all}(\mathfrak{s}_4)$ with the Lie algebra automorphisms of $\mathfrak{s}_4^1$ of the form
$$
T(e_1)=e_1,\,T(e_2)=e_2+\lambda e_3,\,T(e_3)=e_3+\mu e_1,\,T(e_4)=e_4,\qquad \forall \lambda,\mu\in \mathbb{R}.
$$
Hence, to obtain the orbits of the action ${\rm Aut}(\mathfrak{s}_4)$ on $\mathcal{Y}_{\mathfrak{s}_4}$, it is enough to act $\Lambda^2T_A$ on the orbits of ${\rm Aut}_{\rm all}(\mathfrak{s}_4)$. Note that
$$
\begin{gathered}
\Lambda^2T(e_{12})=e_{12}+\lambda e_{13},\,\,\Lambda^2T(e_{13})=e_{13},\, \Lambda^2T(e_{14})=e_{14},\,\\
\Lambda^2T(e_{23})=e_{23}-e_{12}\mu-\lambda\mu e_{13},\, \Lambda^2T(e_{24})=e_{24}+\lambda e_{34},\, \Lambda^2T_{34}=e_{34}+\mu e_{14},
\end{gathered}$$
for every $\lambda,\mu\in \mathbb{R}$.
This will give the final orbits detailed in Table \ref{Tab:g_orb_2}. Since $(\Lambda^2\mathfrak{s}^1_4)^{\mathfrak{s}^1_4}=0$,  the families of equivalent coboundary coproducts are given by the ones induced by the families of equivalent $r$-matrices.

\subsection{Lie algebra $\mathfrak{s}_{5}$}\label{Sec:s5}

Structure constants for Lie algebra $\mathfrak{s}_5$ are given in Table \ref{Tab:StruCons}. Using this information, one can compute the Schouten brackets between the elements of $\mathfrak{s}_5$, $\Lambda^2 \mathfrak{s}_5$, and $\Lambda^3 \mathfrak{s}_5$ from the information contained in Tables  \ref{Tab:Schoten_1_2}--\ref{Tab:Schoten_1_3}. In particular, we obtain $(\Lambda^2 \mathfrak{s}_5)^{\mathfrak{s}_5} = \langle e_{23}\rangle$ for $\beta = 0$ and $(\Lambda^2 \mathfrak{s}_5)^{\mathfrak{s}_5} = \{0\}$ for $\beta \neq 0$. Moreover, $(\Lambda^3\mathfrak{s}_5)^{\mathfrak{s}_5}=0$ for $\alpha+2\beta\neq 0$ and $(\Lambda^3\mathfrak{s}_5)^{\mathfrak{s}_5}=\langle e_{123}\rangle$ otherwise.

By Remark \ref{Re:DerAlg}, we obtain that the derivations of $\mathfrak{s}_5$ read
$$
\mathfrak{der}(\mathfrak{s}_5):=\left\{\left(
\begin{array}{cccc}
 \mu_{11} & 0 & 0 & \mu_{14} \\
0 & \mu_{22} & \mu_{23} & \mu_{24} \\
0 & -\mu_{23} & \mu_{22} & \mu_{34} \\
0 & 0 & 0 & 0
\end{array}
\right): \mu_{11}, \mu_{22}, \mu_{23}, \mu_{14}, \mu_{24}, \mu_{34} \in \mathbb{R}\right\},
$$
which give rise to the basis of $V_{\mathfrak{s}_5}$ of the form
\begin{equation*}
\begin{gathered}
X_1 = x_1 \partial_{x_1} + x_2 \partial_{x_2} + x_3 \partial_{x_3}, \quad
X_2 = -x_5 \partial_{x_1} - x_6 \partial_{x_2}, \quad
X_3 = x_1 \partial_{x_1} + x_2 \partial_{x_2} + 2x_4 \partial_{x_4} + x_5 \partial_{x_5} + x_6 \partial_{x_6}, \\
X_4 = x_2 \partial_{x_1} - x_1 \partial_{x_2} + x_6 \partial_{x_5} - x_5 \partial_{x_6}, \quad
X_5 = x_3 \partial_{x_1} - x_6 \partial_{x_4}, \quad
X_6 = x_3 \partial_{x_2} + x_5 \partial_{x_4}.
\end{gathered}
\end{equation*}

For an element $r \in \Lambda^2 \mathfrak{s}_5$, we get
\begin{equation*}
\begin{split}
[r,r]&= 2[x_1 x_5 + (\alpha + \beta)x_1 x_6 - (\alpha + \beta)x_2 x_5 + x_2 x_6 + 2\beta x_3 x_4]e_{123} + 2[(\beta - \alpha)x_3 x_5 + x_3 x_6]e_{124} \\
&+ 2[-x_3 x_5 + (\beta - \alpha)x_3 x_6]e_{134} - 2(x_5^2+x_6^2)e_{234}.
\end{split}
\end{equation*}

If $\alpha + 2\beta \neq 0$, then $(\Lambda^3 \mathfrak{s}_5)^{\mathfrak{s}_5} = 0$. Thus, the mCYBE and the CYBE are equal in this case and they read
$$
\beta x_3 x_4 = 0, \quad x_5 = 0, \quad x_6 = 0.
$$

For $\alpha + 2\beta = 0$, we have $(\Lambda^3 \mathfrak{s}_5)^{\mathfrak{s}_5} = \langle e_{123} \rangle$ and the mCYBE reads
$$
 x_5 = 0, \quad x_6 = 0.
$$

Since $x_3$ is the only brick for $\mathfrak{s}_5$, we start our Darboux tree with the cases $x_3 = 0$ and $x_3 \neq 0$. The full Darboux tree is presented below.

\begin{center}
{\small
\begin{tikzpicture}[
roundnode/.style={rounded rectangle, draw=green!40, fill=green!3, very thick, minimum size=2mm},
squarednode/.style={rectangle, draw=red!30, fill=red!2, thick, minimum size=4mm}
]
\node[squarednode] (brick) at (0,0) {$x_3=0$};

\node[squarednode] (u)  at (2,0) {$x_6=0$};
\node[roundnode] (d)  at (2,-3) {$\stackrel{{\rm No \, solutions}}{x_6 \neq 0}$};

\node[squarednode] (uu)  at (4,0) {$x_5=0$};
\node[roundnode] (ud)  at (4,-3) {$\stackrel{{\rm No\, solutions}}{x_5 \neq 0}$};

\node[squarednode] (uuu)  at (6,0) {$x_4=0$};
\node[squarednode] (uud)  at (6,-2) {$x_4 \neq 0$};

\node[squarednode] (uuuu)  at (8,0) {$x_1^2 + x_2^2=0$};
\node[squarednode] (uuud)  at (8,-1) {$x_1^2 + x_2^2 \neq 0$};
\node[squarednode] (uudu)  at (8,-2) {$x_1^2 + x_2^2=0$};
\node[squarednode] (uudd)  at (8,-3) {$x_1^2 + x_2^2 \neq 0$};

\node[squarednode] (0) at (12,0) {0};
\node[squarednode] (I) at (12,-1) {I};
\node[squarednode] (II) at (12,-2) {II};
\node[squarednode] (III) at (12,-3) {III};

\draw[->] (brick.east) -- (u.west);
\draw[->] (brick.east) -- (d.west);

\draw[->] (u.east) -- (uu.west);
\draw[->] (u.east) -- (ud.west);

\draw[->] (uu.east) -- (uuu.west);
\draw[->] (uu.east) -- (uud.west);

\draw[->] (uuu.east) -- (uuuu.west);
\draw[->] (uuu.east) -- (uuud.west);
\draw[->] (uud.east) -- (uudu.west);
\draw[->] (uud.east) -- (uudd.west);

\end{tikzpicture}

\vspace{1cm}

\begin{tikzpicture}[
roundnode/.style={rounded rectangle, draw=green!40, fill=green!3, very thick, minimum size=2mm},
squarednode/.style={rectangle, draw=red!30, fill=red!2, thick, minimum size=4mm}
]
\node[squarednode] (brick) at (0,0) {$x_3\neq 0$};

\node[squarednode] (u)  at (3,0) {$x_6=0$};
\node[roundnode] (d)  at (3,-1) {$\stackrel{{\rm No \, solutions}}{x_6 \neq 0}$};

\node[squarednode] (uu)  at (6,0) {$x_5=0$};
\node[roundnode] (ud)  at (6,-1) {$\stackrel{{\rm No \, solutions}}{x_5 \neq 0}$};

\node[squarednode] (uuu)  at (9,0) {$x_4=0$};
\node[squarednode] (uud)  at (9,-1) {$\stackrel{ \beta = 0 \lor \alpha + 2\beta = 0}{x_4 \neq 0}$};

\node[squarednode] (0) at (12,0) {IV};
\node[squarednode] (I) at (12,-1) {${\rm V}^{\alpha = -2\beta}_{\beta = 0}$};

\draw[->] (brick.east) -- (u.west);
\draw[->] (brick.east) -- (d.west);

\draw[->] (u.east) -- (uu.west);
\draw[->] (u.east) -- (ud.west);

\draw[->] (uu.east) -- (uuu.west);
\draw[->] (uu.east) -- (uud.west);

\end{tikzpicture}
}
\end{center}

Orbits of ${\rm Aut}_c(\mathfrak{s}_5)$ are given by the connected parts of the loci of the above Darboux tree. Solutions are described in Table \ref{Tab:g_orb_1}.

The automorphism group of $\mathfrak{s}_5$ reads
$$
{\rm Aut}(\mathfrak{s}_5) = \left\{
\left(
\begin{array}{cccc}
T^1_1 & 0 & 0 & T^4_1 \\
0 & T^2_2 & T^3_2 & T^4_2 \\
0 & -T^3_2 & T^2_2 & T^4_3 \\
0 & 0 & 0 & 1
\end{array}
\right): T^1_1\in \mathbb{R} , (T^2_2)^2 + (T^3_2)^2>0, \, T_2^2,T_2^3,T^4_1, T^4_2, T^4_3 \in \rr
\right\}.
$$

Since the subgroup of $GL(2,\mathbb{R})$ of matrices of the form
$$
\left(\begin{array}{cc}T^2_2&T^3_2\\-T^3_2&T^2_2\end{array}\right)\in GL(2,\mathbb{R}),\qquad (T^2_2)^2+(T^3_2)^2>0,$$
can be parametrised via $\phi\in [0,2\pi[$ and $\mu:=[(T^2_2)^2+(T^3_2)^2]^{1/2}\in \mathbb{R}_+$ by setting $T^2_2=\mu \cos\phi$ and $T^3_2=\mu \sin \phi$, there are two connected components of ${\rm Aut}(\mathfrak{s}_5)$. A representative of each connected part and its lift to $\Lambda^2\mathfrak{s}_5$ read
\begin{equation}\label{aut_s5_con}
T_\lambda:=\left(
\begin{array}{cccc}
\lambda & 0 & 0 & 0 \\
0 & 1 & 0 & 0 \\
0 & 0 & 1 & 0 \\
0 & 0 & 0 & 1
\end{array}
\right) \Longrightarrow \Lambda^2T_\lambda:=\left(
\begin{array}{cccccc}
\lambda  & 0 & 0 & 0 &0&0 \\
0 & \lambda  & 0 & 0 &0&0 \\
0 & 0 & \lambda  & 0 &0&0 \\
0 & 0 & 0 & 1 &0&0 \\
0 & 0 & 0 & 0& 1 &0 \\
0 & 0 & 0 & 0&0& 1 \\
\end{array}
\right), \quad \lambda  \in \{\pm 1\}.
\end{equation}

Using techniques from previous sections, we can easily  verify whether the orbits of ${\rm Aut}_c(\mathfrak{s}_5)$ are additionally connected by a Lie algebra automorphism of $\mathfrak{s}_5$ via $\Lambda^2T_\lambda$. Our results are summarised in Table \ref{Tab:g_orb_1}. Moreover, for $\beta \neq 0$, the classes of coboundary coproducts are induced from the following families of $r$-matrices: 
$$
{a)}\,{\rm I},\,{b)}\,{\rm II}_+,{ c)}\,{\rm II}_-,\,{d)}\,{\rm III}_+,\,{e)}\,{\rm III}_{-},\,{f)}\,{\rm IV},\,{g)}\,{\rm V}^{\alpha=-2\beta}_{+},\,{h)}\,{\rm V}^{\alpha=-2\beta}_{-}.
$$
Meanwhile, for $\beta=0$, the list of families of equivalents coboundary coproducts read
$$
{a)}\, {\rm II}_{\pm},\,{b)}\, {\rm I},{\rm III}_{\pm}\,, c)\, {\rm IV},{\rm V}_{\pm}^{\beta=0}.
$$

\subsection{Lie algebra $\mathfrak{s}_{6}$}\label{Se:s6}

Structure constants of Lie algebra $\mathfrak{s}_6$ are given in Table \ref{Tab:StruCons}. Using this information, we can compute the Schouten brackets of elements of $\mathfrak{s}_6$, $\Lambda^2 \mathfrak{s}_6$, and $\Lambda^3 \mathfrak{s}_6$ (see Tables  \ref{Tab:Schoten_1_2}--\ref{Tab:Schoten_1_3}). As in previous sections, Tables \ref{Tab:StruCons},\ref{Tab:Schoten_1_2}--\ref{Tab:Schoten_1_3} contain the necessary information to accomplish the following calculations. In particular, we have  $(\Lambda^3\mathfrak{s}_6)^{\mathfrak{s}_6}=\langle e_{123}\rangle$ while $(\Lambda^2\mathfrak{s}_6)^{\mathfrak{s}_6}=\{0\}$.

By Remark \ref{Re:DerAlg}, one obtains that
$$
\mathfrak{der}(\mathfrak{s}_6):=\left\{\left(
\begin{array}{cccc}
\mu_{11} & \mu_{12} & \mu_{13} & \mu_{14} \\
0 & \mu_{22} & 0 & -\mu_{13} \\
0 & 0 & \mu_{11} - \mu_{22} & -\mu_{12} \\
0 & 0 & 0 & 0
\end{array}
\right): \mu_{11}, \mu_{12}, \mu_{13}, \mu_{14}, \mu_{22} \in \mathbb{R}\right\},
$$
which gives rise to the basis of $V_{\mathfrak{s}_6}$ of the form
\begin{equation*}
\begin{gathered}
X_1 = x_1 \partial_{x_1} + 2x_2 \partial_{x_2} + x_3 \partial_{x_3} + x_4 \partial_{x_4} + x_6 \partial_{x_6}, \quad
X_2 = (-x_3 + x_4) \partial_{x_2} + x_5 \partial_{x_3} - x_5 \partial_{x_4}, \\
X_3 = (-x_3 - x_4) \partial_{x_1} + x_6 \partial_{x_3} + x_6 \partial_{x_4}, \quad
X_4 = -x_5 \partial_{x_1} - x_6 \partial_{x_2}, \quad
X_5 = x_1 \partial_{x_1} - x_2 \partial_{x_2} + x_5 \partial_{x_5} - x_6 \partial_{x_6}.
\end{gathered}
\end{equation*}

For an element $r \in \Lambda^2 \mathfrak{s}_6$, we get
$$
[r,r] = 2(x_1 x_6 + x_2 x_5 + x_4^2)e_{123} + 2(x_3 + x_4)x_5 e_{124} + 2(x_4-x_3)x_6 e_{134} - 4x_5 x_6 e_{234}.
$$
Since $(\Lambda^3 \mathfrak{s}_6)^{\mathfrak{s}_6} = \langle e_{123} \rangle$, the mCYBE reads
$$
(x_3 + x_4) x_5 = 0\quad (x_4 - x_3) x_6 = 0, \quad x_5 x_6 = 0,
$$
whereas the CYBE is
$$
x_1 x_6 + x_2 x_5 + x_4^2 = 0, \quad (x_3 + x_4) x_5 = 0, \quad (x_4 - x_3) x_6 = 0, \quad x_5 x_6 = 0.
$$
Note that the CYBE obtained above is exactly the result obtained in \cite[eq. (3.7)]{BH96} under the substitution $x_5=-\alpha_+$, $x_6=-\alpha_-$, $x_4=\xi$ and $x_3=\vartheta$. 

Since $x_5, x_6$ are the bricks of $\mathfrak{s}_6$, our Darboux tree starts with cases $x_i = 0$ and $x_i \neq 0$, for $i \in \{5,6\}$. The full Darboux tree is presented below.

\begin{center}
{\small
\begin{tikzpicture}[
roundnode/.style={rounded rectangle, draw=green!40, fill=green!3, very thick, minimum size=2mm},
squarednode/.style={rectangle, draw=red!30, fill=red!2, thick, minimum size=4mm}
]
\node[squarednode] (brick) at (0,0) {$x_6=0$};

\node[squarednode] (u)  at (2,0) {$x_5=0$};
\node[squarednode] (d)  at (2,-10) {$x_5 \neq 0$};

\node[squarednode] (uu)  at (4,0) {$x_4=0$};
\node[squarednode] (ud)  at (4,-5) {$x_4 \neq 0$};
\node[roundnode] (du)  at (4,-10) {$\stackrel{\tiny{\rm No \, solutions}}{x_3 + x_4 \neq 0}$};
\node[squarednode] (dd)  at (4,-11) {$x_3 + x_4 = 0$};

\node[squarednode] (uuu)  at (6,0) {$x_3=0$};
\node[squarednode] (uud)  at (6,-4) {$x_3 \neq 0$};
\node[squarednode] (udu)  at (6,-5) {$x_3 = 0$};
\node[squarednode] (udd)  at (6,-6) {$x_3 \neq 0$};

\node[squarednode] (uuuu)  at (8,0) {$x_2=0$};
\node[squarednode] (uuud)  at (8,-2) {$x_2 \neq 0$};
\node[squarednode] (uddu)  at (9,-6) {$x_3 - x_4 = 0$};
\node[squarednode] (udddi)  at (9,-8) {$x_3 + x_4 = 0$};
\node[squarednode] (udddiid)  at (7,-11) {$x_2x_5+x_4^2=0$};
\node[squarednode] (udddiii)  at (7,-12) {$x_2x_5+x_4^2\neq 0$};
\node[squarednode] (udddii)  at (9,-10) {$\stackrel{k \notin \{-1,0,1\}}{x_3 - kx_4 = 0}$};

\node[squarednode] (uuuuu)  at (10,0) {$x_1 = 0$};
\node[squarednode] (uuuud)  at (10,-1) {$x_1 \neq 0$};
\node[squarednode] (uuudu)  at (10,-2) {$x_1 = 0$};
\node[squarednode] (uuudd)  at (10,-3) {$x_1 \neq 0$};
\node[squarednode] (udduu)  at (11,-6) {$x_2 = 0$};
\node[squarednode] (uddud)  at (11,-7) {$x_2 \neq 0$};
\node[squarednode] (udddiu)  at (11,-8) {$x_1 = 0$};
\node[squarednode] (udddid)  at (11,-9) {$x_1 \neq 0$};

\node[squarednode] (0) at (13,0) {0};
\node[squarednode] (I) at (13,-1) {I};
\node[squarednode] (II) at (13,-2) {I$^{\rm ext}$};
\node[squarednode] (III) at (13,-3) {II};
\node[squarednode] (IV) at (13,-4) {III};
\node[squarednode] (V) at (13,-5) {IV};
\node[squarednode] (VI) at (13,-6) {V};
\node[squarednode] (VII) at (13,-7) {VI};
\node[squarednode] (VII) at (13,-8) {${\rm V}^{{\rm ext}}$};
\node[squarednode] (IX) at (13,-9) {${\rm VI}^{{\rm ext}}$};
\node[squarednode] (X) at (13,-10) {VII$_{|k| \notin \{0, 1\}}$};
\node[squarednode] (XI) at (13,-11) {VIII};
\node[squarednode] (XII) at (13,-12) {IX};
\draw[->] (brick.east) -- (u.west);
\draw[->] (brick.east) -- (d.west);

\draw[->] (u.east) -- (uu.west);
\draw[->] (u.east) -- (ud.west);
\draw[->] (d.east) -- (du.west);
\draw[->] (d.east) -- (dd.west);

\draw[->] (uu.east) -- (uuu.west);
\draw[->] (uu.east) -- (uud.west);
\draw[->] (ud.east) -- (udu.west);
\draw[->] (ud.east) -- (udd.west);

\draw[->] (uuu.east) -- (uuuu.west);
\draw[->] (uuu.east) -- (uuud.west);
\draw[->] (udd.east) -- (uddu.west);
\draw[->] (udd.east) -- (udddi.west);
\draw[->] (udd.east) -- (udddii.west);

\draw[->] (uuuu.east) -- (uuuuu.west);
\draw[->] (uuuu.east) -- (uuuud.west);
\draw[->] (uuud.east) -- (uuudu.west);
\draw[->] (uuud.east) -- (uuudd.west);
\draw[->] (uddu.east) -- (udduu.west);
\draw[->] (uddu.east) -- (uddud.west);
\draw[->] (udddi.east) -- (udddiu.west);
\draw[->] (udddi.east) -- (udddid.west);

\draw[->] (dd.east) -- (udddiid.west);
\draw[->] (dd.east) -- (udddiii.west);

\end{tikzpicture}

\vspace{1cm}

\begin{tikzpicture}[
roundnode/.style={rounded rectangle, draw=green!40, fill=green!3, very thick, minimum size=2mm},
squarednode/.style={rectangle, draw=red!30, fill=red!2, thick, minimum size=4mm}
]
\node[squarednode] (brick) at (0,0) {$x_6 \neq 0$};

\node[squarednode] (u)  at (3,0) {$x_5 = 0$};
\node[roundnode] (d)  at (3,-1) {$\stackrel{\tiny{\rm No \, solutions}}{x_5 \neq 0}$};

\node[roundnode] (uu)  at (6,0) {$\stackrel{\tiny{\rm No \, solutions}}{- x_3 + x_4 \neq 0}$};
\node[squarednode] (ud)  at (6,-1) {$- x_3 + x_4 = 0$};

\node[squarednode] (udu)  at (9,-1) {$ x_1 x_6 + x_4^2 = 0$};
\node[squarednode] (udd)  at (9,-2) {$ x_1 x_6 + x_4^2 \neq 0$};

\node[squarednode] (XII) at (11,-1) {VIII$^{\rm ext}$};
\node[squarednode] (XIII) at (11,-2) {${\rm IX}^{{\rm ext}}$};

\draw[->] (brick.east) -- (u.west);
\draw[->] (brick.east) -- (d.west);

\draw[->] (u.east) -- (uu.west);
\draw[->] (u.east) -- (ud.west);

\draw[->] (ud.east) -- (udu.west);
\draw[->] (ud.east) -- (udd.west);

\end{tikzpicture}
}
\end{center}
The connected parts of the subsets denoted in the above Darboux tree are the orbits of ${\rm Aut}_c(\mathfrak{s}_6)$. 

The Lie algebra automorphism group of $\mathfrak{s}_6$ reads
$$
{\rm Aut}(\mathfrak{s}_6) = \left\{
\left(
\begin{array}{cccc}
-T^3_2 T^2_3 & T^4_2 T_3^2 & T^4_3 T^3_2 & T^4_1 \\
0 & 0 & T^3_2 & T^4_2 \\
0 & T^2_3 & 0 & T^4_3 \\
0 & 0 & 0 & -1
\end{array}
\right),\left(
\begin{array}{cccc}
T^2_2 T^3_3 & -T^2_2 T^4_3 & -T^4_2 T^3_3 & T^4_1 \\
0 & T^2_2 & 0 & T^4_2 \\
0 & 0 & T^3_3 & T^4_3 \\
0 & 0 & 0 & 1
\end{array}
\right):
\begin{array}{c}
 T^2_3,T^2_2 \in \rz, \\
 T^3_2,T^3_3 \in \rz, \\
 T^4_1, T^4_2, T^4_3 \in \rr
\end{array}
\right\}.
$$

One element of each connected component of ${\rm Aut}(\mathfrak{s}_6)$, which are eight, and their lifts to $\Lambda^2\mathfrak{s}_6$ are given by
\begin{equation}\label{aut_s6_con}
\left(
\begin{array}{cccc}
\mp\lambda_1 \lambda_2 & 0 & 0 & 0 \\
0 & \theta(\mp 1)\lambda_1 & \theta(\pm 1)\lambda_2 & 0 \\
0 & \theta(\pm 1)\lambda_1 & \theta(\mp 1)\lambda_2 & 0 \\
0 & 0 & 0 & \mp 1
\end{array}
\right) \Rightarrow {\small
\left(
\begin{array}{cccccc}
\theta(\mp 1)\lambda_2 & -\theta(\pm 1)\lambda_1  & 0 & 0 &0&0 \\
 -\theta(\pm 1) \lambda_2 & \theta(\mp 1)\lambda_1 & 0 & 0 &0&0 \\
0 & 0 & \lambda_1 \lambda_2 & 0 &0&0 \\
0 & 0 & 0 & \mp \lambda_1\lambda_2&0&0 \\
0 & 0 & 0 & 0&\theta(\mp 1)\lambda_1&-\theta(\pm 1)\lambda_2 \\
0 & 0 & 0 & 0&-\theta(\pm 1)\lambda_1 &\theta(\mp 1)\lambda_2 \\
\end{array}
\right)},
\end{equation}
where $\theta(x)$ stands for the Heaviside function and $\lambda_1,\lambda_2\in \{1,-1\}$. 
By using the previous information, we can verify whether some of the orbits of the action of ${\rm Aut}_c(\mathfrak{s}_6)$ on $\mathcal{Y}_{\mathfrak{s}_6}$ can be connected by an element of ${\rm Aut}(\mathfrak{s}_6)$, which gives rise to the classes of equivalent $r$-matrices up to the action of Lie algebra automorphisms of $\mathfrak{s}_6$. Since $(\Lambda^2\mathfrak{s}_6)^{\mathfrak{s}_6}=0$, each class of equivalent $r$-matrices gives rise to a separate class of coboundary Lie bialgebras on $\mathfrak{s}_6$. Hence, the classes of equivalent coboundary Lie bialgebras are given by
$$
{ a)}\,{\rm I},{\rm I}^{\rm ext}\,{  b)}\, {\rm II},\,{  c)},\,{\rm III},\,{  d)}\,{\rm IV},\,{  e)}\,{\rm V},{\rm V}^{\rm ext}\,{  f})\,{\rm VI,VI^{ext}},\,{  g)}\,{\rm VII}_{|k|\neq\{0,1\}},\,{  h)}\,{\rm VIII,VIII^{ext}},\, {  i)}\,{\rm IX,IX^{ext}}.
$$

\subsection{Lie algebra $\mathfrak{s}_{7}$}\label{Sec:s7}

The structure constants of Lie algebra $\mathfrak{s}_7$ are given in Table \ref{Tab:StruCons}. Using this information, we can compute some relevant Schouten brackets between the elements of the bases of $\mathfrak{s}_7$, $\Lambda^2 \mathfrak{s}_7$, and $\Lambda^3 \mathfrak{s}_7$ (see Tables  \ref{Tab:Schoten_1_2}--\ref{Tab:Schoten_1_3}). Note that, from these tables, we get $(\Lambda^2\mathfrak{s}_7)^{\mathfrak{s}_7}=0$ and $(\Lambda^3\mathfrak{s}_7)^{\mathfrak{s}_7}=\langle e_{123}\rangle$.

By Remark \ref{Re:DerAlg}, one obtains that
$$
\mathfrak{der}(\mathfrak{s}_7):=\left\{\left(
\begin{array}{cccc}
\mu_{11} & \mu_{12} & \mu_{13} & \mu_{14} \\
0 & \frac{1}{2}\mu_{11} & \mu_{23} & \mu_{12} \\
0 & -\mu_{23} & \frac{1}{2}\mu_{11} & \mu_{13} \\
0 & 0 & 0 & 0
\end{array}
\right):  \mu_{11}, \mu_{12}, \mu_{13}, \mu_{14}, \mu_{23} \in \mathbb{R}\right\},
$$
which give rise to the basis of $V_{\mathfrak{s}_7}$ of the form
\begin{equation*}
\begin{gathered}
X_1 = \frac{3}{2} x_1 \partial_{x_1} + \frac{3}{2} x_2 \partial_{x_2} + x_3 \partial_{x_3} + x_4 \partial_{x_4} + \frac{1}{2} x_5 \partial_{x_5} + \frac{1}{2} x_6 \partial_{x_6}, \quad
X_2 = x_3 \partial_{x_1} + x_4 \partial_{x_2} + x_5 \partial_{x_3} - x_6 \partial_{x_4}, \\
X_3 = - x_4 \partial_{x_1} + x_3 \partial_{x_2} + x_6 \partial_{x_3} + x_5 \partial_{x_4}, \quad
X_4 = -x_5 \partial_{x_1} - x_6 \partial_{x_2}, \quad
X_5 = x_2 \partial_{x_1} - x_1 \partial_{x_2} + x_6 \partial_{x_5} - x_5 \partial_{x_6}.
\end{gathered}
\end{equation*}

Then,
$$
[r,r] = 2 (x_4^2 + x_1 x_5 + x_2 x_6) e_{123} + 2 (x_4 x_5 + x_3 x_6) e_{124} + 2 (x_4 x_6 - x_3 x_5) e_{134} - 2 (x_5^2 + x_6^2) e_{234}.
$$
Since $(\Lambda^3\mathfrak{s}_7)^{\mathfrak{s}_7}=\langle e_{123}\rangle$, then the mCYBE reads
$$
x_5=x_6=0.
$$
Meanwhile, the CYBE takes the form 
$$
x_4=x_5=x_6=0.
$$
The Darboux tree is presented below.

\begin{center}
{\small
\begin{tikzpicture}[
roundnode/.style={rounded rectangle, draw=green!40, fill=green!3, very thick, minimum size=2mm},
squarednode/.style={rectangle, draw=red!30, fill=red!2, thick, minimum size=4mm}
]
\node[squarednode] (brick) at (0,0) {$x_6=0$};

\node[squarednode] (u)  at (2,0) {$x_5=0$};
\node[roundnode] (d)  at (2,-4) {$\stackrel{\tiny{\rm No \, solutions}}{x_5 \neq 0}$};

\node[squarednode] (uu)  at (4,0) {$x_4=0$};
\node[squarednode] (ud)  at (4,-3) {$x_4 \neq 0$};

\node[squarednode] (uuu)  at (6,0) {$x_3=0$};
\node[squarednode] (uud)  at (6,-2) {$x_3 \neq 0$};
\node[squarednode] (udu)  at (6,-3) {$x_3 = 0$};
\node[squarednode] (udd)  at (6,-4) {$x_4 - k x_3 = 0$};

\node[squarednode] (uuuu)  at (8,0) {$x_1^2 + x_2^2 = 0$};
\node[squarednode] (uuud)  at (8,-1) {$x_1^2 + x_2^2 \neq 0$};

\node[squarednode] (0) at (12,0) {0};
\node[squarednode] (I) at (12,-1) {I};
\node[squarednode] (II) at (12,-2) {II};
\node[squarednode] (III) at (12,-3) {III};
\node[squarednode] (IV) at (12,-4) {${\rm IV}_{k\neq 0}$};

\draw[->] (brick.east) -- (u.west);
\draw[->] (brick.east) -- (d.west);

\draw[->] (u.east) -- (uu.west);
\draw[->] (u.east) -- (ud.west);

\draw[->] (uu.east) -- (uuu.west);
\draw[->] (uu.east) -- (uud.west);
\draw[->] (ud.east) -- (udu.west);
\draw[->] (ud.east) -- (udd.west);

\draw[->] (uuu.east) -- (uuuu.west);
\draw[->] (uuu.east) -- (uuud.west);

\end{tikzpicture}
}
\end{center}

The connected parts of the subsets denoted in the above Darboux tree are the orbits of ${\rm Aut}_c(\mathfrak{s}_{7})$ in $\mathcal{Y}_{\mathfrak{s}_7}$. By using the extension of the automorphism group ${\rm Aut}(\mathfrak{s}_{7})$ to $\Lambda^2 \mathfrak{s}_{7}$, we can verify whether some of these parts are additionally connected by a Lie algebra automorphism of $\mathfrak{s}_{7}$. The results are summarised in Table \ref{Tab:g_orb_2}. As before, each family of $r$-matrices induces a separate class of coboundary coproducts.

The Lie algebra automorphism group of $\mathfrak{s}_7$ reads
\begin{equation*}
\begin{split}
{\rm Aut}(\mathfrak{s}_7) &= \left\{
\left(
\begingroup
\setlength\arraycolsep{4.5pt}
\begin{array}{cccc}
\pm [(T^2_2)^2 + (T^2_3)^2] & \pm T^4_2 T^2_2 - T^4_3 T^2_3 & T^4_3 T^2_2 \pm  T^4_2 T^2_3 & T^4_1 \\
0 & T^2_2 & T^2_3 & T^4_2 \\
0 &\mp T^2_3 & \pm T^2_2 & T^4_3 \\
0 & 0 & 0 & \pm 1
\end{array}
\endgroup
\right):
\begingroup
\setlength\arraycolsep{4.5pt}
\begin{array}{c}
(T^2_2)^2+(T_3^2)^2 \in \mathbb{R} \backslash \{ 0\} \\
T^2_2,T^2_3,T^4_1, T^4_2, T^4_3 \in \rr
\end{array}
\endgroup
\right\}. 
\end{split}
\end{equation*}
Since the two subsets of $GL(2,\mathbb{R})$ of matrices of the form
$$
\left(\begin{array}{cc}T^2_2&T^2_3\\\mp T^2_3&\pm T^2_2\end{array}\right)\in GL(2,\mathbb{R}),\qquad (T^2_2)^2+(T^2_3)^2>0,$$
can be parametrised via $\phi\in [0,2\pi[$ and $\mu:=[(T^2_2)^2+(T^2_3)^2]^{1/2}\in \mathbb{R}_+$ by setting $T^2_2=\mu \cos\phi$ and $T^2_3=\mu \sin \phi$, one gets that ${\rm Aut}(\mathfrak{s}_7)$ has two connected components. As usual, we only need one element for every connected component of ${\rm Aut}(\mathfrak{s}_7)$ and their lifts to $\Lambda^2\mathfrak{s}_7$, namely
$$
T_{\pm}:=\left(\begin{array}{cccc}
\pm 1& 0 & 0&0 \\
0 & 1 & 0 & 0 \\
0 &0 & \pm 1 &0 \\
0 & 0 & 0 & \pm 1
\end{array}\right)\qquad \Longrightarrow \qquad \Lambda^2T_{\pm}=\left(
\begin{array}{cccccc}
\pm 1 & 0 & 0 & 0 &0&0 \\
0 & 1 & 0 & 0 &0&0 \\
0 & 0 & 1 & 0 &0&0 \\
0 & 0 & 0 & \pm1&0&0 \\
0 & 0 & 0 & 0&\pm 1&0 \\
0 & 0 & 0 & 0&0&1 \\
\end{array}
\right).
$$
	
	Taking this into account and since $(\Lambda^2\mathfrak{s}_7)^{\mathfrak{s}_7}=0$, we  obtain that the families or equivalent coboundary Lie bialgebras are given by
	$$
	{\rm a)}\, {\rm I},\,{\rm b)}\,{\rm II}_{+},\,{\rm c)}\,{\rm II}_-,\,{\rm 
	d)}\,{\rm III},\,{\rm e)}\,({\rm IV}_+)_{|k|\in \mathbb{R}_+}\,{\rm f)}\,({\rm IV}_-)_{|k|\in \mathbb{R}_+}.
	$$

\subsection{Lie algebra $\mathfrak{s}_{8}$}

The structure constants for the Lie algebra $\mathfrak{s}_{8}$ are given in Table \ref{Tab:StruCons}. Recall that $\alpha\in ]-1,1]\backslash\{0\}$. As in the previous cases, one can obtain some relevant Schouten brackets between the basis elements of $\mathfrak{s}_8$, $\Lambda^2\mathfrak{s}_8$, and $\Lambda^3\mathfrak{s}_8$ (see Tables  \ref{Tab:Schoten_1_2}--\ref{Tab:Schoten_1_3}). In view of  Tables \ref{Tab:StruCons},\ref{Tab:Schoten_1_2}--\ref{Tab:Schoten_1_3}, one has that  $(\Lambda^2 \mathfrak{s}_8)^{\mathfrak{s}_8} = \langle e_{13}\rangle$ for $\alpha = -\frac{1}{2}$ and $(\Lambda^2 \mathfrak{s}_8)^{\mathfrak{s}_8} = 0$,  otherwise. Moreover, $(\Lambda^3\mathfrak{s}_8)^{\mathfrak{s}_8}=0$.

By Remark \ref{Re:DerAlg}, one obtains that the derivations of $\mathfrak{s}^\alpha_8$, for a fixed value $\alpha\in ]-1,1[\backslash\{0\}$ read
\begin{equation}\label{Eq:Der8a}
\mathfrak{der}(\mathfrak{s}^\alpha_8)=\left\{\left(
\begin{array}{cccc}
\mu_{11} & \mu_{12} & \mu_{13} & \mu_{14} \\
0 & \mu_{22} & 0 & -\mu_{13} \\
0 & 0 & \mu_{11} - \mu_{22} & \alpha \mu_{12} \\
0 & 0 & 0 & 0
\end{array}
\right): \mu_{11}, \mu_{12}, \mu_{13}, \mu_{14}, \mu_{22} \in \mathbb{R}\right\}.
\end{equation}
In case $\alpha=1$, one gets
\begin{equation}\label{Eq:Der8a1}
\mathfrak{der}(\mathfrak{s}^1_8)=\left\{\left(
\begin{array}{cccc}
\mu_{11} & \mu_{12} & \mu_{13} & \mu_{14} \\
0 & \mu_{22} & \mu_{23} & -\mu_{13} \\
0 & \mu_{32} & \mu_{11} - \mu_{22} & \mu_{12} \\
0 & 0 & 0 & 0
\end{array}
\right): \mu_{11}, \mu_{12}, \mu_{13}, \mu_{14}, \mu_{22},\mu_{23},\mu_{32} \in \mathbb{R}\right\}.
\end{equation}
For any value of $\alpha\in ]-1,1]\backslash\{0\}$ and an element $r \in \Lambda^2\mathfrak{s} _{8}$, one obtains
$$
[r,r]=2[(2+ \alpha)x_1x_6 - (1 + 2\alpha)x_2x_5 + (1+\alpha)x_3x_4 + x_4^2]e_{123} + 2(x_4-\alpha x_3)x_5e_{124} + 2x_6(x_4-x_3)e_{134} + 2(\alpha - 1)x_5x_6e_{234}.
$$ 
Since $ (\Lambda^3\mathfrak{s}_{8})^{\mathfrak{s}_{8}}=0$, the mCYBE reads
$$
(2+ \alpha)x_1x_6 - (1 + 2\alpha)x_2x_5 + (1+\alpha)x_3x_4 + x_4^2= 0, \quad (x_4-\alpha x_3)x_5= 0, \quad x_6(x_4-x_3)= 0, \quad (\alpha - 1)x_5x_6 = 0.
$$
Let us now consider two cases given by $\alpha\in ]-1,1[\backslash\{0\}$ and $\alpha=1$. In the first case, the space of derivations gives rise to the basis of $V_{\mathfrak{s}^\alpha_8}$ of the form
	\begin{equation}\label{Eq:Vs8a}
\begin{gathered}
X_1=	x_1\partial_{x_1}+2x_2\partial_{x_2}+x_3\partial_{x_3}+x_4\partial_{x_4}+x_6\partial_{x_6},\qquad  X_2=(x_3\alpha+x_4)\partial_{x_2}+x_5\partial_{x_3}+\alpha x_5\partial_{x_4}, \\ 
X_3=-(x_3+x_4)\partial_{x_1}+x_6\partial_{x_3}+x_6\partial_{x_4},\qquad
X_4=-x_5\partial_{x_1}-x_6\partial_{x_2},\qquad X_5=x_1\partial_{x_1}-x_2\partial_{x_2}+x_5\partial_{x_5}-x_6\partial_{x_6}.
\end{gathered}
\end{equation}
	The following diagram depicts the Darboux families that give us the orbits of ${\rm Aut}_c(\mathfrak{s}^\alpha_{8})$ on $\mathcal{Y}_{\mathfrak{s}^\alpha_{8}}$. Since $x_5, x_6$ are the bricks for every $\mathfrak{s}^\alpha_8$, our Darboux tree starts with the cases $x_i = 0$ and $x_i \neq 0$, for $i \in \{5,6\}$. The full Darboux tree is presented below.
\begin{center}
{\small
	\begin{tikzpicture}[
roundnode/.style={rounded rectangle, draw=green!40, fill=green!3, very thick, minimum size=2mm},
squarednode/.style={rectangle, draw=red!30, fill=red!2, thick, minimum size=4mm}
]
	\node[squarednode] (brick) at (0,0) {$x_6=0$};

	\node[squarednode] (65)  at (2,0) {$x_5=0$};
	\node[squarednode] (65u)  at (2,-7) {$x_5\neq 0$};

	\node[squarednode]   (654)  at (4,0) {$x_4=0$};
	\node[squarednode]   (654u) at (4,-5)  {$x_4\neq 0$};
	\node[squarednode]   (65ue) at (4,-7)  {$x_4-\alpha x_3=0$};
	\node[roundnode]   (65ueu) at (4,-8)  {$\stackrel{\rm No\,\,solutions}{x_4-\alpha x_3\neq 0}$};

	\node[squarednode]   (6543) at (7,0)  {$x_3=0$};
	\node[squarednode]   (6543u) at (7,-4)  {$x_3\neq 0$};
	\node[squarednode]   (654ufu) at (7,-5)  {$\stackrel{(x_3\neq 0)}{x_3+x_4\neq 0}$};
	\node[roundnode]   (654uf) at (7,-6)  {$\stackrel{{\rm No\,\, solutions}}{x_3+x_4=0}$};
	\node[squarednode]   (65uey) at (7,-7)  {$\alpha x_3^2-x_2x_5=0$};
	\node[squarednode]   (65ueyu) at (7,-8)  {$\stackrel{(\alpha=-1/2)}{\alpha x_3^2-x_2x_5\neq 0}$};

	\node[squarednode]   (65432) at (10,0)  {$x_2=0$};
	\node[squarednode]   (65432u) at (10,-2)  {$x_2\neq 0$};
	\node[squarednode]   (654ufug) at (10,-5)  {$(1+\alpha)x_3+x_4=0$};
	\node[roundnode]   (654ufugu) at (10,-6)  {$\stackrel{{\rm No\,\,solutions}}{(1+\alpha)x_3+x_4\neq 0}$};

	\node[squarednode]   (654321) at (13,0)  {$x_1=0$};
	\node[squarednode]   (654321u) at (13,-1)  {$x_1\neq 0$};
	\node[squarednode]   (65432u1) at (13,-2)  {$x_1=0$};
	\node[squarednode]   (65432u1u) at (13,-3)  {$x_1\neq 0$};

	\node[squarednode] (0) at (15,0) {0};
	\node[squarednode] (I) at (15,-1) {I};
	\node[squarednode] (II) at (15,-2) {II};
	\node[squarednode] (III) at (15,-3) {III};
	\node[squarednode] (IV) at (15,-4) {IV};
	\node[squarednode] (V) at (15,-5) {V};
	\node[squarednode] (VI) at (15,-7) {VI};
	\node[squarednode] (VIa) at (15,-8) {VII$^{\alpha=-1/2}$};
	
	\draw[->] (brick.east) -- (65.west);
	\draw[->] (brick.east) -- (65u.west);
	\draw[->] (65.east) -- (654.west);
	\draw[->] (65.east) -- (654u.west);
	\draw[->] (65u.east) -- (65ue.west);
	\draw[->] (65u.east) -- (65ueu.west);
	\draw[->] (65ue.east) -- (65uey.west);
	\draw[->] (65ue.east) -- (65ueyu.west);
	\draw[->] (6543.east)--(65432.west);
	\draw[->] (654.east) -- (6543.west);
\draw[->] (654.east) -- (6543u.west);
	\draw[->] (654u.east) -- (654uf.west);
\draw[->] (654u.east) -- (654ufu.west);
	\draw[->] (654ufu.east) -- (654ufug.west);
	\draw[->] (654ufu.east) -- (654ufugu.west);
	\draw[->] (6543.east) -- (65432u.west);
\draw[->] (65432.east) -- (654321.west);
\draw[->] (65432.east) -- (654321u.west);
\draw[->] (65432u.east) -- (65432u1.west);
\draw[->] (65432u.east) -- (65432u1u.west);
\end{tikzpicture}
	
\vspace{1cm}

		\begin{tikzpicture}[
roundnode/.style={rounded rectangle, draw=green!40, fill=green!3, very thick, minimum size=2mm},
squarednode/.style={rectangle, draw=red!30, fill=red!2, thick, minimum size=4mm}
]
	\node[squarednode] (6u) at (0,0) {$x_6\neq 0$};

	\node[squarednode] (6u5)  at (2,0)  {$x_5=0$};
	\node[squarednode] (6u5u)  at (2,-2) {$\stackrel{(\alpha=1)}{x_5\neq 0}$};

	\node[squarednode]   (6u5e)  at (4,0) {$x_3-x_4=0$};
	\node[roundnode]   (6u5eu) at (4,-1)  {$\stackrel{{\rm No\,\, solutions}}{x_3-x_4\neq 0}$};
	\node[squarednode]   (6u5ue) at (4,-2)  {$x_4-x_3=0$};
	\node[roundnode]   (6u5ueu) at (4,-3)  {$\stackrel{\rm No\,\,solutions}{x_4-x_3\neq 0}$};

	\node[roundnode]   (6u5efu) at (8,0)  {$\stackrel{{\rm No\,\, solutions}}{x_3^2+x_1x_6\neq 0}$};
	\node[squarednode]   (6u5ef) at (8,-1)  {$x_3^2+x_1x_6=0$};
	\node[squarednode]   (6u5ueg) at (8,-2)  {$x_1x_6-x_2x_5+x_3^2=0$};
	\node[roundnode]   (6u5uegu) at (8,-3)  {$\stackrel{{\rm No\,\,solutions}}{x_1x_6-x_2x_5+x_3^2\neq0}$};

	\node[squarednode] (VII) at (15,-1) {VIII};
	\node[squarednode] (VIII) at (15,-2) {IX$_{\alpha=1}$};

	\draw[->] (6u.east) -- (6u5.west);
	\draw[->] (6u.east) -- (6u5u.west);
	\draw[->] (6u5.east) -- (6u5e.west);
	\draw[->] (6u5.east) -- (6u5eu.west);
	\draw[->] (6u5e.east) -- (6u5ef.west);
	\draw[->] (6u5e.east) -- (6u5efu.west);
	\draw[->] (6u5u.east)--(6u5ue.west);
	\draw[->] (6u5u.east) -- (6u5ueu.west);
	\draw[->] (6u5ue.east) -- (6u5ueg.west);
	\draw[->] (6u5ue.east) -- (6u5uegu.west);

	\end{tikzpicture}
	}
\end{center}

The Lie algebra automorphism group of $\mathfrak{s}^\alpha_8$ for each $\alpha \in ]-1,1[\backslash\{0\}$, reads
$$
{\rm Aut}(\mathfrak{s}^\alpha_8) = \left\{
\left(
\begin{array}{cccc}
T^2_2 T^3_3 & T^2_1 & T^3_1 & T^4_1 \\
0 & T^2_2 & 0 & -\frac{T^3_1}{T^3_3} \\
0 & 0 & T^3_3 & \frac{\alpha T^2_1}{T^2_2} \\
0 & 0 & 0 & 1
\end{array}
\right):
\begin{array}{c}
 T^2_2, \, T^3_3 \in \rz \\
T^2_1, T^3_1, T^4_1 \in \rr
\end{array}
\right\}.
$$

In reality, we are only concerned with obtaining one element of ${\rm Aut}(\mathfrak{s}^\alpha_8)$ and its lift to $\Lambda^2\mathfrak{s}_8$ for each one of its connected components. For instance, we can choose
\begin{equation}\label{aut_s8_con}
T_{\lambda_1, \lambda_2} = \left(
\begin{array}{cccc}
\lambda_1 \lambda_2 & 0 & 0 & 0 \\
0 & \lambda_1 & 0 & 0 \\
0 & 0 & \lambda_2 & 0 \\
0 & 0 & 0 & 1
\end{array}
\right)\quad \Longrightarrow \quad
\Lambda^2T_{\lambda_1,\lambda_2}=\left(
\begin{array}{cccccc}
\lambda_2 & 0 & 0 & 0 & 0 & 0 \\
0 & \lambda_1 & 0 & 0 & 0 & 0 \\
0 & 0 & \lambda_1 \lambda_2 & 0 & 0 & 0 \\
0 & 0 & 0 & \lambda_1 \lambda_2 & 0 & 0 \\
0 & 0 & 0 & 0 & \lambda_1 & 0 \\
0 & 0 & 0 & 0 & 0 & \lambda_2 
\end{array}
\right),  \quad \lambda_1, \lambda_2 \in \{\pm 1\}.
\end{equation}

As in previous sections, the maps $\Lambda^2T_{\lambda_1,\lambda_2}$ allow us to identify the orbits of the action of ${\rm Aut} (\mathfrak{s}^\alpha_8)$ on $\mathcal{Y}_{\mathfrak{s}^\alpha_8}$. Our results are presented in Table \ref{Tab:g_orb_2}. Moreover, for $\alpha \neq -\frac{1}{2}$, each family of equivalent $r$-matrices give rise to a separate class of equivalent coboundary Lie bialgebras. For $\alpha = -\frac{1}{2}$, we obtain eight families of equivalent coboundary  Lie bialgebras given by:
 $$
 a) \,{\rm  II}\,({\rm zero\,class}), \quad b)\, {\rm I, III}, \quad c)\, {\rm IV},\quad d)\, {\rm V}, \quad e)\, {\rm VI},  {\rm VII}^{\alpha =- \frac{1}{2}}_-, {\rm VII}^{\alpha = -\frac{1}{2}}_+, \quad f)\, {\rm VIII}.
 $$
Let us know study the case $\alpha=1$ using the previous results. Since (\ref{Eq:Der8a}) for $\alpha=1$ is a Lie subalgebra of the space of derivations for $\mathfrak{s}_8^\alpha$ given in (\ref{Eq:Der8a1}), the Lie algebra spanned by the vector fields of (\ref{Eq:Vs8a}) for $\alpha=1$ is a Lie subalgebra of the Lie algebra of fundamental vector fields of the action of ${\rm Aut}(\mathfrak{s}_8^1)$ on $\Lambda^2\mathfrak{s}_8$. Hence, the loci of the above Darboux families allow us to characterise the strata of the distribution spanned by the vector fields (\ref{Eq:Vs8a}) for $\alpha=1$, which in turn gives as the orbits of a Lie subgroup of ${\rm Aut}(\mathfrak{s}_8^1)$. To easily follow our discussion, we detail that 
$$
{\rm Aut}(\mathfrak{s}^\alpha_8) = \left\{
\left(
\begin{array}{cccc}
T^2_2T^3_3-T_2^3T^2_3 & -T_2^4T_3^2+T^2_2T_3^4 & -T_2^4T^3_3+T_2^3T_3^4 & T^4_1 \\
0 & T^2_2 & T_2^3 & T_2^4 \\
0 & T_3^2 & T^3_3 & T_3^4\\
0 & 0 & 0 & 1
\end{array}
\right):
\begin{array}{c}
 T^2_2T^3_3-T_3^2T^3_2\neq 0, \, T_1^4,
T_2^4, T_3^4\in \rr
\end{array}
\right\}.
$$

To obtain the orbits of ${\rm Aut}(\mathfrak{s}_8^1)$ on $\mathcal{Y}_{\mathfrak{s}_8^1}$, it is enough to write ${\rm Aut}(\mathfrak{s}_8^1)$ as a composition of the previous subgroup with certain Lie algebra automorphisms of ${\rm Aut}(\mathfrak{s}_8^1)$, e.g. the Lie algebra automorphisms $T_A=(\det A){\rm Id}\otimes A\otimes {\rm Id}$ for every $A\in GL(2,\mathbb{R})$. %
Hence, $\Lambda^2T_A=(\det A)A\otimes (\det A){\rm Id}\otimes (\det A){\rm Id}$.
The action of the  $\Lambda^2T_A$ on the loci of the Darboux families of (\ref{Eq:Vs8a}) assuming  $\alpha=1$ allows us to obtain the orbits of the action of ${\rm Aut}(\mathfrak{s}_8^1)$ on $\mathcal{Y}_{\mathfrak{s}_8^1}$. In particular, one obtains the subsets given in Table \ref{Tab:g_orb_2}. 
Since $(\Lambda^2\mathfrak{s}^1_8)^{\mathfrak{s}^1_8}=0$, the classes of equivalent coboundary coproducts for $\mathfrak{s}_8^1$ are given by
$$
a)\, \mathcal{I},\quad b)\,\mathcal{II},\quad c)\, \mathcal{III},\quad d)\,\mathcal{IV}.
$$

\subsection{Lie algebra $\mathfrak{s}_{9}$}

As previously, we use the structure constants for $\mathfrak{s}_9$ in Table \ref{Tab:StruCons} to compute the Schouten brackets given in Tables  \ref{Tab:Schoten_1_2}--\ref{Tab:Schoten_1_3}. Moreover, $(\Lambda^2 \mathfrak{s}_9)^{\mathfrak{s}_9} = 0$ and $(\Lambda^3\mathfrak{s}_9)^{\mathfrak{s}_9}=0$ for every $\alpha>0$. These calculations are enough to verify the remaining results of present subsection.

By Remark \ref{Re:DerAlg}, one obtains that
$$
\mathfrak{der}(\mathfrak{s}^\alpha_9):=\left\{\left(
\begin{array}{cccc}
2 \mu_{22} & \mu_{12} & \mu_{13} & \mu_{14} \\
0 & \mu_{22} & \mu_{23} & \mu_{12} - \alpha \mu_{13} \\
0 & -\mu_{23} & \mu_{22} & \alpha \mu_{12} + \mu_{13} \\
0 & 0 & 0 & 0
\end{array}
\right):  \mu_{12}, \mu_{13}, \mu_{14}, \mu_{22}, \mu_{23} \in \mathbb{R}\right\}.
$$
The obtained derivations give rise to the basis of $V_{\mathfrak{s}^\alpha_9}$ of the form
\begin{equation*}
\begin{gathered}
X_1 = x_3 \partial_{x_1} + (\alpha x_3 + x_4) \partial_{x_2} + x_5 \partial_{x_3} + (\alpha x_5 - x_6) \partial_{x_4}, \quad
X_2 = (- \alpha x_3 - x_4) \partial_{x_1} + x_3 \partial_{x_2} + x_6 \partial_{x_3} + (x_5 + \alpha x_6) \partial_{x_4}, \\
X_3 = -x_5 \partial_{x_1} - x_6 \partial_{x_2}, \quad
X_4 = 3x_1 \partial_{x_1} + 3x_2 \partial_{x_2} + 2x_3 \partial_{x_3} + 2x_4 \partial_{x_4} + x_5 \partial_{x_5} + x_6 \partial_{x_6}, \\
X_5 = x_2 \partial_{x_1} - x_1 \partial_{x_2} + x_6 \partial_{x_5} - x_5 \partial_{x_6}.
\end{gathered}
\end{equation*}

For an element $r \in \Lambda^2 \mathfrak{s}_9$, we get
\begin{equation*}
\begin{split}
[r,r] &= 2(x_1 x_5 + 3\alpha x_1 x_6 - 3\alpha x_2 x_5 + x_2 x_6 + 2\alpha x_3 x_4 + x_4^2)e_{123} + 2(-\alpha x_3 x_5 + x_3 x_6 + x_4 x_5)e_{124} \\
&+ 2(x_4 x_6-x_3 x_5 - \alpha x_3 x_6 )e_{134} - 2(x_5^2 + x_6^2)e_{234}.
\end{split}
\end{equation*}

And since $(\Lambda^3 \mathfrak{s}_9)^{\mathfrak{s}_9} = 0$ for every value of $\alpha$, the mCYBE and the CYBE are equal and read
$$
(2\alpha x_3 + x_4) x_4 = 0, \quad x_5 = 0, \quad x_6 = 0.
$$

The Darboux tree for the class $\mathfrak{s}_9$ is presented below.

\begin{center}
{\small
\begin{tikzpicture}[
roundnode/.style={rounded rectangle, draw=green!40, fill=green!3, very thick, minimum size=2mm},
squarednode/.style={rectangle, draw=red!30, fill=red!2, thick, minimum size=4mm}
]
	\node[squarednode] (brick) at (0,0) {$x_6 = 0$};

	\node[squarednode] (u)  at (2,0) {$x_5=0$};
	\node[roundnode] (d)  at (2,-3) {$\stackrel{{\rm No \, solutions}}{x_5 \neq 0}$};

	\node[squarednode] (uu)  at (4,0) {$x_3=0$};
	\node[squarednode] (ud)  at (4,-2) {$x_3 \neq 0$};

	\node[squarednode] (uuu)  at (6,0) {$x_4=0$};
	\node[roundnode] (uud)  at (6,-1) {$\stackrel{{\rm No \, solutions}}{x_4 \neq 0}$};
	\node[squarednode] (udu)  at (6,-2) {$x_4=0$};
	\node[squarednode] (udd)  at (6,-3) {$x_4 \neq 0$};

	\node[squarednode] (uuuu)  at (9,0) {$x_1^2 + x_2^2=0$};
	\node[squarednode] (uuud)  at (9,-1) {$x_1^2 + x_2^2 \neq 0$};
	\node[squarednode] (uddu)  at (9,-3) {$x_4 + 2\alpha x_3  = 0$};
    \node[roundnode] (uddd)  at (9,-4) {$\stackrel{{\rm No \, solutions}}{x_4 + 2\alpha x_3  \neq 0}$};

	\node[squarednode] (0) at (12,0) {0};
	\node[squarednode] (I) at (12,-1) {I};
	\node[squarednode] (II) at (12,-2) {II};
	\node[squarednode] (I) at (12,-3) {III};
	
	\draw[->] (brick.east) -- (u.west);
	\draw[->] (brick.east) -- (d.west);

	\draw[->] (u.east) -- (uu.west);
	\draw[->] (u.east) -- (ud.west);

	\draw[->] (uu.east) -- (uuu.west);
	\draw[->] (uu.east) -- (uud.west);
	\draw[->] (ud.east) -- (udu.west);
	\draw[->] (ud.east) -- (udd.west);

	\draw[->] (uuu.east) -- (uuuu.west);
	\draw[->] (uuu.east) -- (uuud.west);
	\draw[->] (udd.east) -- (uddu.west);
	\draw[->] (udd.east) -- (uddd.west);

\end{tikzpicture}
}
\end{center}

If we define $\Delta:=(T^2_2)^2 + (T^2_3)^2$, the Lie algebra automorphisms group of each $\mathfrak{s}^\alpha_9$ reads
$$
{\rm Aut}(\mathfrak{s}^\alpha_9) = \left\{
\left(
\begin{array}{cccc}
\Delta & \frac{T^4_2( T^2_2 + \alpha T^2_3) + T^4_3(\alpha T^2_2 -  T^2_3)}{1+\alpha^2} & \frac{(-\alpha T^2_2 + T^2_3) T^4_2 + T^4_3 (T^2_2 + \alpha  T^2_3)}{1+\alpha^2} & T^4_1 \\
0 & T^2_2 & T^2_3 & T^4_2 \\
0 & -T^2_3 & T^2_2 & T^4_3 \\
0 & 0 & 0 & 1
\end{array}
\right): 
\begin{array}{c}
\Delta\in \mathbb{R}_+, \\
T^2_2,T^2_3,T^4_1, T^4_2, T^4_3 \in \rr
\end{array}
\right\}.
$$

Using ideas from Section \ref{Sec:s5}, we obtain that each ${\rm Aut}(\mathfrak{s}^\alpha_9)$ has one connected component.  Consequently, the orbits of ${\rm Aut}(\mathfrak{s}^\alpha_9)$ on $\Lambda^2\mathfrak{s}^\alpha_{9}$ are the strata of $\mathscr{E}_{\mathfrak{s}^\alpha_9}$. Since $(\Lambda^2\mathfrak{s}_9)^{\mathfrak{s}_9}=0$ for every $\alpha>0$, the strata of  $\mathscr{E}_{\mathfrak{s}^\alpha_9}$ within $\mathcal{Y}_{\mathfrak{s}^\alpha_9}$ amount for the families of equivalent coboundary Lie bialgebras on $\mathfrak{s}^\alpha_9$. Our final results are summarised in Table \ref{Tab:g_orb_2}.

\subsection{Lie algebra $\mathfrak{s}_{10}$}

As in the previous cases, we hereafter use the structure constants for Lie algebra $\mathfrak{s}_{10}$ in Table \ref{Tab:StruCons} and the induced Schouten brackets between elements of $\mathfrak{s}_{10}$, $\Lambda^2 \mathfrak{s}_{10}$, and $\Lambda^3\mathfrak{s}_{10}$ depicted in  Tables  \ref{Tab:Schoten_1_2}--\ref{Tab:Schoten_1_3}. It is remarkable that $(\Lambda^3\mathfrak{s}_{10})^{\mathfrak{s}_{10}}=0$ and $(\Lambda^2\mathfrak{s}_{10})^{\mathfrak{s}_{10}}=0$.

By Remark \ref{Re:DerAlg}, the derivations of $\mathfrak{s}_{10}$ read
$$
\mathfrak{der}(\mathfrak{s}_{10})=\left\{\left(
\begin{array}{cccc}
\mu_{11} & \mu_{12} & \mu_{13} & \mu_{14} \\
0 & \frac{1}{2}\mu_{11} & \mu_{23} & \mu_{12} - \mu_{13} \\
0 & 0 & \frac{1}{2}\mu_{11} & \mu_{12} \\
0 & 0 & 0 & 0
\end{array}
\right): \mu_{11}, \mu_{12}, \mu_{13}, \mu_{14}, \mu_{23} \in \mathbb{R}\right\},
$$
which give rise to the basis of $V_{\mathfrak{s}_{10}}$ of the form
\begin{equation*}
\begin{gathered}
X_1 = \frac{3}{2} x_1 \partial_{x_1} + \frac{3}{2} x_2 \partial_{x_2} + x_3 \partial_{x_3} + x_4 \partial_{x_4} + \frac{1}{2} x_5 \partial_{x_5} + \frac{1}{2} x_6 \partial_{x_6}, \quad
X_2 = x_3 \partial_{x_1} + (x_3 + x_4) \partial_{x_2} + x_5 \partial_{x_3} + (x_5 - x_6) \partial_{x_4}, \\
X_3 = (-x_3-x_4) \partial_{x_1} + x_6 \partial_{x_3} + x_6 \partial_{x_4}, \quad
X_4 = -x_5 \partial_{x_1} - x_6 \partial_{x_2}, \quad
X_5 = x_2 \partial_{x_1} + x_6 \partial_{x_5}.
\end{gathered}
\end{equation*}

For an element $r \in \Lambda^2 \mathfrak{s}_{10}$, we get
$$
[r,r] = 2(3x_1 x_6 - 3x_2 x_5 + x_2 x_6 + 2x_3 x_4 + x_4^2)e_{123} + 2(-x_3 x_5 + x_3 x_6 + x_4 x_5)e_{124} + 2(x_4-x_3) x_6e_{134} -2x_6^2e_{234}.
$$

Since $(\Lambda^3 \mathfrak{s}_{10})^{\mathfrak{s}_{10}} = 0$, the mCYBE and the CYBE are equal and they read
$$
- 3x_2 x_5 + 2x_3 x_4 + x_4^2 = 0, \quad (x_4 - x_3) x_5 = 0, \quad x_6 = 0.
$$

Since $x_6$ is the only brick for $\mathfrak{s}_{10}$, our Darboux tree starts with the cases $x_6 = 0$ and $x_6 \neq 0$. The full Darboux tree is presented below.

\begin{center}
{\small
\begin{tikzpicture}[
roundnode/.style={rounded rectangle, draw=green!40, fill=green!3, very thick, minimum size=2mm},
squarednode/.style={rectangle, draw=red!30, fill=red!2, thick, minimum size=4mm}
]
	\node[squarednode] (brick) at (0,0) {$x_6=0$};

	\node[squarednode] (u)  at (2,0) {$x_5=0$};
	\node[squarednode] (d)  at (2,-5) {$x_5 \neq 0$};

	\node[squarednode] (uu)  at (4,0) {$x_4=0$};
	\node[squarednode] (ud)  at (4,-4) {$x_4 \neq 0$};
	\node[roundnode] (du)  at (4,-5) {$\stackrel{{\rm No \, solutions}}{x_4 - x_3 \neq 0}$};
	\node[squarednode] (dd)  at (4,-6) {$x_4 -x_3 = 0$};

	\node[squarednode] (uuu)  at (6,0) {$x_3=0$};
	\node[squarednode] (uud)  at (6,-3) {$x_3 \neq 0$};
	\node[squarednode] (udu)  at (7,-4) {$x_3 + \frac{1}{2} x_4 = 0$};
	\node[roundnode] (udd)  at (7,-5) {$\stackrel{{\rm No \, solutions}}{x_3 + \frac{1}{2} x_4 \neq 0}$};
	\node[roundnode] (ddu)  at (7,-6) {$\stackrel{{\rm No \, solutions}}{x_3^2 - x_2 x_5  \neq 0}$};
	\node[squarednode] (ddd)  at (7,-7) {$x_3^2 - x_2 x_5 = 0$};
	
	\node[squarednode] (uuuu)  at (8,0) {$x_2=0$};
	\node[squarednode] (uuud)  at (8,-2) {$x_2 \neq 0$};

	\node[squarednode] (uuuuu)  at (10,0) {$x_1 = 0$};
	\node[squarednode] (uuuud)  at (10,-1) {$x_1 \neq 0$};
	
	\node[squarednode] (0) at (12,0) {0};
	\node[squarednode] (I) at (12,-1) {I};
	\node[squarednode] (II) at (12,-2) {II};
	\node[squarednode] (III) at (12,-3) {III};
	\node[squarednode] (IV) at (12,-4) {IV};
	\node[squarednode] (V) at (12,-7) {V};
	
	\draw[->] (brick.east) -- (u.west);
	\draw[->] (brick.east) -- (d.west);

	\draw[->] (u.east) -- (uu.west);
	\draw[->] (u.east) -- (ud.west);
	\draw[->] (d.east) -- (du.west);
	\draw[->] (d.east) -- (dd.west);

	\draw[->] (uu.east) -- (uuu.west);
	\draw[->] (uu.east) -- (uud.west);
	\draw[->] (ud.east) -- (udu.west);
	\draw[->] (ud.east) -- (udd.west);
	\draw[->] (dd.east) -- (ddu.west);
	\draw[->] (dd.east) -- (ddd.west);

	\draw[->] (uuu.east) -- (uuuu.west);
	\draw[->] (uuu.east) -- (uuud.west);
	
	\draw[->] (uuuu.east) -- (uuuuu.west);
	\draw[->] (uuuu.east) -- (uuuud.west);

\end{tikzpicture}
}
\end{center}
The connected parts of the subspaces denoted in the Darboux tree are the orbits of ${\rm Aut}_c(\mathfrak{s}_{10})$ within $\mathcal{Y}_{\mathfrak{s}_{10}}$. To obtain the orbits of ${\rm Aut}(\mathfrak{s}_{10})$ within $\mathcal{Y}_{\mathfrak{s}_{10}}$, we proceed as in previous sections and we derive the automorphism group of $\mathfrak{s}_{10}$, which takes the form
$$
{\rm Aut}(\mathfrak{s}_{10}) = \left\{
\left(
\begin{array}{cccc}
(T^2_2)^2 & T^4_3 T^2_2 & T^4_3 T^2_2 + T^4_3 T^3_2 - T^4_2 T^2_2 & T^4_1 \\
0 & T^2_2 & T^3_2 & T^4_2 \\
0 & 0 & T^2_2 & T^4_3 \\
0 & 0 & 0 & 1
\end{array}
\right):
\begin{array}{c}
 T^2_2 \in \rz, \\
T^3_2, T^4_1, T^4_2, T^4_3 \in \rr
\end{array}
\right\}.
$$
In reality, we do not need the full group. It is enough to note that there are two connected components of ${\rm Aut}(\mathfrak{s}_{10})$, represented by two automorphisms that read, along with their extensions to $\Lambda^2\mathfrak{s}_{10}$, as follows
\begin{equation}\label{aut_s10_con}
T_\lambda:=\left(
\begin{array}{cccc}
1 & 0 & 0 & 0 \\
0 & \lambda & 0 & 0 \\
0 & 0 & \lambda & 0 \\
0 & 0 & 0 & 1
\end{array}
\right)\qquad \Longrightarrow \qquad \Lambda^2T_\lambda:={\small
\left(
\begin{array}{cccccc}
\lambda  & 0 & 0 & 0 &0&0 \\
0 & \lambda  & 0 & 0 &0&0 \\
0 & 0 & 1 & 0 &0&0 \\
0 & 0 & 0 & 1&0&0 \\
0 & 0 & 0 & 0 &\lambda&0 \\
0 & 0 & 0 & 0&0&\lambda \\
\end{array}
\right)},\quad \lambda \in \{\pm 1\}.
\end{equation}
By applying $\Lambda^2T_\lambda$ to the orbits of ${\rm Aut}_c(\mathfrak{s}_{10})$ in $\mathcal{Y}_{\mathfrak{s}_{10}}$, depicted in  Table \ref{Tab:g_orb_2}, we obtain the orbits of Aut($\mathfrak{s}_{10})$ in $\mathcal{Y}_{\mathfrak{s}_{10}}$. Since $(\Lambda^2\mathfrak{s}_{10})^{\mathfrak{s}_{10}}=0$, such orbits amount to  the classification of equivalent coboundary Lie bialgebras up to Lie algebra automorphisms of $\mathfrak{s}_{10}$.

\subsection{Lie algebra $\mathfrak{s}_{11}$}

As previously, the information in Tables \ref{Tab:StruCons},  \ref{Tab:Schoten_1_2}--\ref{Tab:Schoten_1_3} allows us to accomplish the calculations of this section. In particular, one gets $(\Lambda^2\mathfrak{s}_{11})^{\mathfrak{s}_{11}}=0$ and $(\Lambda^3\mathfrak{s}_{11})^{\mathfrak{s}_{11}}=0$.

By Remark \ref{Re:DerAlg}, one obtains 
$$
\mathfrak{der}(\mathfrak{s}_{11})=\left\{\left(
\begin{array}{cccc}
\mu_{11} & \mu_{12} & \mu_{13} & \mu_{14} \\
0 & \mu_{22} & 0 & -\mu_{13} \\
0 & 0 & \mu_{11} - \mu_{22} & 0 \\
0 & 0 & 0 & 0
\end{array}
\right): \mu_{11}, \mu_{12}, \mu_{13}, \mu_{14}, \mu_{22} \in \mathbb{R}\right\}.
$$
Such derivations give rise to the basis of $V_{\mathfrak{s}_{11}}$ of the form
\begin{equation*}
\begin{gathered}
X_1 = x_1 \partial_{x_1} + 2x_2 \partial_{x_2} + x_3 \partial_{x_3} + x_4 \partial_{x_4}+ x_6 \partial_{x_6}, \quad
X_2 = x_4 \partial_{x_2} + x_5 \partial_{x_3}, \quad
X_3 = (-x_3-x_4) \partial_{x_1} + x_6 \partial_{x_3} + x_6 \partial_{x_4}, \\
X_4 = -x_5 \partial_{x_1} - x_6 \partial_{x_2}, \quad
X_5 = x_1 \partial_{x_1} - x_2 \partial_{x_2} + x_5 \partial_{x_5} - x_6 \partial_{x_6},
\end{gathered}
\end{equation*}

For an element $r \in \Lambda^2 \mathfrak{s}_{11}$, we get
$$
[r,r] = 2(2x_1 x_6 - x_2 x_5 + x_3 x_4 + x_4^2)e_{123} + 2x_4 x_5e_{124} + 2(x_4-x_3) x_6 e_{134} -2x_6x_5e_{234}.
$$
Since $(\Lambda^3 \mathfrak{s}_{11})^{\mathfrak{s}_{11}} = \{0\}$, the mCYBE and the CYBE are equal and read
$$
2x_1 x_6 - x_2 x_5 + x_3 x_4 + x_4^2 = 0, \quad x_4 x_5 = 0, \quad (x_4 - x_3) x_6 = 0, \quad x_5 x_6 = 0.
$$

Since $x_5, x_6$ are the bricks for $\mathfrak{s}_{11}$, our Darboux tree starts with the cases $x_i = 0$ and $x_i \neq 0$, for $i \in \{5,6\}$. The Darboux tree is presented below.

\begin{center}
{\small 
\begin{tikzpicture}[
roundnode/.style={rounded rectangle, draw=green!40, fill=green!3, very thick, minimum size=2mm},
squarednode/.style={rectangle, draw=red!30, fill=red!2, thick, minimum size=4mm}
]
	\node[squarednode] (brick) at (0,0) {$x_6 \neq 0$};

	\node[squarednode] (u)  at (2,0) {$x_5=0$};
	\node[roundnode] (d)  at (2,-1) {$\stackrel{{\rm No \, solutions}}{x_5 \neq 0}$};

	\node[squarednode] (uu)  at (5,0) {$x_4 - x_3 = 0$};
	\node[roundnode] (ud)  at (5,-1) {$\stackrel{{\rm No \, solutions}}{x_4 - x_3 \neq 0}$};

	\node[squarednode] (uuu)  at (8,0) {$x_3^2 + x_1 x_6 = 0$};
	\node[roundnode] (uud)  at (8,-1) {$\stackrel{{\rm No \, solutions}}{x_3^2 + x_1 x_6 \neq 0}$};
	
	\node[squarednode] (IX) at (14,0) {IX};
	
	\draw[->] (brick.east) -- (u.west);
	\draw[->] (brick.east) -- (d.west);

	\draw[->] (u.east) -- (uu.west);
	\draw[->] (u.east) -- (ud.west);

	\draw[->] (uu.east) -- (uuu.west);
	\draw[->] (uu.east) -- (uud.west);

\end{tikzpicture}}
\end{center}
\begin{center}
{\small
\begin{tikzpicture}[
roundnode/.style={rounded rectangle, draw=green!40, fill=green!3, very thick, minimum size=2mm},
squarednode/.style={rectangle, draw=red!30, fill=red!2, thick, minimum size=4mm}
]
	\node[squarednode] (brick) at (0,0) {$x_6=0$};

	\node[squarednode] (u)  at (2,0) {$x_5=0$};
	\node[squarednode] (d)  at (2,-7) {$x_5 \neq 0$};

	\node[squarednode] (uu)  at (4,0) {$x_4=0$};
	\node[squarednode] (ud)  at (4,-6) {$x_4 \neq 0$};
	\node[roundnode] (du)  at (4,-7) {$\stackrel{{\rm No \, solutions}}{x_4 \neq 0}$};
	\node[squarednode] (dd)  at (4,-8) {$x_4 = 0$};

	\node[squarednode] (uuu)  at (7,0) {$x_3=0$};
	\node[squarednode] (uud)  at (7,-4) {$x_3 \neq 0$};
	\node[squarednode] (udu)  at (7,-6) {$x_3 + x_4 = 0$};
	\node[roundnode] (udd)  at (7,-7) {$\stackrel{{\rm No \, solutions}}{x_3 + x_4 \neq 0}$};
	\node[squarednode] (ddu)  at (7,-8) {$x_2 = 0$};
	\node[roundnode] (ddd)  at (7,-9) {$\stackrel{{\rm No \, solutions}}{x_2 \neq 0}$};

	\node[squarednode] (uuuu)  at (10,0) {$x_2=0$};
	\node[squarednode] (uuud)  at (10,-2) {$x_2 \neq 0$};
	\node[squarednode] (uudu)  at (10,-4) {$x_2 = 0$};
	\node[squarednode] (uudd)  at (10,-5) {$x_2 \neq 0$};
	\node[squarednode] (uduu)  at (10,-6) {$x_1 = 0$};
	\node[squarednode] (udud)  at (10,-7) {$x_1 \neq 0$};

	\node[squarednode] (uuuuu)  at (12,0) {$x_1=0$};
	\node[squarednode] (uuuud)  at (12,-1) {$x_1 \neq 0$};
	\node[squarednode] (uuudu)  at (12,-2) {$x_1 = 0$};
	\node[squarednode] (uuudd)  at (12,-3) {$x_1 \neq 0$};
	
	\node[squarednode] (0) at (14,0) {0};
	\node[squarednode] (I) at (14,-1) {I};
	\node[squarednode] (II) at (14,-2) {II};
	\node[squarednode] (III) at (14,-3) {III};
	\node[squarednode] (IV) at (14,-4) {IV};
	\node[squarednode] (V) at (14,-5) {V};
	\node[squarednode] (VI) at (14,-6) {VI};
	\node[squarednode] (VII) at (14,-7) {VII};
	\node[squarednode] (VIII) at (14,-8) {VIII};
	
	\draw[->] (brick.east) -- (u.west);
	\draw[->] (brick.east) -- (d.west);

	\draw[->] (u.east) -- (uu.west);
	\draw[->] (u.east) -- (ud.west);
	\draw[->] (d.east) -- (du.west);
	\draw[->] (d.east) -- (dd.west);

	\draw[->] (uu.east) -- (uuu.west);
	\draw[->] (uu.east) -- (uud.west);
	\draw[->] (ud.east) -- (udu.west);
	\draw[->] (ud.east) -- (udd.west);
	\draw[->] (dd.east) -- (ddu.west);
	\draw[->] (dd.east) -- (ddd.west);

	\draw[->] (uuu.east) -- (uuuu.west);
	\draw[->] (uuu.east) -- (uuud.west);
	\draw[->] (uud.east) -- (uudu.west);
	\draw[->] (uud.east) -- (uudd.west);
	\draw[->] (udu.east) -- (uduu.west);
	\draw[->] (udu.east) -- (udud.west);

	\draw[->] (uuuu.east) -- (uuuuu.west);
	\draw[->] (uuuu.east) -- (uuuud.west);
	\draw[->] (uuud.east) -- (uuudu.west);
	\draw[->] (uuud.east) -- (uuudd.west);

\end{tikzpicture}

\vspace{0.5cm}

}
\end{center}

The above Darboux tree gives the orbits of ${\rm Aut}_c(\mathfrak{s}_{11})$ in $\mathcal{Y}_{\mathfrak{s}_{11}}$. To obtain the equivalence classes of $r$-matrices up to Lie algebra automorphisms of ${\rm Aut}(\mathfrak{s}_{11})$, we obtain that
$$
{\rm Aut}(\mathfrak{s}_{11}) = \left\{
\left(
\begin{array}{cccc}
T^2_2 T^3_3 & T^2_1 & -T^3_3 T^4_2 & T^4_1 \\
0 & T^2_2 & 0 & T^4_2 \\
0 & 0 & T^3_3 & 0 \\
0 & 0 & 0 & 1
\end{array}
\right): 
\begin{array}{c}
T^2_2, T^3_3 \in \rz \\
T^2_1, T^4_1, T^4_2 \in \rr
\end{array}
\right\}.
$$
In fact, only one representative for each connected component is needed for our purposes. There are four connected components of ${\rm Aut}(\mathfrak{s}_{11})$, each one represented by one of the following automorphisms, which are given together with their lifts to $\Lambda^2\mathfrak{s}_{11}$, namely
\begin{equation}\label{aut_s11_con}
T_{\lambda_1,\lambda_2}:=\left(
\begin{array}{cccc}
\lambda_1 \lambda_2 & 0 & 0 & 0 \\
0 & \lambda_1 & 0 & 0 \\
0 & 0 & \lambda_2 & 0 \\
0 & 0 & 0 & 1
\end{array}
\right) \quad \Longrightarrow \quad \Lambda^2T_{\lambda_1,\lambda_2}= {\small
\left(
\begin{array}{cccccc}
  \lambda_2 & 0 & 0 & 0 &0&0 \\
0 & \lambda_1   & 0 & 0 &0&0 \\
0 & 0 & \lambda_1 \lambda_2 & 0 &0&0 \\
0 & 0 & 0 & \lambda_1 \lambda_2&0&0 \\
0 & 0 & 0 & 0 &\lambda_1&0 \\
0 & 0 & 0 & 0&0&\lambda_2 \\
\end{array}
\right),\qquad \lambda_1, \lambda_2 \in \{ \pm 1\}}.
\end{equation}

The $\Lambda^2T_{\lambda_1,\lambda_2}$ allow us to check whether the orbits of ${\rm Aut}_c(\mathfrak{s}_{11})$ are connected by an element of ${\rm Aut}(\mathfrak{s}_{11})$ acting on $\Lambda^2\mathfrak{s}_{11}$. The equivalence classes of such $r$-matrices (up to the action of elements of ${\rm Aut}(\mathfrak{s}_{11})$, as standard) are given in Table \ref{Tab:g_orb_2}. Since $(\Lambda^2\mathfrak{s}_{11})^{\mathfrak{s}_{11}}=0$, each family of $r$-matrices gives rise to a separate class of equivalent coboundary Lie bialgebras on $\mathfrak{s}_{11}$.

\subsection{Lie algebra $\mathfrak{s}_{12}$}

The structure constants for Lie algebra $\mathfrak{s}_{12}$ are given in Table \ref{Tab:StruCons}. This along with the selected Schouten brackets between basis elements of $\mathfrak{s}_{12}$, $\Lambda^2 \mathfrak{s}_{12}$, and $\Lambda^3\mathfrak{s}_{12}$ described in Tables \ref{Tab:Schoten_1_2}--\ref{Tab:Schoten_1_3}, allow us to accomplish the calculations of this section. In particular, we obtain that  $(\Lambda^2\mathfrak{s}_{12})^{\mathfrak{s}_{12}}=0$ and $(\Lambda^3\mathfrak{s}_{12})^{\mathfrak{s}_{12}}=0$.

By Remark \ref{Re:DerAlg}, one obtains that
$$
\mathfrak{der}(\mathfrak{s}_{12})=\left\{\left(
\begin{array}{cccc}
\mu_{11} & \mu_{12} & \mu_{13} & \mu_{14} \\
-\mu_{12} & \mu_{11} & \mu_{14} & -\mu_{13} \\
0 & 0 & 0 & 0 \\
0 & 0 & 0 & 0
\end{array}
\right): \mu_{11}, \mu_{12}, \mu_{13}, \mu_{14} \in \mathbb{R}\right\}.
$$
The previous derivations give rise to the basis of $V_{\mathfrak{s}_{12}}$ given by
\begin{equation*}
\begin{gathered}
X_1 = 2x_1 \partial_{x_1} + x_2 \partial_{x_2} + x_3 \partial_{x_3} + x_4 \partial_{x_4}+ x_5 \partial_{x_5}, \quad
X_2 = x_4 \partial_{x_2} + x_5 \partial_{x_3} -  x_2 \partial_{x_4} -  x_3 \partial_{x_5}, \\
X_3 = (-x_3-x_4) \partial_{x_1} + x_6 \partial_{x_3} + x_6 \partial_{x_4}, \quad
X_4 = (x_2 - x_5) \partial_{x_1} - x_6 \partial_{x_2} + x_6 \partial_{x_5}.
\end{gathered}
\end{equation*}

For an element $r \in \Lambda^2 \mathfrak{s}_{12}$, we get
$$
[r,r] = -2(x_2 x_3 + x_4 x_5)e_{123} + 2(-2x_1 x_6 + x_2 x_5 -x_3^2 - x_3 x_4 - x_5^2)e_{124} - 2(x_2 + x_5)x_6e_{134} + 2(x_3 - x_4)x_6e_{234}.
$$
Since $(\Lambda^3 \mathfrak{s}_{12})^{\mathfrak{s}_{12}} = 0$, the mCYBE and CYBE are equal and read
$$
x_2 x_3 + x_4 x_5 = 0, \quad 2x_1 x_6 - x_2 x_5 + x_3^2 + x_3 x_4 + x_5^2 = 0, \quad (x_2 + x_5) x_6 = 0, \quad (x_3 - x_4) x_6 = 0.
$$

Since $x_6$ is the only brick for $\mathfrak{s}_{12}$, our Darboux tree starts with the cases $x_6 = 0$ and $x_6 \neq 0$. The full Darboux tree is given next.

\begin{center}
{\small
\begin{tikzpicture}[
roundnode/.style={rounded rectangle, draw=green!40, fill=green!3, very thick, minimum size=2mm},
squarednode/.style={rectangle, draw=red!30, fill=red!2, thick, minimum size=4mm}
]
	\node[squarednode] (brick) at (0,0) {$x_6=0$};

	\node[squarednode] (u)  at (3,0) {$\begin{array}{c}
 x_2-x_4=0\\
x_2+x_4=0 
\end{array} $};
	\node[squarednode] (d)  at (2,-2) {$x_2^2 + x_4^2 \neq 0$};

	\node[squarednode] (uu)  at (7,0) {$\begin{array}{c}
 x_3-x_5=0\\
x_3+x_5=0 
\end{array} $};
	\node[roundnode] (ud)  at (7,-1) {$\stackrel{{\rm No \, solutions}}{x_3^2 + x_5^2 \neq 0}$};
	\node[squarednode] (du)  at (5,-2) {$x_3^2 + x_5^2 = 0$};
	\node[squarednode] (dd)  at (5,-3) {$x_3^2 + x_5^2 \neq 0$};

	\node[squarednode] (uuu)  at (10,0) {$x_1=0$};
	\node[squarednode] (uud)  at (10,-1) {$x_1 \neq 0$};
	\node[squarednode] (ddu)  at (8,-3) {$x_2 x_3 + x_4 x_5 = 0$};
	\node[roundnode] (ddd)  at (8,-4) {$\stackrel{{\rm No \, solutions}}{x_2 x_3 + x_4 x_5 \neq 0}$};

	\node[roundnode] (dddu)  at (12,-3) {$\stackrel{{\rm No \, solutions}}{-x_2 x_5 + x_3^2 + x_3 x_4 + x_5^2 \neq 0}$};
	\node[squarednode] (dddd)  at (12,-4) {$-x_2 x_5 + x_3^2 + x_3 x_4 + x_5^2 = 0$};
	
	\node[squarednode] (0) at (16,0) {0};
	\node[squarednode] (I) at (16,-1) {I};
	\node[squarednode] (II) at (16,-2) {II};
	\node[squarednode] (III) at (16,-4) {III};
	
	\draw[->] (brick.east) -- (u.west);
	\draw[->] (brick.east) -- (d.west);

	\draw[->] (u.east) -- (uu.west);
	\draw[->] (u.east) -- (ud.west);
	\draw[->] (d.east) -- (du.west);
	\draw[->] (d.east) -- (dd.west);

	\draw[->] (uu.east) -- (uuu.west);
	\draw[->] (uu.east) -- (uud.west);
	\draw[->] (dd.east) -- (ddu.west);
	\draw[->] (dd.east) -- (ddd.west);

	\draw[->] (ddu.east) -- (dddu.west);
	\draw[->] (ddu.east) -- (dddd.west);

\end{tikzpicture}

\vspace{0.5cm}

\begin{tikzpicture}[
roundnode/.style={rounded rectangle, draw=green!40, fill=green!3, very thick, minimum size=2mm},
squarednode/.style={rectangle, draw=red!30, fill=red!2, thick, minimum size=4mm}
]
	\node[squarednode] (brick) at (0,0) {$x_6-k=0$, ($k\neq 0$)};

	\node[squarednode] (u)  at (5,0) {\parbox{1.7cm}{$x_2 + x_5=0 \\ x_3 - x_4 = 0$}};
	\node[roundnode] (d)  at (5,-1) {$\stackrel{{\rm No \, solutions}}{x_2 + x_5 \neq 0 \lor x_3 - x_4 \neq 0}$};

	\node[squarednode] (uu)  at (11,0) {$x_5^2 + x_3^2 - x_2 x_5 + x_3 x_4 + 2x_1 x_6 = 0$};
	\node[roundnode] (ud)  at (11,-1) {$\stackrel{{\rm No \, solutions}}{x_5^2 + x_3^2 - x_2 x_5 + x_3 x_4 + 2x_1 x_6 \neq 0}$};
	
	\node[squarednode] (IV) at (15,0) {IV$_{|k|>0}$};
	
	\draw[->] (brick.east) -- (u.west);
	\draw[->] (brick.east) -- (d.west);

	\draw[->] (u.east) -- (uu.west);
	\draw[->] (u.east) -- (ud.west);

\end{tikzpicture}
}
\end{center}
The connected parts of the loci of the Darboux families of the Darboux tree are the orbits of ${\rm Aut}_c(\mathfrak{s}_{12})$ within $\mathcal{Y}_{\mathfrak{s}_{12}}$. These are given by the connected parts of the subsets given in Table \ref{Tab:g_orb_2}. Let us obtain the orbits of ${\rm Aut}(\mathfrak{s}_{12})$ on $\mathcal{Y}_{\mathfrak{sl}_{12}}$.

The Lie algebra automorphism group of $\mathfrak{s}_{12}$ reads
\begin{equation*}
{\rm Aut}(\mathfrak{s}_{12}) = \left\{
\left(
\begingroup
\setlength\arraycolsep{4pt}
\begin{array}{cccc}
T^1_1 & T^2_1 & T^3_1 & \pm T^3_2 \\
\mp T^2_1 & \pm T^1_1 & T^3_2 & \mp T^3_1 \\
0 & 0 & 1 & 0 \\
0 & 0 & 0 & \pm 1 
\end{array}
\endgroup
\right):
\begingroup
\setlength\arraycolsep{4pt}
\begin{array}{c}
 (T_1^1)^2 + (T^2_1)^2 \in \rz \\
T^1_1,T^2_1,T^3_1, T^3_2 \in \rr
\end{array}
\endgroup
\right\}.
\end{equation*}
There are two connected components of ${\rm Aut}(\mathfrak{s}_{12})$. One such element for each connected component of ${\rm Aut}(\mathfrak{s}_{12})$ and their extensions to $\Lambda^2\mathfrak{s}_{12}$ are given by
$$
T_\pm:=\left(
\begin{array}{cccc}
1 & 0 & 0 & 0 \\
0 & \pm 1 & 0 & 0 \\
0 & 0 & 1 & 0 \\
0 & 0 & 0 & \pm 1 
\end{array}
\right)\quad \Longrightarrow \quad \Lambda^2T_\pm =\left(
\begin{array}{cccccc}
\pm 1 & 0 & 0 & 0&0&0 \\
0 & 1 & 0 & 0 &0&0\\
0 & 0 & \pm 1 & 0 &0&0\\
0 & 0 & 0 & \pm 1&0&0\\
0 & 0 & 0 & 0 &1 &0\\
0 & 0 & 0 & 0 &0&\pm 1\\
\end{array}
\right).
$$
 By using $\Lambda^2T_{\pm}$ on the loci of the above Darboux tree, we obtain the orbits of ${\rm Aut}(\mathfrak{s}_{12})$ that are given by the classes of equivalent $r$-matrices detailed in Table \ref{Tab:g_orb_2}.  Since $(\Lambda^2\mathfrak{s}_{12})^{\mathfrak{s}_{12}}=0$, each family of equivalent $r$-matrices gives rise to a separate class of equivalent coboundary Lie bialgebras.

\subsection{Lie algebra $\mathfrak{n}_1$}
The structure constants for the Lie algebra $\mathfrak{n}_1$ are given in Table \ref{Tab:StruCons}, while relevant Schouten brackets between the elements of bases of $\mathfrak{n}_1$, $\Lambda^2\mathfrak{n}_1$, and $\Lambda^3\mathfrak{n}_1$ to be used hereafter are displayed in Tables  \ref{Tab:Schoten_1_2}--\ref{Tab:Schoten_1_3}. From these calculations, one obtains that  $(\Lambda^2\mathfrak{n}_1)^{\mathfrak{n}_1}= \langle e_{12}\rangle$ and $(\Lambda^3\mathfrak{n}_1)^{\mathfrak{n}_1}=\langle e_{123},e_{124}\rangle$.
By Remark \ref{Re:DerAlg}, one has that the derivations\footnote{Note that the dimension of the space of derivation matches the result given in the W\v{S} classification} of $\mathfrak{n}_1$ take the form
$$
\mathfrak{der}(\mathfrak{n}_1)=\left\{\left(
\begin{array}{cccc}
\mu_{11} & \mu_{12} & \mu_{13} & \mu_{14} \\
0 & \mu_{22} & \mu_{12} & \mu_{24} \\
0 & 0 & 2\mu_{22} - \mu_{11} & \mu_{34} \\
0 & 0 & 0 & \mu_{11} - \mu_{22}
\end{array}
\right): \mu_{11}, \mu_{12}, \mu_{13}, \mu_{14}, \mu_{22}, \mu_{24}, \mu_{34} \in \mathbb{R}\right\},
$$
which, by lifting them to $\Lambda^2\mathfrak{n}_1$, give rise to the basis of $V_{\mathfrak{n}_1}$ of the form
\begin{equation*}
\begin{gathered}
X_1 = x_1 \partial_{x_1} + 2x_3 \partial_{x_3} - x_4 \partial_{x_4} + x_5 \partial_{x_5}, \quad
X_2 = x_2 \partial_{x_1} + x_4 \partial_{x_2} + x_5 \partial_{x_3} + x_6 \partial_{x_5}, \quad
X_3 = -x_4 \partial_{x_1} + x_6 \partial_{x_3}, \\
X_4 = -x_5 \partial_{x_1} - x_6 \partial_{x_2}, \quad
X_5 = x_1 \partial_{x_1} + 2x_2 \partial_{x_2} - x_3 \partial_{x_3} + 3x_4 \partial_{x_4} + x_6 \partial_{x_6}, \\
X_6 = x_3 \partial_{x_1} - x_6 \partial_{x_4}, \quad
X_7 = x_3 \partial_{x_2} + x_5 \partial_{x_4}.
\end{gathered}
\end{equation*}
For an element $r \in \Lambda^2 \mathfrak{n}_1$, we get
$$
[r, r] = 2(x_4 x_5 - x_2 x_6) e_{123} + 2(x_5^2 - x_3 x_6) e_{124} + 2 x_5 x_6 e_{134} + 2 x_6^2 e_{234}.
$$
Thus, CYBE reads
$$
x_5=0,\quad x_6 = 0.
$$
Since $(\Lambda^3 \mathfrak{n}_1)^{\mathfrak{n}_1} = \langle e_{123}, e_{124}\rangle$, the mCYBE differs from the CYBE and reads
$$
x_6 = 0.
$$
The Darboux tree for this Lie algebra reads
\begin{center}
{\small
\begin{tikzpicture}[
roundnode/.style={rounded rectangle, draw=green!40, fill=green!3, very thick, minimum size=2mm},
squarednode/.style={rectangle, draw=red!30, fill=red!2, thick, minimum size=4mm}
]
\node[squarednode] (brick) at (0,0) {$x_6=0$};

\node[squarednode] (u)  at (2,0) {$x_5=0$};
\node[squarednode] (d)  at (2,-6) {$x_5 \neq 0$};

\node[squarednode] (uu)  at (4,0) {$x_4 = 0$};
\node[squarednode] (ud)  at (4,-4) {$x_4 \neq 0$};
\node[squarednode] (du)  at (5,-6) {$x_2 x_5 - x_3 x_4 = 0$};
\node[squarednode] (dd)  at (5,-7) {$x_2 x_5 - x_3 x_4 \neq 0$};

\node[squarednode] (uuu)  at (6,0) {$x_3=0$};
\node[squarednode] (uud)  at (6,-3) {$x_3 \neq 0$};
\node[squarednode] (udu)  at (6,-4) {$x_3 = 0$};
\node[squarednode] (udd)  at (6,-5) {$x_3 \neq 0$};

\node[squarednode] (uuuu)  at (8,0) {$x_2=0$};
\node[squarednode] (uuud)  at (8,-2) {$x_2 \neq 0$};

\node[squarednode] (uuuuu)  at (10,0) {$x_1 = 0$};
\node[squarednode] (uuuud)  at (10,-1) {$x_1 \neq 0$};

\node[squarednode] (0) at (12,0) {0};
\node[squarednode] (I) at (12,-1) {I};
\node[squarednode] (II) at (12,-2) {II};
\node[squarednode] (III) at (12,-3) {III};
\node[squarednode] (IV) at (12,-4) {IV};
\node[squarednode] (V) at (12,-5) {V};
\node[squarednode] (VI) at (12,-6) {VI};
\node[squarednode] (VII) at (12,-7) {VII};

\draw[->] (brick.east) -- (u.west);
\draw[->] (brick.east) -- (d.west);

\draw[->] (u.east) -- (uu.west);
\draw[->] (u.east) -- (ud.west);
\draw[->] (d.east) -- (du.west);
\draw[->] (d.east) -- (dd.west);

\draw[->] (uu.east) -- (uuu.west);
\draw[->] (uu.east) -- (uud.west);
\draw[->] (ud.east) -- (udu.west);
\draw[->] (ud.east) -- (udd.west);

\draw[->] (uuu.east) -- (uuuu.west);
\draw[->] (uuu.east) -- (uuud.west);

\draw[->] (uuuu.east) -- (uuuuu.west);
\draw[->] (uuuu.east) -- (uuuud.west);

\end{tikzpicture}
}
\end{center}
As in all previous sections, the orbits of {\rm Aut}$_c(\mathfrak{n}_1$) in $\mathcal{Y}_{\mathfrak{n}_1}$ are given by the connected components of the loci of the Darboux families of the above Darboux tree. Results can be found in Table \ref{Tab:g_orb_2}.

Let us again obtain the orbits of {\rm Aut}$(\mathfrak{n}_1)$ in $\mathcal{Y}_{\mathfrak{n}_1}$. The automorphism group of $\mathfrak{n}_1$ takes the form
$$
{\rm Aut}(\mathfrak{n}_1) = \left\{\left(
\begin{array}{cccc}
T^3_3 (T^4_4)^2 & T^3_2 T^4_4 & T^3_1 & T^4_1 \\
0 & T^3_3 T^4_4 & T^3_2 & T^4_2 \\
0 & 0 & T^3_3 & T^4_3 \\
0 & 0 & 0 & T^4_4
\end{array}
\right): \quad T^3_3 , T^4_4 \in \rz, T^3_1, T^3_2, T^4_1, T^4_2, T^4_3 \in \rr\right\}.
$$
It has four connected components represented by the following Lie algebra automorphisms, which are also lifted to $\Lambda^2\mathfrak{n}_1$:
$$
T_{\lambda_1,\lambda_2}:=\left(
\begin{array}{cccc}
\lambda_1 & 0 & 0 & 0 \\
0 & \lambda_1\lambda_2 & 0 & 0 \\
0 & 0 & \lambda_1 & 0 \\
0 & 0 & 0 & \lambda_2
\end{array}
\right)\quad \Longrightarrow \quad \Lambda^2T_{\lambda_1,\lambda_2}=\left(
\begin{array}{cccccc}
\lambda_2 & 0 & 0 & 0&0&0 \\
0 & 1 & 0 & 0&0&0 \\
0 & 0 & \lambda_1\lambda_2 & 0&0&0 \\
0 & 0 & 0 & \lambda_2&0&0\\
0 & 0 & 0 & 0&\lambda_1&0\\
0 & 0 & 0 & 0&0&\lambda_1\lambda_2\\
\end{array}
\right),\quad \lambda_1,\lambda_2\in \{\pm 1\}.
$$
The action of the $\Lambda^2T_{\lambda_1,\lambda_2}$ on the loci of the above Darboux tree gives the families of equivalent $r$-matrices given in Table \ref{Tab:g_orb_2}. Since $(\Lambda^2\mathfrak{n}_1)^{\mathfrak{n}_1}=\langle e_{12}\rangle$ and using the information in Table \ref{Tab:g_orb_2}, we see that the families of equivalent coboundary Lie bialgebras on $\mathfrak{n}_1$ are given by orbits of ${\rm Aut}(\mathfrak{n}_1)$ with the exception that the class I and the zero class (given by the zero $r$-matrix) give the Lie bialgebra in $\mathfrak{n}_1$ with a zero coproduct. 

\section{Conclusions and outlook}

This work has devised a generalisation of  the theory of Darboux polynomials to determine and to classify up to Lie algebra automorphisms the coboundary real Lie bialgebras over indecomposable real four-dimensional Lie algebras in a geometric manner. As a byproduct, a technique for the matrix representation of Lie algebras with non-trivial kernel has been developed. Such matrix representations are frequently useful in calculations.

The procedures devised in this work are good enough to make affordable the classification, up to Lie algebra automorphisms, of coboundary Lie bialgebras on real and complex, of at least dimension five, indecomposable Lie algebras via the \v{S}nobl and Winternitz classification \cite{SW14}.  The case of two- and three-dimensional Lie bialgebras can also be obtained. Moreover, a brief analysis of the classification of coboundary coproducts on real four-dimensional decomposable Lie algebras through our methods shows that  their classification relies partially on the classification of coboundary coproducts on three- and two-dimensional Lie algebras, while the complexity of the procedure is significantly  easier than in the present work \cite{Go00,LW20}. We expect to tackle this task in the future and to compare our results with previous works on the topic \cite{BHP99,LW20,Ku94}. Our techniques are also expected to be applied to other higher-dimensional Lie algebras of particular types, e.g. five-dimensional nilpotent or semi-simple Lie algebras. In all previously mentioned cases, we aim to inspect new properties of the Darboux families used to study these problems.

The determination of coboundary Lie bialgebras is relevant to the study of Poisson Lie groups. In particular, it represents an initial step in the study of Poisson Lie groups of dimension four, which would give rise to an extension of the paper \cite{BBM12}.

Our work has focused on the classification of coboundary Lie bialgebras on indecomposable Lie algebras. Probably, one laborious task in the application of our method is the determination of a representative of each connected part of the Lie group of Lie algebra automorphisms of a Lie algebra. In the case of semi-simple Lie algebras, this is much easier, as algebraic techniques based on Dynkin diagrams and other results can be applied \cite{Bo05,Mu52}. It is left for further works to study this problem and to search for coboundary Lie bialgebras in the Lie algebra $\mathfrak{so}(3,2)$, which has applications in the  analysis of quantum gravity in 3+1 dimensions (see \cite{BHM14} and references therein).

\section{Acknowledgements}
 D. Wysocki acknowledges support from a grant financed by the University of Warsaw (UW) and the Kartezjusz program of the Jagiellonian University and UW. J. de Lucas acknowledges partial financial support from the NCN grant  HARMONIA 2016/22/M/ST1/00542.

{ 
	\begin{table}[h]
\centering
\resizebox{\columnwidth}{!}{
	\begin{tabular}{|c|c|c|c|c|c|c|c|c|c|}
\hline
\theadstart
$\mathfrak{g}$ & \textrm{Orbit} & {\rm Dim} & $x_1$ & $x_2$ & $x_3$ & $x_4$ & $x_5$ & $x_6$ & {\rm Repr. element } \\
\hhline{|=|=|=|=|=|=|=|=|=|=|}
\multirow{8}{*}{$\mathfrak{s}_1$} & ${\rm I}_\pm$ & 1  & $\rpm$ & 0 & 0 & 0 & 0 & 0 & $\pm e_{12}$ \\\hhline{|~|-|-|-|-|-|-|-|-|-|}
& {\rm II} & 1  & 0 & $\rz$ & 0 & 0 & 0 & 0 & $e_{13}$ \\\hhline{|~|-|-|-|-|-|-|-|-|-|}
& ${\rm III}_\pm$ & 2  & $\rpm$ & $\rz$ & 0 & 0 & 0 & 0 & $\pm e_{12} + e_{13}$ \\\hhline{|~|-|-|-|-|-|-|-|-|-|}
& {\rm IV} & 2  & 0 & $\rr$ & 0 & $\rz$ & 0 & 0 & $e_{23}$ \\\hhline{|~|-|-|-|-|-|-|-|-|-|}
& ${\rm V}_\pm$ & 3  & $\rpm$ & $\rr$ & 0 & $\rz$ & 0 & 0 & $\pm e_{12} + e_{23}$ \\\hhline{|~|-|-|-|-|-|-|-|-|-|}
& {\rm VI} & 3  & $\rr$ & $\rr$ & $\rz$ & 0 & 0 & 0 & $e_{14}$ \\\hhline{|~|-|-|-|-|-|-|-|-|-|}
& {\rm VII} & 3  & 0 & $\rr$ & 0 & $\rr$ & 0 & $\rz$ & $e_{34}$ \\\hhline{|~|-|-|-|-|-|-|-|-|-|}
& ${\rm VIII}_\pm$ & 4 & $\rpm$ & $\rr$ & 0 & $\rr$ & 0 & $\rz$ & $\pm e_{12} + e_{34}$  \\
\hline
\multirow{4}{*}{$\mathfrak{s}_2$} & ${\rm I}_\pm$ &1 & $\rpm$ & 0 & 0 & 0 & 0 & 0 & $\pm e_{12}$\\\hhline{|~|-|-|-|-|-|-|-|-|-|}
& ${\rm II}_\pm$ &2  & $\rr$ & $\rpm$ & 0 & 0 & 0 & 0 & $\pm e_{13}$\\\hhline{|~|-|-|-|-|-|-|-|-|-|}
& ${\rm III}$ &3 & $\rr$ & $\rr$ & $\rz$ & 0 & 0 & 0 & $ e_{14}$\\\hhline{|~|-|-|-|-|-|-|-|-|-|}
& ${\rm IV}_\pm$ &3 & $\rr$ & $\rr$ & 0 & $\rpm$ & 0 & 0 & $\pm e_{23}$\\
\hline
\multirow{13}*{}{} & {\rm I} & 1  & $\rz$ & 0 & 0 & 0 & 0 & 0 & $e_{12}$ \\\hhline{|~|-|-|-|-|-|-|-|-|-|}
& {\rm II} & 1  & 0 & $\rz$ & 0 & 0 & 0 & 0 & $e_{13}$ \\\hhline{|~|-|-|-|-|-|-|-|-|-|}
& {\rm III} & 2  & $\rz$ & $\rz$ & 0 & 0 & 0 & 0 & $e_{12} + e_{13}$ \\\hhline{|~|-|-|-|-|-|-|-|-|-|}
& {\rm IV} & 1 & 0 & 0 & 0 & $\rz$ & 0 & 0 & $e_{23}$ \\\hhline{|~|-|-|-|-|-|-|-|-|-|}
& {\rm V} & 2 & $\rz$ & 0 & 0 & $\rz$ & 0 & 0 & $e_{12} + e_{23}$ \\\hhline{|~|-|-|-|-|-|-|-|-|-|}
$\mathfrak{s}^{\alpha,\beta}_3$& {\rm VI} & 2  & 0 & $\rz$ & 0 & $\rz$ & 0 & 0 & $e_{13} + e_{23}$ \\\hhline{|~|-|-|-|-|-|-|-|-|-|}
${\rm differ}$&  {${\rm VII}_{\pm}$} &  {3}  &  {$\rz$} &  {$\rz$} & {0} & $\pm x_1x_2x_4>0$  &  {0} &  {0} &  {$\pm(e_{12} + e_{13} + e_{23})$} \\
\hhline{|~|-|-|-|-|-|-|-|-|-|}
$\alpha,\beta,1$& {\rm VIII} & 3  & $\rr$ & $\rr$ & $\rz$ & 0 & 0 & 0 & $e_{14}$ \\\hhline{|~|-|-|-|-|-|-|-|-|-|}
& ${\rm IX}_{\alpha + \beta \in \{0, -1*\}}$ & 4  & $\rr$ & $\rr$ & $\rz$ & $\rz$ & 0 & 0 & $e_{14} + e_{23}$ \\\hhline{|~|-|-|-|-|-|-|-|-|-|}
& {\rm X} & 3 & $\rr$ & 0 & 0 & $\rr$ & $\rz$ & 0 & $e_{24}$ \\\hhline{|~|-|-|-|-|-|-|-|-|-|}
& ${\rm XI}_{\alpha+\beta=-1}$ & 4  & $\rr$ & $\rz$ & 0 & $\rr$ & $\rz$ & 0 & $e_{13} + e_{24}$ \\
\hhline{|~|-|-|-|-|-|-|-|-|-|}
& {\rm XIV} & 3  & 0 & $\rr$ & 0 & $\rr$ & 0 & $\rz$ & $e_{34}$ \\
\hhline{|~|-|-|-|-|-|-|-|-|-|}
& ${\rm XV}^{\alpha=-1}_{\alpha+\beta=-1,*}$ & 4  & $\rz$ & $\rr$ & 0 & $\rr$ & 0 & $\rz$ & $e_{12} + e_{34}$ \\
\hline
\hhline{|~|-|-|-|-|-|-|-|-|-|}
& $\mathcal{I}$& 2  & $\rr$ & $x_1^2+x_2^2\neq 0$ & $0$ & $0$ & $0$ & 0 & $e_{12}$ \\
\hhline{|~|-|-|-|-|-|-|-|-|-|}
& $\mathcal{II}$&1  & $0$ & $0$ & $0$ & $\rz$ & $0$ & 0 & $e_{23}$ \\
\hhline{|~|-|-|-|-|-|-|-|-|-|}
$\mathfrak{s}^{\alpha,\alpha}_3$& $\mathcal{III}$& 3  & $\rr$ & $x_1^2+x_2^2\neq 0$ & $0$ & $\rz$ & $0$ & 0 & $e_{12}+e_{23}$ \\
\hhline{|~|-|-|-|-|-|-|-|-|-|}
& $\mathcal{IV}$& 3  & $\rr$ & $\rr$ & $\rz$ & $0$ & $0$ & 0 & $e_{14}$ \\
\hhline{|~|-|-|-|-|-|-|-|-|-|}
$_{\tiny \alpha\neq 1}$& $\mathcal{V}^*_{\alpha=-1/2}$& 4  & $\rr$ & $\rr$ & $\rz$ & $\rz$ & $0$ & 0 & $e_{14}+e_{23}$ \\
\hhline{|~|-|-|-|-|-|-|-|-|-|}
& $\mathcal{VI}^*_{\alpha=-1/2}$& 5  & \multicolumn{2}{c|}{$x_2x_5-x_1x_6\neq 0$}& $0$ & $\rr$ & $\rr$& $x_5^2+x_6^2\neq 0$ & $e_{12}+e_{34}$ \\
\hhline{|~|-|-|-|-|-|-|-|-|-|}
& $\mathcal{VII}$& 4  & \multicolumn{2}{c|}{$x_2x_5-x_1x_6=0$}& $0$ & $\rr$ & $\rr$ & $x_5^2+x_6^2\neq 0$ & $e_{34}$ \\
\hline 
\hhline{|~|-|-|-|-|-|-|-|-|-|}
& $\mathscr{I}$ & 1  & $\rz$ & $0$ & $0$ & $0$ & $0$ & $0$ & $e_{12}$ \\
\hhline{|~|-|-|-|-|-|-|-|-|-|}
& $\mathscr{II}$ & 2  & $0$ & $\rr$ & $0$ & $x_2^2+x_4^2\neq 0$ & $0$ & $0$ & $e_{13}$ \\
\hhline{|~|-|-|-|-|-|-|-|-|-|}
$\mathfrak{s}^{1,\beta}_3$&  $\mathscr{III}$ & 3  & $\rz$ & $\rr$ & $0$ & $x_2^2+x_4^2\neq 0$ & $0$ & $0$ & $e_{12}+e_{13}$ \\
\hhline{|~|-|-|-|-|-|-|-|-|-|}
& $\mathscr{IV}$ & 4  & $\rr$ & $\rr$ & $x_3^2+x_5^2\neq 0$ & $\rr$ & $x_2x_5-x_3x_4=0$ & $0$ & $e_{13}$ \\
\hhline{|~|-|-|-|-|-|-|-|-|-|}
$_{\beta\neq 1}$& $\mathscr{V}_{\beta=-1}$ & 5  & $\rr$ & $\rr$ & $x_3^2+x_5^2\neq 0$ & $\rr$ & $x_2x_5-x_3x_4\neq 0$ & $0$ & $e_{14}+e_{23}$ \\
\hhline{|~|-|-|-|-|-|-|-|-|-|}
& $\mathscr{VI}$ & 4  & $0$ & $\rr$ & $0$ & $\rr$ & $0$ & $\rz$ & $e_{34}$ \\
\hline
\hhline{|~|-|-|-|-|-|-|-|-|-|}
& ${\rm i}$ & 3  & $\rr$ & $\rr$ & $0$ & $x_1^2+x_2^2+x_4^2\neq 0$ & $0$& $0$& $e_{12}$ \\
 \hhline{|~|-|-|-|-|-|-|-|-|-|}
   \multirow{5}*{}{$\mathfrak{s}^{1,1}_3$}& ${\rm ii}$ & 5  & \multicolumn{2}{c|}{$x_3x_4+\!x_1x_6=\!x_2x_5$}&   $\rr$ & $x_3x_4+x_1x_6=x_2x_5$ & $\rr$& $x_3^2+x_5^2+x_6^2\neq 0$ &$e_{13}+e_{34}$ \\
\hline
\multirow{8}{*}{$\mathfrak{s}^\alpha_4$} & {\rm I}$_\pm$ & 1  & $\rpm$ & 0 & 0 & 0 & 0 & 0 & $\pm e_{12}$ \\\hhline{|~|-|-|-|-|-|-|-|-|-|}
& {\rm II} & 1  & 0 & $\rz$ & 0 & 0 & 0 & 0 & $e_{13}$ \\\hhline{|~|-|-|-|-|-|-|-|-|-|}
\multirow{8}{*}{{\footnotesize $\alpha\notin \{0,1\}$}}& {\rm III}$_\pm$ & 2  &$\rpm$ & $\rz$ & 0 & 0 & 0 & 0 & $\pm e_{12} + e_{13}$ \\\hhline{|~|-|-|-|-|-|-|-|-|-|}
& {\rm IV} & 3  & $\rr$ & $\rr$ & $\rz$ & 0 & 0 & 0 & $e_{14}$ \\\hhline{|~|-|-|-|-|-|-|-|-|-|}
& {\rm V} & 2  & 0 & $\rr$ & 0 & $\rz$ & 0 & 0 & $e_{23}$ \\\hhline{|~|-|-|-|-|-|-|-|-|-|}
& {\rm VI}$_\pm$ & 3  & $\rpm$ & $\rr$ & 0 & $\rz$ & 0 & 0 & $\pm e_{12} + e_{23}$ \\\hhline{|~|-|-|-|-|-|-|-|-|-|}
& ${\rm VII}_{\alpha \in \{-2^*,-1\}}$ & 4  & $\rr$ & $\rr$ & $\rz$ & $\rz$ & 0 & 0 & $e_{14} + e_{23}$ \\\hhline{|~|-|-|-|-|-|-|-|-|-|}
& {\rm VIII} & 3  & 0 & $\rr$ & 0 & $\rr$ & 0 & $\rz$ & $e_{34}$ \\\hhline{|~|-|-|-|-|-|-|-|-|-|}
& ${\rm IX}^{\alpha = -2,*}_\pm$ & 4  & $\rpm$ & $\rr$ & 0 & $\rr$ & 0 & $\rz$ & $\pm e_{12} + e_{34}$ \\\hhline{|~|-|-|-|-|-|-|-|-|-|}
& ${\rm X}_{\alpha = 1}$ & 4  & $\rr$ & $\rr$ & $\rz$ & $-\frac{x_1x_6}{x_3}$ & 0 & $\rz$ & $e_{14} + e_{34}$ \\
\hline
\multirow{7}{*}{$\mathfrak{s}^1_4$} & $\mathcal{I}_\pm$ & 2  & $\rpm$ & $
\rr$ & 0 & 0 & 0 & 0 & $\pm e_{12}$ \\\hhline{|~|-|-|-|-|-|-|-|-|-|}
& $\mathcal{II}$& 1  & 0 & $\rz$ & 0 & 0 & 0 & 0 & $e_{13}$ \\\hhline{|~|-|-|-|-|-|-|-|-|-|}
& $\mathcal{III}$ & 3  & $\rr$ & $\rr$ & $\rz$ & 0 & 0 & 0 & $e_{14}$ \\\hhline{|~|-|-|-|-|-|-|-|-|-|}
& $\mathcal{IV}$ & 3  & $\rr$ & $\rr$ & 0 & $\rz$ & 0 & 0 & $  e_{23}$ \\
\hhline{|~|-|-|-|-|-|-|-|-|-|}
& $\mathcal{V}$ & 4  & $\rr$ & $\rr$ & \multicolumn{2}{c|}{$x_4x_3-x_1x_6=0$} & 0 & $\rz$ & $e_{14} + e_{34}$ \\
\hline
\multirow{5}{*}{$\mathfrak{s}_5$} & {\rm I} & 1  & $\rr$ & $x_1^2 + x_2^2 \neq 0$ & 0 & 0 & 0 & 0 & $e_{12}$ \\\hhline{|~|-|-|-|-|-|-|-|-|-|}
& ${\rm II}_{\pm}$ & 1  & 0 & 0 & 0 & $\rpm$ & 0 & 0 & $\pm e_{23}$ \\\hhline{|~|-|-|-|-|-|-|-|-|-|}
& ${\rm III}_{\pm}$ & 2 & $\rr$ & $x_1^2 + x_2^2 \neq 0$ & 0 & $\rpm$ & 0 & 0 & $e_{12} \pm e_{23}$ \\\hhline{|~|-|-|-|-|-|-|-|-|-|}
& {\rm IV} & 3  & $\rr$ & $\rr$ & $\rz$ & 0 & 0 & 0 & $e_{14}$ \\\hhline{|~|-|-|-|-|-|-|-|-|-|}
& $({\rm V}_\pm)^{\beta = 0}_{{\alpha=-2\beta}*}$ & 4  & $\rr$ & $\rr$ & $\rz$ & $\rpm$ & 0 & 0 & $e_{14} \pm e_{23}$ \\
\hline
 \multirow{7}{*}{$\mathfrak{s}_6$}& {\rm I} & 1  & $\rz$ & 0 & 0 & 0 & 0 & 0 & $e_{12}$ \\\hhline{|~|-|-|-|-|-|-|-|-|-|}
& {\rm I}$^{\rm ext}$ & 1  & 0 & $\rz$ & 0 & 0 & 0 & 0 & \\\hhline{|~|-|-|-|-|-|-|-|-|-|}
& {\rm II} & 2  & $\rz$ & $\rz$ & 0 & 0 & 0 & 0 & $e_{12} + e_{13}$ \\\hhline{|~|-|-|-|-|-|-|-|-|-|}
& {\rm III} & 3  & $\rr$ & $\rr$ & $\rz$ & 0 & 0 & 0 & $e_{14}$ \\\hhline{|~|-|-|-|-|-|-|-|-|-|}
& {\rm IV}$^*$ & 3  & $\rr$ & $\rr$ & 0 & $\rz$ & 0 & 0 & $e_{23}$ \\  \hhline{|~|-|-|-|-|-|-|-|-|-|}
& {\rm V}$^*$ & 2  & $\rr$ & 0 & $x_4$ & $\rz$ & 0 & 0 & $e_{14} + e_{23}$ \\
\hhline{|~|-|-|-|-|-|-|-|-|-|}
& ${\rm V}^{{\rm ext}*}$ & 2  & 0 & $\rr$ & $-x_4$ & $\rz$ & 0 & 0 &  \\

\hline
	\end{tabular}
	}
	\end{table}
}

{\scriptsize
\begin{table}[h]
\centering
\resizebox{\columnwidth}{!}{
\begin{tabular}{|c|c|c|c|c|c|c|c|c|c|}
\hline
\theadstart
\thead $\mathfrak{g}$ & \textrm{Orbit} & {\rm Dim.}  & $x_1$ & $x_2$ & $x_3$ & $x_4$ & $x_5$ & $x_6$ & {\rm Repr. element} \\\hline
\hhline{|~|-|-|-|-|-|-|-|-|-|}
$\multirow{8}{*}{$\mathfrak{s}_6$}$& {\rm VI}$^*$ & 3  & $\rr$ & $\rz$ & $x_4$ & $\rz$ & 0 & 0 & $e_{13} + e_{14} + e_{23}$ \\\hhline{|~|-|-|-|-|-|-|-|-|-|}
& ${\rm VI}^{{\rm ext}*}$ & 3  & $\rz$ & $\rr$ & $-x_4$ & $\rz$ & 0 & 0 &  \\\hhline{|~|-|-|-|-|-|-|-|-|-|}	
& ${\rm VII}^*_{|k| \notin \{0,1\}}$ & 3 & $\rr$ & $\rr$ & $kx_4$ & $\rz$ & 0 & 0 & $k e_{14} + e_{23}$ \\
\hhline{|~|-|-|-|-|-|-|-|-|-|}
& {\rm VIII} & 4 & $\rr$ & $-x_4^2/x_5$ & $\rr$ & $-x_3$  & $\rz$ & 0 & $e_{24}$ \\
\hhline{|~|-|-|-|-|-|-|-|-|-|}
& {\rm VIII}$^{\rm ext}$ & 3 & $\frac{-x_4^2}{x_6}$ & $\rr$ & $\rr$ & $x_3$  & 0 & $\rz$ &   \\\hhline{|~|-|-|-|-|-|-|-|-|-|}
& {\rm IX}$^*$ & 4 & $\rr$ & $x_2+x_4^2/x_5\neq 0$ & $\rr$ & $-x_3$  & $\rz$ & 0 & $e_{24}+e_{13}$ \\
\hhline{|~|-|-|-|-|-|-|-|-|-|}
& ${\rm IX}^{{\rm ext}*}$ & 4 & $x _1 + \frac{x_4^2}{x_6} \neq 0$ & $\rr$ & $\rr$ & $x_3$  & 0 & $\rz$ &  \\\hhline{|-|-|-|-|-|-|-|-|-|-|}
\multirow{4}{*}{$\mathfrak{s}_7$} & {\rm I} & 2 & $\rr$ & $x_1^2 + x_2^2 \neq 0$ & 0 & 0 & 0 & 0 & $e_{12}$ \\\hhline{|~|-|-|-|-|-|-|-|-|-|}
& ${\rm II}_{\pm}$ & 3  & $\rr$ & $\rr$ & $\rpm$ & 0 & 0 & 0 & $\pm e_{14}$ \\\hhline{|~|-|-|-|-|-|-|-|-|-|}
& {\rm III}$^*$ & 3  & $\rr$ & $\rr$ & 0 & $\rz$ & 0 & 0 & $e_{23}$ \\\hhline{|~|-|-|-|-|-|-|-|-|-|}
& $({\rm IV}_{\pm})^*_{|k|\in \mathbb{R}_+}$ & 3  & $\rr$ & $\rr$ & $\rpm$ & $k x_3$ & 0 & 0& $\pm e_{14} \pm k e_{23}$ 
\\
\hline
\multirow{11}{*}{$\mathfrak{s}_8$} & {\rm I} & 1  & $\rz$ & 0 & 0 & 0 & 0 & 0 & $e_{12}$ \\\hhline{|~|-|-|-|-|-|-|-|-|-|}
& {\rm II} & 1  & 0 & $\rz$ & 0 & 0 & 0 & 0 & $e_{13}$ \\\hhline{|~|-|-|-|-|-|-|-|-|-|}
& {\rm III} & 2  & $\rz$ & $\rz$ & 0 & 0 & 0 & 0 & $e_{12} + e_{13}$ \\\hhline{|~|-|-|-|-|-|-|-|-|-|}
& {\rm IV} & 3  & $\rr$ & $\rr$ & $\rz$ & 0 & 0 & 0 & $e_{14}$ \\\hhline{|~|-|-|-|-|-|-|-|-|-|}
& {\rm V} & 3  & $\rr$ & $\rr$ & $\rz$ & $-(1+\alpha)x_3$ & 0 & 0 & $e_{14} - (1 + \alpha) e_{23}$ \\\hhline{|~|-|-|-|-|-|-|-|-|-|}
& {\rm VI} & 3  & $\rr$ &  $\frac{\alpha x_3^2}{x_5}$ & $\rr$ & $\alpha x_3$ & $\rz$ & 0 & $e_{24}$ \\\hhline{|~|-|-|-|-|-|-|-|-|-|}
&  {${\rm VII}^{\alpha=-1/2}_\pm$} &  {4}  &  {$\rr$} &  {$x_2x_5-\alpha x_3^2\in \rpm $} &  {$\rr$} &  {$\alpha x_3$} &  {$\rz$} &  {0} & $e_{13} \pm  e_{24}$ 
\\\hhline{|~|-|-|-|-|-|-|-|-|-|}
& {\rm VIII} & 3  & $-\frac{x_3^2}{x_6}$ & $\rr$ & $\rr$ & $x_3$ & 0 & $\rz$ & $e_{34}$ \\
\hline
\multirow{4}{*}{$\mathfrak{s}^1_8$} & $\mathcal{I}$ & 2  & $\rr$ & $x_1^2+x_2^2\neq 0$ & 0 & 0 & 0 & 0 & $e_{12}$ \\\hhline{|~|-|-|-|-|-|-|-|-|-|}
& $\mathcal{II}$ & 3  & $\rr$ & $\rr$ & $\rz$ & 0 & 0 & 0 & $e_{14}$ \\\hhline{|~|-|-|-|-|-|-|-|-|-|}
& $\mathcal{III}$  & 3  & $\rr$ & $\rr$ & $\rz$ & $-2x_3$ & 0 & 0 & $e_{14} -2 e_{23}$ \\\hhline{|~|-|-|-|-|-|-|-|-|-|}
&$\mathcal{IV}$ & 4  & \multicolumn{3}{c|}{$x_1x_6-x_2x_5+x_3^2=0$}  & $ x_3$ & $\rr$ & $x_5^2+x_6^2\neq 0$ & $e_{24}$ 
\\
\hline
\multirow{3}{*}{$\mathfrak{s}_9$} & {\rm I} & 2 & $\rr$ & $x_1^2 + x_2^2 \neq 0$ & 0 & 0 & 0 & 0 & $e_{12}$ \\\hhline{|~|-|-|-|-|-|-|-|-|-|}
& {\rm II}$_\pm$ & 2  & $\rr$ & $\rr$ & $\rpm$ & 0 & 0 & 0 & $\pm e_{14}$ \\\hhline{|~|-|-|-|-|-|-|-|-|-|}
& {\rm III}$_\pm$ & 3  & $\rr$ & $\rr$ & $\rpm$ & $-2\alpha x_3$ & 0 & 0 & $\pm 2\alpha e_{23} \mp e_{14}$ \\
\hline
\multirow{5}{*}{$\mathfrak{s}_{10}$} & {\rm I} & 1  & $\rz$ & 0 & 0 & 0 & 0 & 0 & $e_{12}$ \\\hhline{|~|-|-|-|-|-|-|-|-|-|}
& {\rm II} & 2  & $\rr$ & $\rz$ & 0 & 0 & 0 & 0 & $e_{13}$ \\\hhline{|~|-|-|-|-|-|-|-|-|-|}
& {\rm III}$_\pm$ & 3  & $\rr$ & $\rr$ & $\rpm$ & 0 & 0 & 0 & $\pm e_{14}$ \\\hhline{|~|-|-|-|-|-|-|-|-|-|}
& {\rm IV}$_\pm$ & 3  & $\rr$ & $\rr$ & $\rpm$ & $-2x_3$ & 0 & 0 & $\pm e_{14} \mp 2 e_{23}$ \\\hhline{|~|-|-|-|-|-|-|-|-|-|}
& {\rm V} & 3  & $\rr$ & $\frac{x_3^2}{x_5}$ & $\rr$ & $x_3$ & $\rz$ & 0 & $e_{24}$ \\
\hline
\multirow{9}{*}{$\mathfrak{s}_{11}$} & {\rm I} & 1  & $\rz$ & 0 & 0 & 0 & 0 & 0 & $e_{12}$ \\\hhline{|~|-|-|-|-|-|-|-|-|-|}
& {\rm II} & 1  & 0 & $\rz$ & 0 & 0 & 0 & 0 & $e_{13}$ \\\hhline{|~|-|-|-|-|-|-|-|-|-|}
& {\rm III} & 2  & $\rz$ & $\rz$ & 0 & 0 & 0 & 0 & $e_{12} + e_{13}$ \\\hhline{|~|-|-|-|-|-|-|-|-|-|}
& {\rm IV} & 2  & $\rr$ & 0 & $\rz$ & 0 & 0 & 0 & $e_{14}$ \\\hhline{|~|-|-|-|-|-|-|-|-|-|}
& {\rm V} & 3  & $\rr$ & $\rz$ & $\rz$ & 0 & 0 & 0 & $e_{13} + e_{14}$ \\\hhline{|~|-|-|-|-|-|-|-|-|-|}
& {\rm VI} & 2  & 0 & $\rr$ & $\rz$ & $-x_3$ & 0 & 0 & $e_{14} - e_{23}$ \\\hhline{|~|-|-|-|-|-|-|-|-|-|}
& {\rm VII} & 3  & $\rz$ & $\rr$ & $\rz$ & $-x_3$ & 0 & 0 & $e_{12} + e_{14} - e_{23}$ \\\hhline{|~|-|-|-|-|-|-|-|-|-|}
& {\rm VIII} & 3 & $\rr$ & 0 & $\rr$ & 0 & $\rz$ & 0 & $e_{24}$ \\\hhline{|~|-|-|-|-|-|-|-|-|-|}
& {\rm IX} & 3  & $-\frac{x_3^2}{x_6}$ & $\rr$ & $\rr$ & $x_3$ & 0 & $\rz$ & $e_{34}$ \\
\hline
\multirow{4}{*}{$\mathfrak{s}_{12}$} & {\rm I} & 1  & $\rz$ & 0 & 0 & 0 & 0 & 0 & $e_{12}$ \\\hhline{|~|-|-|-|-|-|-|-|-|-|}
& {\rm II} & 3 & $\rr$ & $\rr$ & 0 & $x_2^2 + x_4^2 \neq 0$ & 0 & 0 & $e_{13}$ \\\hhline{|~|-|-|-|-|-|-|-|-|-|}
& {\rm III} & 3  & $\rr$ & $x_5$ & $\rr$ & $-x_3$ & $x_3^2 + x_5^2 \neq 0$ & 0 & $e_{13} + e_{24}$ \\\hhline{|~|-|-|-|-|-|-|-|-|-|}
& {\rm IV}$_{|k|>0}$ & 2  & $-\frac{x_2^2 + x_3^2}{x_6}$ & $\rr$ & $\rr$ & $x_3$ & $-x_2$ & $k$ & $ke_{34}$ \\
\hline
\multirow{7}{*}{$\mathfrak{n}_{1}$} & {\rm I} & 1 & $\rz$ & 0 & 0 & 0 & 0 & 0 & $e_{12}$ \\\hhline{|~|-|-|-|-|-|-|-|-|-|}
& {\rm II}$_\pm$ & 2 & $\rr$ & $\rpm$ & 0 & 0 & 0 & 0 & $\pm e_{13}$ \\\hhline{|~|-|-|-|-|-|-|-|-|-|}
& {\rm III} & 3 & $\rr$ & $\rr$ & $\rz$ & 0 & 0 & 0 & $e_{14}$ \\\hhline{|~|-|-|-|-|-|-|-|-|-|}
& {\rm IV} & 3 & $\rr$ & $\rr$ & 0 & $\rz$ & 0 & 0 & $e_{23}$ \\\hhline{|~|-|-|-|-|-|-|-|-|-|}
& {\rm V} & 4 & $\rr$ & $\rr$ & $\rz$ & $\rz$ & 0 & 0 & $e_{14} + e_{23}$ \\\hhline{|~|-|-|-|-|-|-|-|-|-|}
& {\rm VI}$^*$ & 4 & $\rr$ & $ \frac{x_3 x_4}{x_5}$ & $\rr$ & $\rr$ & $\rz$ & 0 & $e_{24}$ \\\hhline{|~|-|-|-|-|-|-|-|-|-|}
& {\rm VII}$_{\pm}^*$ & 5 & $\rr$& $x_2- \frac{x_3 x_4}{x_5}\in \rpm$&$\rr$ &$\rr$ & $\rz$ & 0 & $e_{24}\pm e_{13}$\\
\hline
\end{tabular}
}
\caption{Orbits of the action of ${\rm Aut} (\mathfrak{g})$ on $\mathcal{Y}_{\mathfrak{g}}$, their dimensions, elements, and representatives, for real four-dimensional indecomposable Lie algebras $\mathfrak{g}$. Each Lie algebra is divided into several subsets enumerated by roman numbers, which classify the orbits of ${\rm Aut}(\mathfrak{g})$ in $\mathcal{Y}_{\mathfrak{g}}$. The orbits of ${\rm Aut}_c(\mathfrak{g})$ are given by the (topologically) connected components of each orbit of ${\rm Aut}(\mathfrak{g})$. The trivial orbits given by $0\in\Lambda^2\mathfrak{g}$ are not considered. Symbol $A_\pm$ stands for two orbits, one with $+$ and other with $-$. In these cases, $\mathbb{R}_{\pm}$ stands for $\mathbb{R}_+$ for the first orbit and $\mathbb{R}_-$ for the second. If an orbit of Aut$(\mathfrak{g})$ in $\Lambda^2\mathfrak{g}$ belongs to $\mathcal{Y}_{\mathfrak{g}}$ only for a certain set of parameters, each family of possible values of the parameters is indicated in a subindex, first, or as a superindex,  if a second family of parameters is available. A star $(*)$ is used to denote $r$-matrices that are not solutions to the CYBE.}\label{Tab:g_orb_1}
\label{Tab:g_orb_2}
\end{table}
}


{\scriptsize
\begin{table}[h]
\centering
\begin{tabular}{|c|c||c|c|c|c|c|c|}
\hline
& & $e_{12}$ & $e_{13}$ & $e_{14}$ & $e_{23}$ & $e_{24}$ & $e_{34}$ \\
\hhline{|=|=#=|=|=|=|=|=|}
\multirow{4}{*}{$\mathfrak{s}_1$} & $e_1$ & 0 & 0 & 0 & 0 & 0 & 0 \\
& $e_2$ & 0 & 0 & 0 & 0 & $e_{12}$ & $e_{13}$ \\
& $e_3$ & 0 & 0 & $-e_{13}$ & 0 & $-e_{23}$ & 0 \\
& $e_4$ & 0 & $e_{13}$ & 0 & $e_{13} + e_{23}$ & $e_{14}$ & $e_{34}$ \\
\hline
\multirow{4}{*}{$\mathfrak{s}_2$} & $e_1$ & 0 & 0 & 0 & 0 & $e_{12}$ & $e_{13}$ \\
& $e_2$ & 0 & 0 & $-e_{12}$ & 0 & $e_{12}$ & $e_{13} + e_{23}$ \\
& $e_3$ & 0 & 0 & $-e_{12} - e_{13}$ & 0 & $-e_{23}$ & $e_{23}$ \\
& $e_4$ & $2e_{12}$ & $e_{12} + 2e_{13}$ & $e_{14}$ & $e_{13} + 2e_{23}$ & $e_{14} + e_{24}$ & $e_{24} + e_{34}$ \\
\hline
\multirow{4}{*}{$\mathfrak{s}_3$} & $e_1$ & 0 & 0 & 0 & 0 & $e_{12}$ & $e_{13}$ \\
& $e_2$ & 0 & 0 & $-\alpha e_{12}$ & 0 & 0 & $\alpha e_{23}$ \\
& $e_3$ & 0 & 0 & $-\beta e_{13}$ & 0 & $-\beta e_{23}$ & 0 \\
& $e_4$ & $(1+ \alpha)e_{12}$ & $(1 + \beta)e_{13}$ & $e_{14}$ & $(\alpha + \beta)e_{23}$ & $\alpha e_{24}$ & $\beta e_{34}$ \\
\hline
\multirow{4}{*}{$\mathfrak{s}_4$} & $e_1$ & 0 & 0 & 0 & 0 & $e_{12}$ & $e_{13}$ \\
& $e_2$ & 0 & 0 & $-e_{12}$ & 0 & $e_{12}$ & $e_{13} + e_{23}$ \\
& $e_3$ & 0 & 0 & $-\alpha e_{13}$ & 0 & $-\alpha e_{23}$ & 0 \\
& $e_4$ & $2e_{12}$ & $(1 + \alpha)e_{13}$ & $e_{14}$ & $e_{13} + (1 + \alpha)e_{23}$ & $e_{14} + e_{24}$ & $\alpha e_{34}$ \\
\hline
\multirow{4}{*}{$\mathfrak{s}_5$} & $e_1$ & 0 & 0 & 0 & 0 & $\alpha e_{12}$ & $\alpha e_{13}$ \\
& $e_2$ & 0 & 0 & $e_{13} - \beta e_{12}$ & 0 & $e_{23}$ & $\beta e_{23}$ \\
& $e_3$ & 0 & 0 & $-e_{12} - \beta e_{13}$ & 0 & $-\beta e_{23}$ & $e_{23}$ \\
& $e_4$ & $(\alpha + \beta) e_{12} - e_{13}$ & $e_{12} + (\alpha + \beta) e_{13}$ & $\alpha e_{14}$ & $2\beta e_{23}$ & $\beta e_{24} - e_{34}$ & $e_{24} + \beta e_{34}$ \\
\hline
\multirow{4}{*}{$\mathfrak{s}_6$} & $e_1$ & 0 & 0 & 0 & 0 & 0 & 0 \\
& $e_2$ & 0 & 0 & $-e_{12}$ & $-e_{12}$ & 0 & $e_{14} + e_{23}$ \\
& $e_3$ & 0 & 0 & $e_{13}$ & $-e_{13}$ & $-e_{14} + e_{23}$ & 0 \\
& $e_4$ & $e_{12}$ & $-e_{13}$ & 0 & 0 & $e_{24}$ & $-e_{34}$ \\
\hline
\multirow{4}{*}{$\mathfrak{s}_7$} & $e_1$ & 0 & 0 & 0 & 0 & 0 & 0 \\
& $e_2$ & 0 & 0 & $e_{13}$ & $-e_{12}$ & $e_{23}$ & $e_{14}$ \\
& $e_3$ & 0 & 0 & $-e_{12}$ & $-e_{13}$ & $-e_{14}$ & $e_{23}$ \\
& $e_4$ & $-e_{13}$ & $e_{12}$ & 0 & 0 & $-e_{34}$ & $e_{24}$ \\
\hline
\multirow{4}{*}{$\mathfrak{s}_8$} & $e_1$ & 0 & 0 & 0 & 0 & $(1 + \alpha)e_{12}$ & $(1 + \alpha)e_{13}$ \\
& $e_2$ & 0 & 0 & $-e_{12}$ & $-e_{12}$ & 0 & $e_{14} + e_{23}$ \\
& $e_3$ & 0 & 0 & $-\alpha e_{13}$ & $-e_{13}$ & $-e_{14} - \alpha e_{23}$ & 0 \\
& $e_4$ & $(2 + \alpha)e_{12}$ & $(1 + 2\alpha)e_{13}$ & $(1 + \alpha)e_{14}$ & $(1 + \alpha)e_{23}$ & $e_{24}$ & $\alpha e_{34}$ \\
\hline
\multirow{4}{*}{$\mathfrak{s}_9$} & $e_1$ & 0 & 0 & 0 & 0 & $2\alpha e_{12}$ & $2\alpha e_{13}$ \\
& $e_2$ & 0 & 0 & $e_{13} - \alpha e_{12}$ & $-e_{12}$ & $e_{23}$ & $e_{14} + \alpha e_{23}$ \\
& $e_3$ & 0 & 0 & $-e_{12} - \alpha e_{13}$ & $-e_{13}$ & $-e_{14} - \alpha e_{23}$ & $e_{23}$ \\
& $e_4$ & $3\alpha e_{12} - e_{13}$ & $3\alpha e_{13} + e_{12}$ & $2\alpha e_{14}$ & $2\alpha e_{23}$ & $\alpha e_{24} - e_{34}$ & $e_{24} + \alpha e_{34}$ \\
\hline
\multirow{4}{*}{$\mathfrak{s}_{10}$} & $e_1$ & 0 & 0 & 0 & 0 & $2e_{12}$ & $2e_{13}$ \\
& $e_2$ & 0 & 0 & $-e_{12}$ & $-e_{12}$ & 0 & $e_{14} + e_{23}$ \\
& $e_3$ & 0 & 0 & $-e_{12} - e_{13}$ & $-e_{13}$ & $-e_{14} - e_{23}$ & $e_{23}$ \\
& $e_4$ & $3e_{12}$ & $e_{12} + 3e_{13}$ & $2e_{14}$ & $2e_{23}$ & $e_{24}$ & $e_{24} + e_{34}$ \\
\hline
\multirow{4}{*}{$\mathfrak{s}_{11}$} & $e_1$ & 0 & 0 & 0 & 0 & $e_{12}$ & $e_{13}$ \\
& $e_2$ & 0 & 0 & $-e_{12}$ & $-e_{12}$ & 0 & $e_{23} + e_{14}$ \\
& $e_3$ & 0 & 0 & 0 & $-e_{13}$ & $-e_{14}$ & 0 \\
& $e_4$ & $2e_{12}$ & $e_{13}$ & $e_{14}$ & $e_{23}$ & $e_{24}$ & 0 \\
\hline
\multirow{4}{*}{$\mathfrak{s}_{12}$} & $e_1$ & 0 & 0 & $e_{12}$ & $e_{12}$ & 0 & $-e_{14} - e_{23}$ \\
& $e_2$ & 0 & $-e_{12}$ & 0 & 0 & $e_{12}$ & $-e_{24} + e_{13}$ \\
& $e_3$ & $2e_{12}$ & $e_{13}$ & $e_{14}$ & $e_{23}$ & $e_{24}$ & 0 \\
& $e_4$ & 0 & $-e_{23}$ & $-e_{24}$ & $e_{13}$ & $e_{14}$ & 0 \\
\hline
\multirow{4}{*}{$\mathfrak{n}_{1}$} & $e_1$ & 0 & 0 & 0 & 0 & 0 & 0 \\
& $e_2$ & 0 & 0 & 0 & 0 & $-e_{12}$ & $-e_{13}$ \\
& $e_3$ & 0 & 0 & $e_{12}$ & 0 & 0 & $-e_{23}$ \\
& $e_4$ & 0 & $-e_{12}$ & 0 & $-e_{13}$ & $-e_{14}$ & $-e_{24}$ \\
\hline
\end{tabular}
\caption{Schouten brackets between basis elements of $\mathfrak{g}$ and $\Lambda^2\mathfrak{g}$ for real  four-dimensional indecomposable Lie algebras.} \label{Tab:Schoten_1_2}
\end{table}
}

\begin{landscape}
{\scriptsize
\begin{table}[h]
\centering
\begingroup
\renewcommand{\arraystretch}{1.0749}
\begin{tabular}[t]{|c|c||c|c|c|c|c|c|}
\hline 
& & $e_{12}$ & $e_{13}$ & $e_{14}$ & $e_{23}$ & $e_{24}$ & $e_{34}$ \\[1pt]
\hhline{|=|=#=|=|=|=|=|=|}
\multirow{6}{*}{$\mathfrak{s}_1$} & $e_{12}$ & 0 & 0 & 0 & 0 & 0 & 0 \\\hhline{~|-|-|-|-|-|-|-|}
& $e_{13}$ & & 0 & 0 & 0 & $-e_{123}$ & 0 \\\hhline{~|-|-|-|-|-|-|-|}
& $e_{14}$ & & & 0 & $e_{123}$ & 0 & $e_{134}$ \\\hhline{~|-|-|-|-|-|-|-|}
& $e_{23}$ & & & & 0 & $-e_{123}$ & 0 \\\hhline{~|-|-|-|-|-|-|-|}
& $e_{24}$ & & & & & $-2e_{124}$ & $-e_{134} + e_{234}$ \\\hhline{~|-|-|-|-|-|-|-|}
& $e_{34}$ & & & & & & 0 \\
\hline
\multirow{6}{*}{$\mathfrak{s}_2$} & $e_{12}$ & 0 & 0 & 0 & 0 & 0 & $2e_{123}$ \\\hhline{~|-|-|-|-|-|-|-|}
& $e_{13}$ &  & 0 & 0 & 0 & $-2e_{123}$ & $e_{123}$ \\\hhline{~|-|-|-|-|-|-|-|}
& $e_{14}$ & & & 0 & $2e_{123}$ & 0 & $e_{124}$ \\\hhline{~|-|-|-|-|-|-|-|}
& $e_{23}$ &  &  &  & 0 & $-e_{123}$ & 0 \\\hhline{~|-|-|-|-|-|-|-|}
& $e_{24}$ & & & & & $-2e_{124}$ & $-e_{134}$ \\\hhline{~|-|-|-|-|-|-|-|}
& $e_{34}$ & & & & & & $-2e_{234}$ \\
\hline
\multirow{6}{*}{$\mathfrak{s}_3$} & $e_{12}$ & 0 & 0 & 0 & 0 & 0 & $(1 + \alpha)e_{123}$ \\\hhline{~|-|-|-|-|-|-|-|}
& $e_{13}$ & & 0 & 0 & 0 & $-(1 + \beta)e_{123}$ & 0 \\\hhline{~|-|-|-|-|-|-|-|}
& $e_{14}$ & & & 0 & $(\alpha + \beta)e_{123}$ & $(\alpha - 1)e_{124}$ & $(\beta - 1)e_{134}$ \\\hhline{~|-|-|-|-|-|-|-|}
& $e_{23}$ & & & & 0 & 0 & 0 \\\hhline{~|-|-|-|-|-|-|-|}
& $e_{24}$ & & & & & 0 & $(\beta - \alpha)e_{234}$ \\\hhline{~|-|-|-|-|-|-|-|}
& $e_{34}$ & & & & & & 0 \\
\hline
\multirow{6}{*}{$\mathfrak{s}_4$} & $e_{12}$ & 0 & 0 & 0 & 0 & 0 & $2e_{123}$ \\\hhline{~|-|-|-|-|-|-|-|}
& $e_{13}$ & & 0 & 0 & 0 & $-(1 + \alpha)e_{123}$ & 0 \\\hhline{~|-|-|-|-|-|-|-|}
& $e_{14}$ & & & 0 & $(1 + \alpha) e_{123}$ & 0 & $(\alpha - 1)e_{134}$ \\\hhline{~|-|-|-|-|-|-|-|}
& $e_{23}$ & & & & 0 & $-e_{123}$ & 0 \\\hhline{~|-|-|-|-|-|-|-|}
& $e_{24}$ & & & & & $-2e_{124}$ & \parbox{1.8cm}{\vspace{2pt} $(\alpha - 1)e_{234} \\-e_{134}$ \vspace{2pt}} \\\hhline{~|-|-|-|-|-|-|-|}
& $e_{34}$ & & & & & & 0 \\
\hline
\multirow{6}{*}{$\mathfrak{s}_5$} & $e_{12}$ & 0 & 0 & 0 & 0 & $e_{123}$ & $(\alpha + \beta)e_{123}$ \\\hhline{~|-|-|-|-|-|-|-|}
& $e_{13}$ & & 0 & 0 & 0 & $-(\alpha + \beta)e_{123}$ & $e_{123}$ \\\hhline{~|-|-|-|-|-|-|-|}
& $e_{14}$ & & & 0 & $2\beta e_{123}$ & \parbox{1.7cm}{\vspace{3pt}$(\beta - \alpha)e_{124} \\- e_{134}$ \vspace{2pt}} & \parbox{1.8cm}{$(\beta - \alpha)e_{134} \\+ e_{124}$ \vspace{2pt}} \\\hhline{~|-|-|-|-|-|-|-|}
& $e_{23}$ & & & & 0 & 0 & 0 \\\hhline{~|-|-|-|-|-|-|-|}
& $e_{24}$ & & & & & $-2e_{234}$ & 0 \\\hhline{~|-|-|-|-|-|-|-|}
& $e_{34}$ &  & & & & & $-2e_{234}$ \\
\hline
\multirow{6}{*}{$\mathfrak{s}_6$} & $e_{12}$ & 0 & 0 & 0 & 0 & 0 & $e_{123}$ \\\hhline{~|-|-|-|-|-|-|-|}
& $e_{13}$ & & 0 & 0 & 0 & $e_{123}$ & 0 \\\hhline{~|-|-|-|-|-|-|-|}
& $e_{14}$ & & & 0 & 0 & $e_{124}$ & $-e_{134}$ \\\hhline{~|-|-|-|-|-|-|-|}
& $e_{23}$ & & & & $2e_{123}$ & $e_{124}$ & $e_{134}$ \\\hhline{~|-|-|-|-|-|-|-|}
& $e_{24}$ & & & & & 0 & $-2e_{234}$ \\\hhline{~|-|-|-|-|-|-|-|}
& $e_{34}$ & & & & & & 0 \\
\hline
\end{tabular} 
\endgroup
\hfill
\begin{tabular}[t]{|c|c||c|c|c|c|c|c|}
\hhline{|-|-|-|-|-|-|-|-|}
 & $e_{12}$ & 0 & 0 & 0 & 0 & $e_{123}$ & 0 \\\hhline{~|-|-|-|-|-|-|-|}
& $e_{13}$ & & 0 & 0 & 0 & 0 & $e_{123}$ \\\hhline{~|-|-|-|-|-|-|-|}
$\mathfrak{s}_7$ & $e_{14}$ & & & 0 & 0 & $-e_{134}$ & $e_{124}$ \\\hhline{~|-|-|-|-|-|-|-|}
& $e_{23}$ & & & & $2e_{123}$ & $e_{124}$ & $e_{134}$ \\\hhline{~|-|-|-|-|-|-|-|}
& $e_{24}$ & & & & & $-2e_{234}$ & 0 \\\hhline{~|-|-|-|-|-|-|-|}
& $e_{34}$ & & & & & & $-2e_{234}$ \\
\hline
\multirow{6}{*}{$\mathfrak{s}_8$} & $e_{12}$ & 0 & 0 & 0 & 0 & 0 & $(2 + \alpha)e_{123}$ \\\hhline{~|-|-|-|-|-|-|-|}
& $e_{13}$ & & 0 & 0 & 0 & $-(1 + 2\alpha)e_{123}$ & 0 \\\hhline{~|-|-|-|-|-|-|-|}
& $e_{14}$ & & & 0 & $(1 + \alpha)e_{123}$ & $-\alpha e_{124}$ & $-e_{134}$ \\\hhline{~|-|-|-|-|-|-|-|}
& $e_{23}$ & & & & $2e_{123}$ & $e_{124}$ & $e_{134}$ \\\hhline{~|-|-|-|-|-|-|-|}
& $e_{24}$ & & & & & 0 & $(\alpha - 1)e_{234}$ \\\hhline{~|-|-|-|-|-|-|-|}
& $e_{34}$ & & & & & & 0 \\
\hline
\multirow{6}{*}{$\mathfrak{s}_9$} & $e_{12}$ & 0 & 0 & 0 & 0 & $e_{123}$ & $3\alpha e_{123}$ \\\hhline{~|-|-|-|-|-|-|-|}
& $e_{13}$ & & 0 & 0 & 0 & $-3\alpha e_{123}$ & $e_{123}$ \\\hhline{~|-|-|-|-|-|-|-|}
& $e_{14}$ & & & 0 & $2\alpha e_{123}$ & $-\alpha e_{124} - e_{134}$ & $e_{124} - \alpha e_{134}$ \\\hhline{~|-|-|-|-|-|-|-|}
& $e_{23}$ & & & & $2e_{123}$ & $e_{124}$ & $e_{134}$ \\\hhline{~|-|-|-|-|-|-|-|}
& $e_{24}$ & & & & & $-2e_{234}$ & 0 \\\hhline{~|-|-|-|-|-|-|-|}
& $e_{34}$ & & & & & & $-2e_{234}$ \\
\hline
\multirow{6}{*}{$\mathfrak{s}_{10}$} & $e_{12}$ & 0 & 0 & 0 & 0 & 0 & $3 e_{123}$ \\\hhline{~|-|-|-|-|-|-|-|}
& $e_{13}$ & & 0 & 0 & 0 & $-3e_{123}$ & $e_{123}$ \\\hhline{~|-|-|-|-|-|-|-|}
& $e_{14}$ & & & 0 & $2e_{123}$ & $-e_{124}$ & $e_{124} - e_{134}$ \\\hhline{~|-|-|-|-|-|-|-|}
& $e_{23}$ & & & & $2e_{123}$ & $e_{124}$ & $e_{134}$ \\\hhline{~|-|-|-|-|-|-|-|}
& $e_{24}$ & & & & & 0 & 0  \\\hhline{~|-|-|-|-|-|-|-|}
& $e_{34}$ & & & & & & $-2e_{234}$ \\
\hline
\multirow{6}{*}{$\mathfrak{s}_{11}$} & $e_{12}$ & 0 & 0 & 0 & 0 & 0 & $2e_{123}$ \\\hhline{~|-|-|-|-|-|-|-|}
& $e_{13}$ & & 0 & 0 & 0 & $-e_{123}$ & 0 \\\hhline{~|-|-|-|-|-|-|-|}
& $e_{14}$ & & & 0 & $e_{123}$ & 0 & $-e_{134}$ \\\hhline{~|-|-|-|-|-|-|-|}
& $e_{23}$ & & & & $2e_{123}$ & $e_{124}$ & $e_{134}$ \\\hhline{~|-|-|-|-|-|-|-|}
& $e_{24}$ & & & & & 0 & $-e_{234}$ \\\hhline{~|-|-|-|-|-|-|-|}
& $e_{34}$ & & & & & & 0 \\
\hline
\multirow{6}{*}{$\mathfrak{s}_{12}$} & $e_{12}$ & 0 & 0 & 0 & 0 & 0 & $-2e_{124}$ \\\hhline{~|-|-|-|-|-|-|-|}
& $e_{13}$ & & 0 & $-e_{123}$ & 0 & $e_{124}$ & $-e_{134}$ \\\hhline{~|-|-|-|-|-|-|-|}
& $e_{14}$ & & & $-2e_{124}$ & $-e_{124}$ & 0 & $e_{234}$ \\\hhline{~|-|-|-|-|-|-|-|}
& $e_{23}$ & & & & 0 & $-e_{123}$ & $-e_{234}$ \\\hhline{~|-|-|-|-|-|-|-|}
& $e_{24}$ & & & & & $-2e_{124}$ & $-e_{134}$ \\\hhline{~|-|-|-|-|-|-|-|}
& $e_{34}$ & & & & & & 0 \\
\hline
\multirow{6}{*}{$\mathfrak{n}_{1}$} & $e_{12}$ & 0 & 0 & 0 & 0 & 0 & 0 \\\hhline{~|-|-|-|-|-|-|-|}
& $e_{13}$ & & 0 & 0 & 0 & 0 & $-e_{123}$ \\\hhline{~|-|-|-|-|-|-|-|}
& $e_{14}$ & & & 0 & 0 & 0 & $-e_{124}$ \\\hhline{~|-|-|-|-|-|-|-|}
& $e_{23}$ & & & & 0 & $e_{123}$ & 0 \\\hhline{~|-|-|-|-|-|-|-|}
& $e_{24}$ & & & & & $2e_{124}$ & $e_{134}$ \\\hhline{~|-|-|-|-|-|-|-|}
& $e_{34}$ & & & & & & $2e_{234}$ \\
\hline
\end{tabular}
\caption{Schouten brackets between basis elements of $\Lambda^2\mathfrak{g}$  for four-dimensional indecomposable Lie algebras.} \label{Tab:Schoten_2_2}
\end{table}
}
\end{landscape}

{\scriptsize
\begin{table}[h!]
\centering
\begin{tabular}{|c|c||c|c|c|c|}
\hline
&  & $e_{123}$ & $e_{124}$ & $e_{134}$ & $e_{234}$ \\\hline
\multirow{4}{*}{$\mathfrak{s}_1$} & $e_1$ & 0 & 0 & 0 & 0 \\
& $e_2$ & 0 & 0 & 0 & $-e_{123}$ \\
& $e_3$ & 0 & $-e_{123}$ & 0 & 0 \\
& $e_4$ & $e_{123}$ & $0$ & $e_{134}$ & $e_{134} + e_{234}$ \\\hline
\multirow{4}{*}{$\mathfrak{s}_2$} & $e_1$ & 0 & 0 & 0 & $-e_{123}$ \\
& $e_2$ & 0 & 0 & $e_{123}$ & $-e_{123}$ \\
& $e_3$ & 0 & $-e_{123}$ & $e_{123}$ & 0 \\
& $e_4$ & $3e_{123}$ & $2e_{124}$ & $e_{124} + 2e_{134}$ & $e_{134} + 2e_{234}$ \\
\hline
\multirow{4}{*}{$\mathfrak{s}_3$} & $e_1$ & 0 & 0 & 0 & $-e_{123}$ \\
& $e_2$ & 0 & 0 & $\alpha e_{123}$ & 0 \\
& $e_3$ & 0 & $-\beta e_{123}$ & 0 & 0 \\
& $e_4$ & $(1 + \alpha + \beta) e_{123}$ & $(1+ \alpha) e_{124}$ & $(1 + \beta) e_{134}$ & $(\alpha + \beta) e_{234}$ \\
\hline
\multirow{4}{*}{$\mathfrak{s}_4$} & $e_1$ & 0 & 0 & 0 & $e_{123}$ \\
& $e_2$ & 0 & 0 & $e_{123}$ & $-e_{123}$ \\
& $e_3$ & 0 & $-\alpha e_{123}$ & 0 & 0 \\
& $e_4$ & $(2 + \alpha) e_{123}$ & $2 e_{124}$ & $(1+ \alpha) e_{134}$ & $e_{134} + (1 + \alpha) e_{234}$ \\
\hline
\multirow{4}{*}{$\mathfrak{s}_5$} & $e_1$ & 0 & 0 & 0 & $-\alpha e_{123}$ \\
& $e_2$ & 0 & $e_{123}$ & $\beta e_{123}$ & 0 \\
& $e_3$ & 0 & $-\beta e_{123}$ & $e_{123}$ & 0 \\
& $e_4$ & $(\alpha + 2\beta) e_{123}$ & $(\alpha + \beta) e_{124} - e_{134}$ & $e_{124} + (\alpha + \beta) e_{134}$ & $2 e_{234}$ \\
\hline
\multirow{4}{*}{$\mathfrak{s}_6$} & $e_1$ & 0 & 0 & 0 & 0 \\
& $e_2$ & 0 & 0 & $e_{123}$ & $-e_{124}$ \\
& $e_3$ & 0 & $e_{123}$ & 0 & $-e_{134}$ \\
& $e_4$ & 0 & $e_{124}$ & $-e_{134}$ & 0 \\
\hline
\multirow{4}{*}{$\mathfrak{s}_7$} & $e_1$ & 0 & 0 & 0 & 0 \\
& $e_2$ & 0 & $e_{123}$ & 0 & $-e_{124}$ \\
& $e_3$ & 0 & 0 & $e_{123}$ & $-e_{134}$ \\
& $e_4$ & 0 & $-e_{134}$ & $e_{124}$ & 0 \\
\hline
\multirow{4}{*}{$\mathfrak{s}_8$} & $e_1$ & 0 & 0 & 0 & $-(1+\alpha) e_{123}$ \\
& $e_2$ & 0 & 0 & $e_{123}$ & $-e_{124}$ \\
& $e_3$ & 0 & $-\alpha e_{123}$ & 0 & $-e_{134}$ \\
& $e_4$ & $2(1 + \alpha) e_{123}$ & $(2 + \alpha) e_{124}$ & $(1 + 2\alpha) e_{134}$ & $(1 + \alpha) e_{234}$ \\
\hline
\multirow{4}{*}{$\mathfrak{s}_9$} & $e_1$ & 0 & 0 & 0 & $-2\alpha e_{123}$ \\
& $e_2$ & 0 & $e_{123}$ & $\alpha e_{123}$ & $-e_{124}$ \\
& $e_3$ & 0 & $-\alpha e_{123}$ & $e_{123}$ & $-e_{134}$ \\
& $e_4$ & $4 \alpha e_{123}$ & $3 \alpha e_{124} - e_{134}$ & $3\alpha e_{134} + e_{124}$ & $2\alpha e_{234}$ \\
\hline
\multirow{4}{*}{$\mathfrak{s}_{10}$} & $e_1$ & 0 & 0 & 0 & $-2e_{123}$ \\
& $e_2$ & 0 & 0 & $e_{123}$ & $-e_{124}$ \\
& $e_3$ & 0 & $-e_{123}$ & $e_{123}$ & $-e_{134}$ \\
& $e_4$ & $4 e_{123}$ & $3 e_{124}$ & $e_{124} + 3 e_{134}$ & $2 e_{234}$ \\
\hline
\multirow{4}{*}{$\mathfrak{s}_{11}$} & $e_1$ & 0 & 0 & 0 & $-e_{123}$ \\
& $e_2$ & 0 & 0 & $e_{123}$ & $-e_{124}$ \\
& $e_3$ & 0 & 0 & 0 & $-e_{134}$ \\
& $e_4$ & $2e_{123}$ & $2e_{124}$ & $e_{134}$ & $e_{234}$ \\
\hline
\multirow{4}{*}{$\mathfrak{s}_{12}$} & $e_1$ & 0 & 0 & $-e_{123}$ & $-e_{124}$ \\
& $e_2$ & 0 & 0 & $-e_{124}$ & $-e_{123}$ \\
& $e_3$ & $2e_{123}$ & $2e_{124}$ & $e_{134}$ & $e_{234}$ \\
& $e_4$ & 0 & 0 & $-e_{234}$ & $e_{134}$ \\
\hline
\multirow{4}{*}{$\mathfrak{n}_{1}$} & $e_1$ & 0 & 0 & 0 & 0 \\
& $e_2$ & 0 & 0 & 0 & $e_{123}$ \\
& $e_3$ & 0 & 0 & $-e_{123}$ & 0 \\
& $e_4$ & 0 & 0 & $-e_{124}$ & $-e_{134}$ \\
\hline
\end{tabular}
\caption{Schouten brackets between basis elements of $\mathfrak{g}$ and $\Lambda^3\mathfrak{g}$ for real four-dimensional indecomposable Lie algebras.} \label{Tab:Schoten_1_3}
\end{table}
}

\newpage

\end{document}